\documentclass[12pt, a4paper]{article}
\usepackage[dvips]{color, graphicx}
\usepackage{bm, psfrag, afterpage, color, amsmath, amssymb, latexsym, amsthm, url, lscape}
\usepackage{mathtools}
\usepackage{booktabs}
\usepackage{multirow}
\usepackage{adjustbox}

\renewcommand{\arraystretch}{0.8}

\renewcommand{\arraystretch}{1.0}

\newcommand{\iid}{\stackrel{\mathrm{i.i.d.}}{\sim}}
\newtheorem{theorem}{Theorem}
\newtheorem{proposition}{Proposition}
\newtheorem{lemma}{Lemma}

\def\E{{\rm E}}

\numberwithin{equation}{section}

\usepackage[margin=1in]{geometry}

\usepackage{xcolor}

\usepackage[natbibapa]{apacite}

\usepackage[hidelinks]{hyperref}

\usepackage{algorithm}
\usepackage{algorithmic}

\title{Variable Fusion and Selection via a Spike-and-Slab Approach with Nonlocal Priors}

\author{
  Junya Miyake\textsuperscript{*}, 
  Akira Okazaki\textsuperscript{$\dagger$}, 
  Shuichi Kawano\textsuperscript{$\ddagger$}\\[1ex]
 \small{\textsuperscript{*} Graduate School of Mathematics,  Kyushu University}\\
 \small{\textsuperscript{$\dagger$} The Institute of Statistical Mathematics}\\
 \small{\textsuperscript{$\ddagger$} Faculty of Mathematics,  Kyushu University}
 }

\date{\today}

\begin{document}
\maketitle

\begin{abstract}
Variable fusion in linear regression models is a statistical method that identifies covariates making similar contributions to the response variable and imposes the same coefficient values on them.  
Many methods for variable fusion also incorporate variable selection for practical reasons. 
In this study,  within the Bayesian model averaging (BMA) framework,  we propose a spike-and-slab-based Bayesian method that performs both variable fusion and selection. 
This is challenging in the BMA framework because one must construct a discrete model space that accommodates both selection and fusion and assign suitable priors over that space. 
In the proposed method,  we present a way to explore a model space for variable fusion and selection based on Gibbs sampling by devising a prior distribution for latent variables representing the model. 
Furthermore,  among non-local priors with superior model selection properties,  we construct a prior tailored for variable fusion and use it as the slab distribution. 
We examine the effectiveness of the proposed method through theoretical and empirical studies. 
\end{abstract}
\textbf{Keywords}: Variable fusion,  Variable selection,  BMA,  Spike-and-slab,   Non-local prior

\section{Introduction}
In recent years,  owing to advances in measurement technology and computational resources,  high-dimensional data have come to be routinely acquired and analyzed across a wide range of fields,  including genomics,  medicine,  and finance \citep{buhlmann2016handbook, fan2020statistical}. 
Linear regression models,  because of their high interpretability,  are still widely used even now as a fundamental framework,  and then they continue to attract attention as important models even in high-dimensional settings. 
In high-dimensional settings,  many covariates that do not contribute to the response are included,  and even among the covariates that do contribute,  redundancy often arises.
In particular,  when covariates have a natural order,  covariates that are sequentially close tend to contain similar information and may exhibit similar contributions to the response. 
For example,  in genomics,  measurements at nearby genomic positions often exhibit similar contributions to disease status or drug response \citep{gabriel2002structure, mcinnes2021gwaspgx}. 
In such situations,  variable fusion~---~which involves treating neighboring covariates as the same group and constraining the regression coefficients of explanatory variables within the same group to take the same value~---~becomes an effective option \citep{land1996variablefusion}.

As a frequentist method that encourages variable fusion,  the fused lasso \citep{tibshirani2005fusedlasso} is widely used. 
The fused lasso encourages fusion of variables by imposing an $L_1$ penalty on differences between adjacent regression coefficients,  and then it is likely to yield estimation results  in which the coefficients have the same value. 
This method has the advantage of estimating the fusion structure within a continuous optimization framework.

In a Bayesian framework as well,  there has been research on shrinkage priors that promote equality among neighboring regression coefficients.
For example,  if we place a Laplace-type prior distribution on the difference between regression coefficients corresponding to adjacent covariates,  the difference is strongly pulled toward the vicinity of zero under the posterior distribution. 
As a result,  we can encourage adjacent regression coefficients to become equal. 
This method is called the Bayesian fused lasso \citep{Kyung2010BayesFusedLasso}. 
Furthermore,  the Bayesian fused lasso model that places a horseshoe prior on successive differences of regression coefficients has been also proposed,  which can mitigate over-shrinkage of the differences and lead to more stable estimation \citep{kakikawa2023bayesian}.

As a different Bayesian approach,  research on placing priors on the model space itself and performing model selection using Bayes factors has also received considerable attention  \citep{kass1995bayes}. 
In Bayes-factor-based model selection,  one often not only selects a single best model but also performs Bayesian model averaging (BMA),  which averages estimates over the model space using posterior model probabilities as weights \citep{hoeting1999bma}. 
In problems such as variable selection and variable fusion,  where a large number of discrete candidates arise,  it is particularly useful to avoid a single model and instead to conduct inference by weighting multiple models with their posterior probabilities.

In this study,  we discuss the BMA method based on a spike-and-slab prior to achieve variable selection and variable fusion simultaneously. 
Specifically,  in addition to a \textit{spike distribution for variable selection} that represents regression coefficients being exactly zero,  we introduce a \textit{spike distribution for variable fusion} that represents the differences between adjacent regression coefficients being exactly zero. 
Based on a mixture prior consisting of these two spike distributions and a slab distribution that represents the case where coefficients are neither zero nor fused,  we simultaneously perform variable selection and variable fusion.

While BMA can reflect model uncertainty,  it has been pointed out that,  in finite samples,  non-negligible posterior probability can be assigned even to redundant models that include unnecessary variables \citep{lindley1957statistical}. 
To address this issue,  several prior distributions have been proposed in the variable selection literature with the aim of suppressing the posterior probabilities of redundant models \citep{lindley1957statistical,  bartlett1957comment,  berger1987testing}. 
Representative examples include the product moment (pMOM) prior,  the product inverse moment (piMOM) prior,  and the product exponential moment (peMOM) prior,  all of which share the common property that their probability density is zero in a neighborhood of zero for the regression coefficients \citep{rossell2013hdclassifiers}. 
\citet{johnson2010use} referred to priors with this property as nonlocal priors for variable selection,  and all other priors as local priors. 
Furthermore,  for some nonlocal priors such as the pMOM prior,  it has been theoretically and empirically shown that the posterior probability of redundant models converges faster than under local priors.

In this study,  we extend these insights to the problem of variable fusion. Specifically,  we propose a prior distribution such that the posterior probability decays faster for models in which regression coefficients expected to be fused have the different values. 
We adopt this prior as the slab distribution. 
We show that this distribution retains,  in both aspects of variable selection and variable fusion,  part of the desirable asymptotic properties possessed by the pMOM prior.

In BMA,  it is also important how a prior distribution is specified over the model space \citep{clyde2004model}. 
The BMA based on a spike-and-slab prior can naturally induce a prior distribution over the model space by placing a prior distribution on a latent indicator vector representing the model. 
Since this study simultaneously addresses variable selection and variable fusion,  we introduce a ternary latent indicator vector corresponding to two spike distributions and one slab distribution. 
We propose to assign a Markov-chain-based prior distribution to this ternary latent indicator vector. 
By setting the initial distribution and transition probabilities appropriately,  our Markov-chain-based prior distribution enables us to induce a uniform prior distribution over the model space. 
We also provide a sampling scheme from our proposed Bayesian model.  

The main contributions of this study are summarized in the following three points:
\begin{enumerate}
    \item We develop a Bayesian model averaging framework for simultaneous variable selection and variable fusion in high-dimensional linear regression. The proposed formulation uses two spike distributions to represent exact exclusion of covariates and exact fusion of adjacent regression coefficients within a unified spike-and-slab model.

    \item We propose a nonlocal slab distribution tailored to simultaneous variable selection and variable fusion. The proposed slab is designed to improve the convergence rate of posterior probabilities for redundant models by assigning low density not only near zero regression coefficients but also near equality between adjacent coefficients that should not be fused. 

    \item We construct a prior distribution over the model space through a ternary latent indicator vector and a Markov-chain-based prior. With suitable initial and transition probabilities,  this construction can induce a uniform prior over the admissible model space. We also provide a posterior sampling scheme for the proposed Bayesian model.
\end{enumerate}

This paper is organized as follows. In Section~\ref{section:preliminaries},  we introduce the model setup,  along with spike-and-slab priors and several nonlocal priors in terms of variable selection.
Section~\ref{sec:Proposed_Method} proposes a Bayesian model averaging approach that simultaneously achieves variable selection and variable fusion,  and presents one prior distribution on the model space that arises naturally under this framework. In addition,  we introduce a new nonlocal prior that is effective for variable fusion and use it as the slab distribution. Section~\ref{sec:Theoretical_Properties} clarifies the asymptotic properties of the proposed method from the viewpoints of model selection and regression coefficient estimation. Section~\ref{sec:Computational_Algorithm} describes a concrete sampling algorithm for implementing the proposed method. Section~\ref{sec:numerical} examines the finite-sample performance of several variable fusion methods through simulations and with spectral data.


\section{Preliminaries}
\label{section:preliminaries}
In this section,  we summarize the model setup,  notation,  and previous studies used throughout this paper. First,  Section~\ref{subsec:model} describes the linear regression model and the basic assumptions. Next,  as related work motivating the present study,  Section~\ref{subsec:spike-and-slab} reviews spike-and-slab priors for variable selection,  and Section~\ref{subsec:NLPs} briefly overviews non-local priors for variable selection.

\subsection{Model Setting and Notation}
\label{subsec:model}
Let $\bm{y} = (y_1,  \ldots,  y_n)^\top$ be an $n$-dimensional real-valued response vector,  and let $\bm{X} = (\bm{x}_1,  \ldots,  \bm{x}_p)$ be an $n \times p$ real-valued design matrix. 
Here $\bm x_j =(x_{1j},  \ldots,  x_{nj})^\top \ (j=1, \ldots, p)$ represents an $n$-dimensional real-valued covariate. 
Without loss of generality,  we can assume that the response vector and covariates satisfy
\begin{align}
    \sum_{i=1}^{n} y_i = 0, 
    \qquad
    \sum_{i=1}^{n} x_{ij} = 0, 
    \qquad
    \sum_{i=1}^{n} x_{ij}^2 = n
    \quad (j = 1,  \ldots,  p).
\end{align}
We assume that the response and covariates satisfy the following linear relationship:
\begin{align}
    \bm{y} = \bm{X}\bm{\theta} + \bm{\varepsilon}.
    \label{eq:LinearRegression}
\end{align}
Here,  $\bm{\theta} = (\theta_1, \ldots,  \theta_p)^\top$ is a $p$-dimensional regression coefficient vector,  and
$\bm{\varepsilon} = (\varepsilon_1,  \ldots,  \varepsilon_n)^\top$ is an $n$-dimensional error vector that follows
$\mathrm{N}(\bm{0}_n,  \sigma^2 \bm{I}_n)$. Here,  $\sigma^2$ is the error variance.
Also,  $\bm{0}_n$ and $\bm{I}_n$ denote the $n$-dimensional zero vector and the $n$-dimensional identity matrix,  respectively. Under these assumptions,  the likelihood function is given by $\mathrm{N}_{n}(\bm{y} \mid \bm{X}\bm{\theta},  \sigma^2 \bm{I}_n)$.

\subsection{BMA and Spike-and-Slab Priors}
\label{subsec:spike-and-slab}
Bayesian model averaging (BMA) is a method that combines multiple models according to their posterior probabilities. For candidate models $M_k \, (k=1, \ldots, K)$,  the posterior mean under BMA is given by
\begin{align}
\E(\bm{\theta} \mid \bm{y})
=
\sum_{k=1}^{K}
\E(\bm{\theta} \mid M_k, \bm{y})P(M_k\mid \bm{y}).
\end{align}
That is,  BMA performs inference by averaging model-specific posterior means with weights given by the posterior probabilities of the models. In practice,  however,  it is often computationally infeasible to evaluate the posterior probabilities for all candidate models,  because the model space is too large. Therefore,  stochastic search methods such as simplified shotgun stochastic search with screening are used to efficiently explore models with relatively large posterior probabilities \citep{Shin2018}.

Some spike-and-slab formulations for variable selection \citep{george1993variable} can be interpreted as BMA. In this framework,  each model is represented by a latent indicator vector,  and posterior computation is commonly carried out within the framework of Gibbs sampling. For example,  in the context of variable selection,  the latent indicator vector $\bm{\delta} \in \{-1, 0\}^p$ is defined as follows:
\begin{align}
    \delta_j =
    \begin{cases}
        0 & \text{if } \theta_j = 0, \\
        -1 & \text{otherwise}
    \end{cases}
    \quad (j = 1, \ldots, p).
\end{align}

In spike-and-slab priors,  different prior distributions are assigned to the regression coefficient $\theta_j$ depending on the value of the latent variable $\delta_j \in \{0, -1\}$. When $\delta_j = 0$,  a prior distribution that strongly expresses the belief that the regression coefficient $\theta_j$ is equal to zero or lies in its neighborhood is assigned; this is called the spike distribution. Specifically,  the examples include a Dirac distribution satisfying $\theta_j = 0$ a.s.~and a normal distribution with small variance. On the other hand,  when $\delta_j = -1$,  a continuous prior distribution that assigns probability over a wide range of values of the regression coefficient $\theta_j$ is assigned; this is called the slab distribution. Various suitable distributions have been proposed for the slab distribution,  including the normal,  Laplace,  and $t$ distributions   \citep{george1993variable, rockova2018spike, ishwaran2005spike}.
In particular,  \citet{Shi2019MassNonlocal} proposed the use of nonlocal priors \citep{johnson2010use} as the slab distribution. The specific forms and properties of {nonlocal priors} in variable selection are summarized in the next subsection.

\subsection{Nonlocal Priors for Variable Selection}
\label{subsec:NLPs}
A nonlocal prior for variable selection refers to a distribution such that the probability density function converges to zero when $\theta_j \to 0$ for any $j=1, \ldots, p$. 
A distribution that does not satisfy this condition is called a local prior.
\citet{rossell2013hdclassifiers} defined the product moment prior (pMOM),  product inverse moment prior (piMOM),  and product exponential moment prior (peMOM) as representative nonlocal priors effective for variable selection. 
These nonlocal priors can be written,  respectively,  as follows:
\begin{align}
\pi_{\mathrm{pMOM}}(\bm{\theta}\mid\sigma^2)
&\propto
\left\{\prod_{i=1}^{p}\theta_{i}^{2}\right\}
\phi\!\left(\bm{\theta}; \bm{0},  \sigma^2 \bm{I}_{p}\right), \\
\pi_{\mathrm{piMOM}}(\bm{\theta}\mid\sigma^2)
&\propto
\prod_{i=1}^{p} |\theta_{i}|^{-2}
\exp\!\left(-\frac{\sigma^2}{\theta_{i}^2}\right), \\
\pi_{\mathrm{peMOM}}(\bm{\theta}\mid\sigma^2)
&\propto
\left\{\prod_{i=1}^{p}\exp\!\left(-\frac{\sigma^2}{\theta_{i}^2}\right)\right\}
\phi\!\left(\bm{\theta}; \bm{0},  \sigma^2 \bm{I}_{p}\right), 
\end{align}
where $\phi(\cdot;\bm{0}, \sigma^2\bm{I}_{p})$ denotes the probability density function of the multivariate normal distribution with mean zeros and covariance $\sigma^2 \bm I_p$. 
Figure~\ref{fig:nonlocal_densities} compares the shapes of the pMOM,  piMOM,  and peMOM priors with the standard normal density under $\sigma^2=1$. 
For each of the nonlocal priors,  unlike the normal distribution,  the probability density approaches zero as the regression coefficient approaches zero.

\begin{figure}[t]
  \centering
  \includegraphics[width=1.00\textwidth]{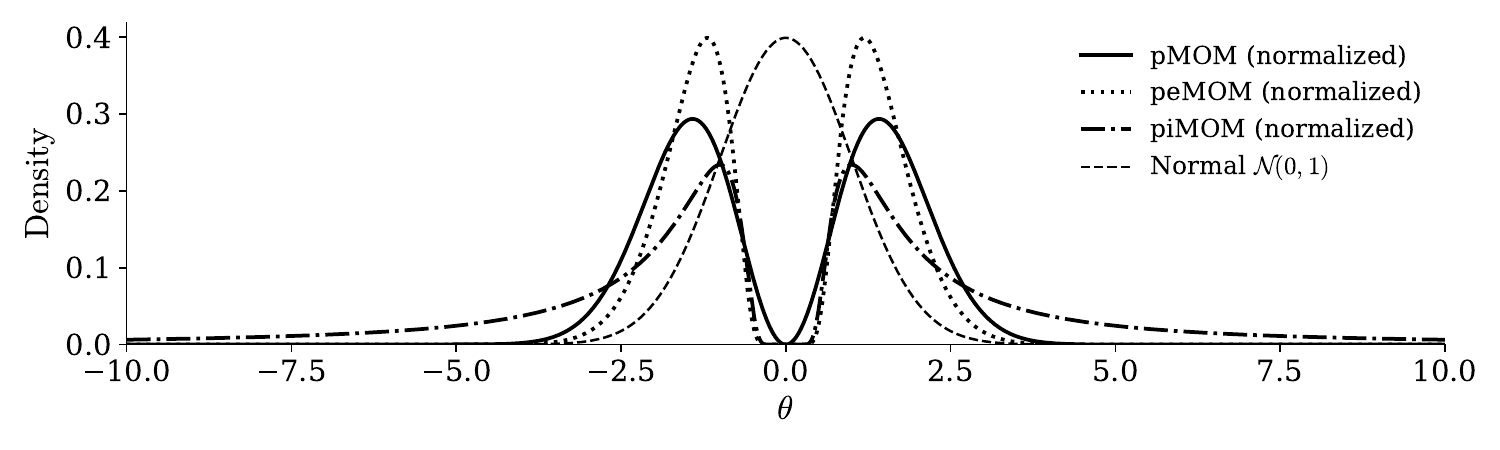} \vspace{-10mm}
  \caption{A comparison of the nonlocal priors (pMOM,  piMOM,  peMOM) with the standard normal density under $\sigma^2=1$.}
  \label{fig:nonlocal_densities}
\end{figure}

It is known that,  regardless of whether local priors or nonlocal priors are used,  for a model $M$ that excludes some or all covariates contributing to the response,  the posterior probability $P(M\mid \bm{y})$ decays at an exponential order under some regularity conditions when the prior probability $P(M)$ and $p$ are fixed \citep{rossell2017nonlocal,  dawid1999trouble}. 
On the other hand,  for a redundant model $M$ that includes all covariates contributing to the response and $r \ (\geq 1)$ additional covariates not contributing,  nonlocal priors have desirable properties.
Under some regularity conditions,  when $P(M)$ and $p$ are fixed,  the posterior probability based on the pMOM prior satisfies $P(M\mid \bm{y})=O_p(n^{-\frac{3}{2}r})$ while that based on a local prior does $P(M\mid \bm{y})=O_p(n^{-\frac{1}{2}r})$ \citep{rossell2017nonlocal}. 
In small sample sizes,  this difference leads to a substantial impact on model estimation accuracy. 


Motivated by the desirable properties exhibited by the pMOM prior,  one of the main objectives of this study is to extend the pMOM prior in a way that holds these properties,  and to construct a framework that can naturally induce variable fusion.

\section{Proposed Method}
\label{sec:Proposed_Method}
In this section,  we propose a Bayesian method for variable selection and variable fusion that adopts a nonlocal prior distribution as the slab distribution. 
Note that the variable fusion considered in this section is allowed only between adjacent variables in the model that includes all covariates. 
For example,  we consider the situation that the $j$th variable is excluded by variable selection. 
Then,  the $(j-1)$th and $(j+1)$th variables are adjacent,  but they are not candidates for fusion because they are not adjacent in the original order. 



\subsection{Latent Variable Vector and its Prior Distribution}

To construct a Bayesian model using the spike-and-slab approach,  we first introduce the following $p$-dimensional indicator vector $\bm{\delta}=(\delta_1, \ldots, \delta_p)^\top$,  whose components take values in $\{-1, 0, 1\}$:
\begin{align}
\delta_1 &=
\begin{cases}
0 & \text{if } \theta_1 = 0, \\
-1 & \text{otherwise}, 
\end{cases}
\quad
\delta_j =
\begin{cases}
1 & \text{if } \theta_j = \theta_{j-1} \neq 0, \\
0 & \text{if } \theta_j = 0, \\
-1 & \text{otherwise}, 
\end{cases}
\quad (j=2, \ldots, p).
\label{eq:def_of_delta}
\end{align}
In other words,  $\delta_j = 1$ for $j=2, \ldots, p$ indicates that the $(j-1)$-th and $j$-th regression coefficients are fused,  while $\delta_j = 0$ for $j=1, \ldots, p$ indicates that the $j$-th regression coefficient is exactly zero.

In spike-and-slab methods,  a prior distribution on the model space is induced by placing a prior distribution on the latent variable vector. For example,  when the objective is only variable selection,  it is common to impose independent and identically distributed Bernoulli distributions on the components of the latent variables. This corresponds to setting the prior probability that each variable is selected to $1/2$. For the three-valued latent variable vector used in the proposed method,  which takes variable fusion into account,  one may consider assuming an independent categorical prior distribution on each component as a natural extension of the Bernoulli distribution:
\begin{align}
    \delta_j \iid \mathrm{Categorical}(a_{-1},  a_0,  a_1),  \quad j = 1, \ldots,  p.
    \label{eq:categorical}
\end{align}
If $a_{-1}=a_0=a_1=1/3$,  this means that a prior probability of $1/3$ is assigned to each of the following cases,  respectively: the case where the variable is fused with the previous regression coefficient,  the case where it is set to zero,  and the case where it is neither of these.

However,  this seemingly natural prior distribution has the drawback that it cannot assign equal prior probabilities to all models. 
For example,  consider the two-dimensional case: $y=\theta_1x_1+\theta_2x_2+\varepsilon$. 
Then,  the full model $y=\theta_1x_1+\theta_2x_2+\varepsilon$ is represented by $\delta=(-1, -1)$. 
Meanwhile,  the null model containing no covariates is represented by $\delta=(0, 0)$ and $\delta=(0, 1)$,  while the model with one coefficient is represented by $\delta=(-1, 0)$,  $\delta=(-1, 1)$,  and $\delta=(0, -1)$. 
If we set $a_{-1}=a_0=a_1=1/3$ for the prior \eqref{eq:categorical},  the probability of the full model is 1/6,  that of the null model is 2/6,  that of the remaining model is each 1/6. 
This means that the distribution \eqref{eq:categorical} with $a_{-1}=a_0=a_1=1/3$ does not provide equal probabilities on all models. 
Furthermore,  it can be shown that no choice of $a_{-1}, a_0, a_1$ allows the distribution \eqref{eq:categorical} to assign equal probabilities to all models.
In general,  the following proposition holds. 
\begin{proposition}
\label{prop:unavailable}
    Consider the indicator vector $\bm{\delta}$ in~(\ref{eq:def_of_delta}) for $p\ge 2$.
If $\delta_1, \ldots, \delta_p$ are mutually independent,  no prior distribution on these parameters can assign equal prior probabilities to all models.
\end{proposition}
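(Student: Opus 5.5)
Since the $\delta_j$ are independent, I would write $P(\delta_j=k)=a^{(j)}_k$ for $k\in\{-1,0,1\}$ and $j=1,\ldots,p$. Here $\delta_1\in\{-1,0\}$, and the coordinates may have different distributions. A model is the set of coefficient vectors it allows, i.e.\ a pattern of zeros and fused blocks. The key observation is which indicator vectors realize the same model. The value $\delta_j=1$ means $\theta_j=\theta_{j-1}\neq 0$, so it is admissible only when $\theta_{j-1}\neq 0$, i.e.\ $\delta_{j-1}\in\{-1,1\}$. If $\delta_{j-1}=0$, then setting $\delta_j=1$ effectively forces $\theta_j=0$, which is the same model as $\delta_j=0$. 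This is exactly the redundancy in the paper's $p=2$ example, where $(0,0)$ and $(0,1)$ both give the null model. So each model corresponds to one ``canonical'' vector (no $1$ after a $0$ or at position $1$) together with all vectors obtained by replacing zeros by $1$ along runs following a zero. The prior mass of a model is the sum of the product probabilities over this equivalence class.

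Assume for contradiction that all models have equal mass $c>0$. I would derive a contradiction using only the first two coordinates; the remaining coordinates will enter only through a common factor. Fix $\delta_3,\ldots,\delta_p$ to $-1$, so that coordinates $3,\ldots,p$ are unfused, nonzero and free. Then the equivalence class of a model is determined by its first two coordinates alone, because a zero at position $2$ would only propagate to position $3$ if $\delta_3\in\{0,1\}$. The four models and their masses are:
\begin{itemize}
\item full, with $(\delta_1,\delta_2)=(-1,-1)$: mass $a^{(1)}_{-1}a^{(2)}_{-1}C$;
\item $\theta_1=\theta_2\neq0$, with $(-1,1)$: mass $a^{(1)}_{-1}a^{(2)}_{1}C$;
\item $\theta_2=0$ only, with $(-1,0)$: mass $a^{(1)}_{-1}a^{(2)}_{0}C$;
\item $\theta_1=0,\theta_2\neq0$, with $(0,-1)$: mass $a^{(1)}_{0}a^{(2)}_{-1}C$;
\item $\theta_1=\theta_2=0$, with $(0,0)$ or $(0,1)$: mass $a^{(1)}_{0}\bigl(a^{(2)}_0+a^{(2)}_1\bigr)C$.
\end{itemize}
Here $C=\prod_{j\ge3}a^{(j)}_{-1}$. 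This constant must be positive, since otherwise those models would receive mass zero.

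Equating masses gives the contradiction. The first three models force $a^{(2)}_{-1}=a^{(2)}_1=a^{(2)}_0$, using $a^{(1)}_{-1}>0$. The full model and the model with $\theta_1=0,\theta_2\neq0$ then force $a^{(1)}_0=a^{(1)}_{-1}>0$. The last model then has mass $a^{(1)}_0\cdot 2a^{(2)}_0C$, which is twice that of the model with $\theta_1=0,\theta_2\neq0$. This is a contradiction. If $\delta_1$ were allowed to take the value $1$, I would treat $\delta_1=1$ as equivalent to $\delta_1=0$, i.e.\ a redundant encoding of $\theta_1=0$. The same comparison still works, because the $(0,\cdot)$ and $(1,\cdot)$ masses merge and the null-type model still gains the extra $a^{(2)}_1$ term.

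The main obstacle is making the notion of ``model'' and the induced equivalence classes precise, in particular the convention for $\delta_j=1$ following $\delta_{j-1}=0$. It must also be checked that fixing $\delta_j=-1$ for $j\ge3$ does not merge the classes of different two-coordinate patterns. I would state this convention explicitly as the paper's interpretation of the $p=2$ example, and then check that $-1$ at position $3$ blocks any propagation of the zero.
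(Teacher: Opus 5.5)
Your proof is correct and follows the paper's argument. It uses the same five models, the same equalities forcing $\delta_2$ to be uniform and $\Pr(\delta_1=0)=\Pr(\delta_1=-1)$, and the same factor-of-two contradiction coming from the doubly represented null-type model. Your explicit reduction fixes $\delta_3=\cdots=\delta_p=-1$, so that all five model masses share the common positive factor $\prod_{j\ge 3}a^{(j)}_{-1}$ and no other tails can realize these models. This supplies the justification that the paper omits when it simply says ``it suffices to consider the two-dimensional case.''
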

\noindent The proof is given in the Supplementary Material \citep{miyake2026supplement}. 
This proposition states that we should not assume an i.i.d. prior distribution for $\delta_1, \ldots, \delta_p$ if we assign equal prior probabilities to all models. 

To assign equal prior probabilities to all assumed models,  we consider a following Markov chain type prior distribution on $\bm{\delta}$:
\begin{align}
\delta_1 &\sim \mathrm{Categorical}(\pi_{-1}, \pi_{0}), \\
\delta_j\mid \delta_{j-1} &\sim \mathrm{Categorical}\bigl(K_j(\delta_{j-1}, -1), \, K_j(\delta_{j-1}, 0), \, K_j(\delta_{j-1}, 1)\bigr), 
\qquad j=2, \ldots, p, 
\end{align}
where $K_j(a, b) = \Pr(\delta_j=b\mid \delta_{j-1}=a)$ represents the transition probability from state $\delta_{j-1}=a$ to state $\delta_{j}=b$. 
Since the indicator $\delta_j$ for $j=2, \ldots, p$ takes values in $\{-1, 0,1\}$,  we can obtain the $3\times 3$ transition matrix in the form
\begin{align}
K_j
& =
\left(
\begin{array}{ccc}
K_j(-1, -1) & K_j(-1, 0) & K_j(-1, 1)\\
K_j(0, -1)  & K_j(0, 0)  & K_j(0, 1)\\
K_j(1, -1)  & K_j(1, 0)  & K_j(1, 1)
\end{array}
\right),  \quad j=2, \ldots, p.
\label{eq:markov_prior_Kdef}
\end{align}
In this study,  the transition matrix is set as
\begin{align}
K_j
&=
\left(
\begin{array}{ccc}
\omega_{j, -1} & \omega_{j, 0} & \omega_{j, 1}\\
\kappa_{j, -1} & \kappa_{j, 0} & 0\\
\omega_{j, -1} & \omega_{j, 0} & \omega_{j, 1}
\end{array}
\right).
\label{eq:markov_prior_main_matrix}
\end{align}
Since we set the $(2, 3)$ component of the transition matrix in Equation \eqref{eq:markov_prior_main_matrix} to zero,  the transition $K_j(0, 1) = \Pr (\delta_j = 1 \mid \delta_{j-1} = 0) \ (j=2, \ldots, p)$ is not allowed. 
By imposing this restriction on the transitions,  a one-to-one correspondence between $\bm{\delta}$ that can transition and the model space is obtained.

Furthermore,  the initial probabilities $(\pi_{-1}, \pi_0)$ and the transition probabilities \\$\{\omega_{j, -1}, \omega_{j, 0}, \omega_{j, 1}, \kappa_{j, -1}, \kappa_{j, 0}\}_{j=2}^p$ can be chosen so that the same prior probability is assigned to all admissible models.
The probabilities are provided according to the following theorem,  and the proof is given in the Supplementary Material \citep{miyake2026supplement}.
\begin{proposition}[Initial probabilities and transition matrices giving equal prior probabilities]
\label{prop:uniform_objective_prior}
Consider the chain prior distribution in \eqref{eq:markov_prior_main_matrix}. For $j=2, \ldots, p$,  let $r_j = p-j+1$,  and let $(F_n)_{n=1}^{\infty}$ be the Fibonacci sequence. 
Set the transition probabilities as follows:
\begin{align}
\omega_{j, -1}=\frac{F_{2r_j}}{F_{2r_j+2}}, \quad
\omega_{j, 0}=\frac{F_{2r_j-1}}{F_{2r_j+2}}, \quad
\omega_{j, 1}=\frac{F_{2r_j}}{F_{2r_j+2}}, \quad
\kappa_{j, -1}=\frac{F_{2r_j}}{F_{2r_j+1}}, \quad
\kappa_{j, 0}=\frac{F_{2r_j-1}}{F_{2r_j+1}}.
\label{eq:omega_kappa_closed_app}
\end{align}
Also,  set the initial probabilities as follows:
\begin{align}
\pi_{-1}=\frac{F_{2p}}{F_{2p+1}}, \quad
\pi_{0}=\frac{F_{2p-1}}{F_{2p+1}}.
\label{eq:pi_closed_app}
\end{align}
Then,  the induced prior distribution assigns equal prior probability to all models that account for variable selection and variable fusion.
\end{proposition}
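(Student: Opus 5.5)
The plan is to view the admissible indicator vectors as paths in a restricted ternary chain and to check that, with the stated parameters, the chain is exactly the uniform distribution on these paths, written in sequential (chain-rule) form. I will take as given the one-to-one correspondence, stated just before the proposition, between admissible $\bm\delta$ (those avoiding the transition $0\to1$, with $\delta_1\in\{-1,0\}$) and the models. It then suffices to show two things. First, every admissible $\bm\delta$ receives the same probability. Second, inadmissible vectors receive probability zero, which is automatic from $K_j(0,1)=0$ and from $\delta_1\neq 1$.

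\textbf{Step 1: count completions.} For $r\ge 0$, let $A(r)$ be the number of admissible continuations $(\delta_{j},\dots,\delta_p)$ of length $r$ given that the preceding state is nonzero ($-1$ or $1$). Let $B(r)$ be the same count given that the preceding state is $0$. Since the transition $0\to1$ is the only forbidden one, conditioning on the first element of the continuation gives
\begin{align*}
A(r)=2A(r-1)+B(r-1),\qquad B(r)=A(r-1)+B(r-1),\qquad A(0)=B(0)=1 .
\end{align*}
By induction on $r$ I will show $A(r)=F_{2r+2}$ and $B(r)=F_{2r+1}$. The base case is $A(0)=F_2=1$ and $B(0)=F_1=1$. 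For the inductive step, $B(r)=F_{2r}+F_{2r-1}=F_{2r+1}$, and
\begin{align*}
A(r)=2F_{2r}+F_{2r-1}=F_{2r}+F_{2r+1}=F_{2r+2}.
\end{align*}
The total number of models is then $A(p-1)+B(p-1)=F_{2p}+F_{2p-1}=F_{2p+1}$.

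\textbf{Step 2: identify the uniform law's conditionals.} Under the uniform law on admissible paths, $\Pr(\delta_j=b\mid\delta_1,\dots,\delta_{j-1})$ equals the number of admissible completions after choosing $\delta_j=b$, divided by the number of completions from the current state. This ratio depends only on $\delta_{j-1}$, so the uniform law is Markov. After position $j$ there remain $p-j=r_j-1$ positions. If $\delta_{j-1}\neq 0$, the conditionals are:
\begin{itemize}
\item for $b=\pm1$: $A(r_j-1)/A(r_j)=F_{2r_j}/F_{2r_j+2}$;
\item for $b=0$: $B(r_j-1)/A(r_j)=F_{2r_j-1}/F_{2r_j+2}$.
\end{itemize}
These are $\omega_{j,\pm1}$ and $\omega_{j,0}$. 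If $\delta_{j-1}=0$, the denominator is $B(r_j)=F_{2r_j+1}$, which gives $\kappa_{j,-1}$ and $\kappa_{j,0}$, and the probability of $1$ is zero. At the first position, the conditionals are $A(p-1)/F_{2p+1}=\pi_{-1}$ and $B(p-1)/F_{2p+1}=\pi_0$. Each row sums to one by the Fibonacci recursion, so these are valid probabilities.

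\textbf{Step 3: conclude.} The product of the conditionals along any admissible path telescopes, since each denominator cancels the next numerator. The product therefore equals $1/F_{2p+1}$ for every model, which proves the claim. The main obstacle is Step 1, which requires the correct bookkeeping of the restricted transitions and the right Fibonacci indexing. Once the counts are identified, the remaining steps are a direct verification.
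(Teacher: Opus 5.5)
Your proof is correct and follows essentially the same route as the paper's: you count admissible completions with a two-state recursion (using the $\pm1$ symmetry), identify the counts with $F_{2r+2}$ and $F_{2r+1}$, write each transition and initial probability as a ratio of these counts, and telescope the product along any admissible path. Your bookkeeping is in fact the consistent one, with base case $A(0)=B(0)=1$ and normalizer $A(p-1)+B(p-1)=F_{2p+1}$; the paper's appendix instead starts from $N_1(a)=1$ and normalizes $\delta_1$ by a sum over $c\in\{-1,0,1\}$, which gives $F_{2p+2}$ and matches neither the stated $\pi_{-1},\pi_0$ nor the model count $K=F_{2p+1}$.
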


\subsection{Spike Prior and Fusion-Inducing Slab Prior}

For the spike prior,  we use Dirac measures:
\begin{align}
  \theta_j \mid \delta_j = 0 \quad (j : \delta_j = 0), \\
  \theta_j - \theta_{j-1} \mid \delta_j = 0 \quad (j : \delta_j = 1).
\end{align}
These distributions represent spike distributions for variable selection and variable fusion,  respectively.

For the slab distribution,  we use the following distribution:
\begin{align}
p(\bm{\theta}_{\delta}\mid\bm{\delta}, \sigma^2) = C_p 
\left\{\prod_{j=1}^{p_1}\theta_{\delta_j}^2\right\}
\left\{\prod_{j\in\Lambda}(\theta_{\delta_j}-\theta_{\delta_{j-1}})^2\right\}
(2\pi)^{-p_1/2}\sigma^{-3p_1-2\lvert \Lambda\rvert}
\exp\left(-\frac{1}{2\sigma^2}\bm{\theta}_\delta^\top\bm{\theta}_\delta\right).
\label{eq:fusion-pMOM-prior}
\end{align}
Here, $\bm{\theta}_{\bm{\delta}}$ denotes the vector obtained by extracting only those components of $\bm{\theta}$ for which $\delta_j=-1$, $\Lambda$ is an index set such that $\theta_{\delta_j}$ and $\theta_{\delta_{j-1}}$ can be fused, and $C_p$ is a normalizing constant. When $\Lambda=\{1,\ldots,p_1\}$, $C_p$ is given by
\begin{align}
C_p = \frac{2\sqrt{15}}{(3+\sqrt{15})^{p}-(3-\sqrt{15})^{p}}.
\label{eq:normalizing_constant}
\end{align}
The derivation of this constant is given in the Supplementary Material \citep{miyake2026supplement}. 
For a general $\Lambda$, the corresponding normalizing constant can be derived in an analogous manner.
We call this prior distribution \eqref{eq:fusion-pMOM-prior} \textit{fusion-pMOM prior}.
Similar to the pMOM prior,  the fusion-pMOM prior converges to zero as $\theta_j \to 0$ for any $j=1, \ldots, p$.
Furthermore,  as a different property,  it also converges to zero as $\theta_j - \theta_{j-1} \to 0$ for any $j=2, \ldots, p$.
Figure~\ref{fig:fusion-pmom-prior} illustrates these two properties for the two-dimensional case.

\begin{figure}[t]
  \centering
  \includegraphics[width=0.52\linewidth]{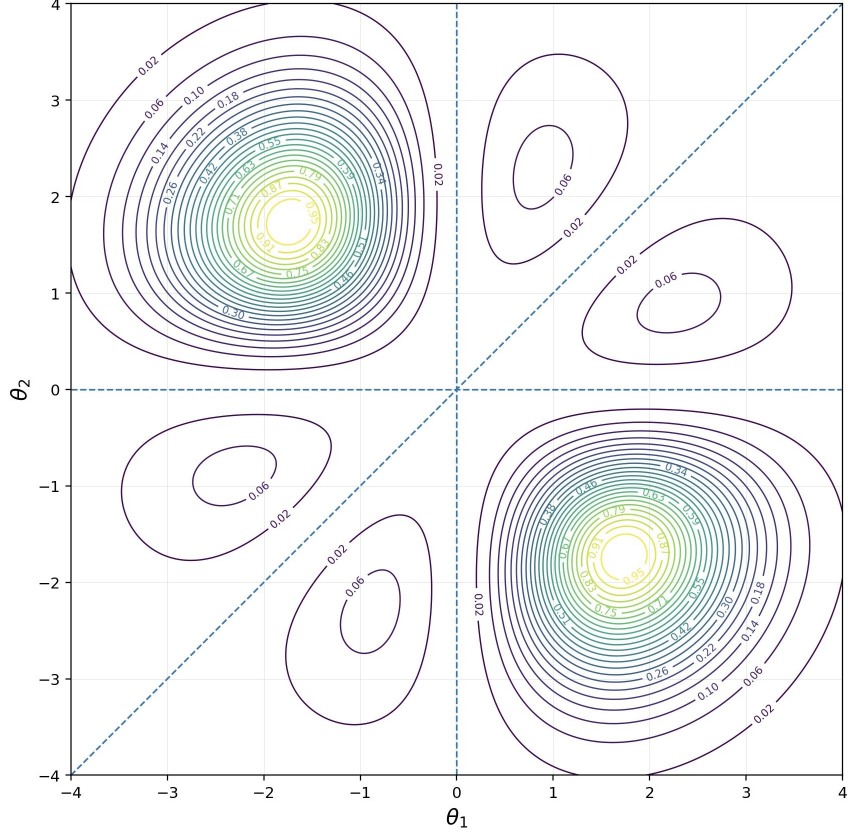}
  \caption{Contour plot of the fusion-pMOM prior for two regression coefficients. The prior density converges to zero as $\theta_1 \to 0$, $\theta_2 \to 0$, and $\theta_2-\theta_1 \to 0$.}
  \label{fig:fusion-pmom-prior}
\end{figure}

In Section~\ref{sec:Theoretical_Properties},  we show that using the fusion-pMOM prior as the slab distribution yields desirable properties for both model selection and regression coefficient estimation in terms of convergence rates.

The parameters $\bm{\omega}$ and $\kappa$ are specified as follows:
\begin{align}
    \bm{\omega}_j^{(k+1)}&\sim\mathrm{Dirichlet}\bigl(Aa_{j, -1}, Aa_{j, 0}, Aa_{j, 1}\bigr), \\
    \kappa_{j, -1}^{(k+1)}&\sim\mathrm{Beta}\bigl(Bc_{j, -1}, Bd_{j, 0}\bigr).
\end{align}
Here,  the hyperparameters are defined in the same manner as in Proposition~\ref{prop:uniform_objective_prior}:
\begin{align}
a_{j, -1}=\frac{F_{2r_j}}{F_{2r_j+2}}, \quad
a_{j, 0}=\frac{F_{2r_j-1}}{F_{2r_j+2}}, \quad
a_{j, 1}=\frac{F_{2r_j}}{F_{2r_j+2}}, \quad
c_{j, -1}=\frac{F_{2r_j}}{F_{2r_j+1}}, \quad
d_{j, 0}=\frac{F_{2r_j-1}}{F_{2r_j+1}}.
\end{align}
Note that the equiprobability of models is preserved even under this hierarchical specification.

\label{subsec:Indicator_vector}

\section{Theoretical Properties}
\label{sec:Theoretical_Properties}

In this section,  we present the theoretical properties of the proposed method. 
We show that the fusion-pMOM prior \eqref{eq:fusion-pMOM-prior},  used as the slab distribution,  suppresses the posterior probability assigned to redundant models more strongly than local priors do. As a result,  the posterior mean attains a favorable convergence rate.
Note that the proofs of propositions and theorems in this section are given in the Supplementary Material  \citep{miyake2026supplement}.

When the number of covariates is $p$,  let $K$ denote the total number of models obtained by taking both variable selection and variable fusion into account. Here,  $K$ coincides with the $(2p+1)$-st term of the Fibonacci sequence.
In what follows,  we index the models in the model space by $k=1, \ldots, K$ and denote them by $\bm{\delta}_k$. 
Their dimensions,  corresponding sampling densities for $\bm{y}$,  and marginal likelihoods are denoted by $p_k$,  $f_k(\bm{y}\mid \bm{\theta}_k, \phi_k)$,  and $m_k(\bm{y})$,  respectively. These models are represented by the latent variable vector defined in Equation (\ref{eq:def_of_delta}). 
Among these models,  let $\bm{\delta}_t$ denote the true model that generates the data $\bm{y}$. If $\bm{\delta}_k$ is either identical to the true model $\bm{\delta}_t$ or a supermodel containing it,  we write $\bm{\delta}_t \subset \bm{\delta}_k$,  and otherwise we write $\bm{\delta}_t \not\subset \bm{\delta}_k$. 
In addition,  we assume that $p$ is fixed. 

We impose the Walker condition \citep{walker1969asymptotic} and the following two conditions whenever needed:
\begin{itemize}
    \item[A.] Suppose that there exists a unique true model $\bm{\delta}_t$ with the smallest dimension $p_t$. 
Assume that,  for any $k$ satisfying $\bm{\delta}_t \not\subset \bm{\delta}_k$, 
\begin{align*}
\mathrm{KL}\bigl(f_t(y_n\mid \bm{\theta}_t^*, \phi_t^*), \, f_k(y_n\mid \bm{\theta}_k, \phi_k)\bigr) > 0
\end{align*}
holds for all $(\bm{\theta}_k, \phi_k)$.

    \item[B.] In Condition A,  suppose that $\phi^*$ and $\theta_i^*$ are fixed. 
\end{itemize}

We first investigate the effect of introducing a non-local prior on model selection through the convergence rate of the Bayes factor. 
The following proposition shows that,  for redundant models containing the true model,  the Bayes factor decays faster than in the case of local priors.

\begin{proposition}[Convergence rate of the Bayes factor]\label{thm:BF_convergence}
Assume that the model $\bm{\delta}_k$ satisfies the Walker condition and Condition A,  and define
\begin{align}
\mathrm{BF}_{kt}:=\frac{m_k(\bm{y})}{m_t(\bm{y})}
\end{align}
as the Bayes factor relative to $\bm{\delta}_t$.
Then,  the following holds:
\begin{align}
\bm{\delta}_t \subset \bm{\delta}_k
&\Longrightarrow
\mathrm{BF}_{kt}
=
O_p\!\left(n^{-\frac{3}{2}(p_k-p_t)}\right), 
\label{eq:BF_overfit}
\\
\bm{\delta}_t \not\subset \bm{\delta}_k
&\Longrightarrow
\mathrm{BF}_{kt}
=
O_p(e^{-Cn})
\quad \text{for some } C>0 .
\label{eq:BF_underfit}
\end{align}
\end{proposition}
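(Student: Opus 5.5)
The plan is to mirror the argument of \citet{johnson2012bayesian}/\citet{rossell2017nonlocal} for the pMOM prior: first reduce the nonlocal marginal likelihood to a local one, then control the nonlocal penalty separately. Write the fusion-pMOM density \eqref{eq:fusion-pMOM-prior} for model $\bm{\delta}_k$ as
\[
\pi_k(\bm{\theta}_k\mid\sigma^2)=d_k(\bm{\theta}_k,\sigma^2)\,\pi_k^{L}(\bm{\theta}_k\mid\sigma^2),
\]
where $\pi_k^{L}$ is the local normal prior $\phi(\bm{\theta}_k;\bm{0},\sigma^2\bm{I}_{p_k})$ and $d_k$ is the (normalized) product of the squared coefficients and squared adjacent differences. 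Then $m_k(\bm{y})=m_k^{L}(\bm{y})\,\E^{L}_k\{d_k(\bm{\theta}_k,\sigma^2)\mid\bm{y}\}$, where $m_k^{L}$ is the marginal likelihood under the local prior and the expectation is taken under the corresponding local posterior. Hence
\[
\mathrm{BF}_{kt}=\frac{m_k^{L}(\bm{y})}{m_t^{L}(\bm{y})}\cdot\frac{\E^{L}_k(d_k\mid\bm{y})}{\E^{L}_t(d_t\mid\bm{y})}.
\]
For the local factor, a Laplace approximation (valid under the Walker condition) gives $m_k^L/m_t^L=O_p(n^{-(p_k-p_t)/2})$ when $\bm{\delta}_t\subset\bm{\delta}_k$, and $O_p(e^{-Cn})$ otherwise by Condition A, since the KL gap makes the maximized log-likelihood ratio of order $-n$. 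The nonlocal penalty is polynomial in $\bm\theta$, so it cannot undo an exponential rate.

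Next, handle the true model's penalty. The denominator $\E^L_t(d_t\mid\bm{y})$ converges in probability to $d_t(\bm{\theta}_t^*,\sigma^{*2})>0$. This holds because in the true (minimal) model every included coefficient is nonzero and every non-fused adjacent pair has distinct values, so the posterior concentrates at a point where $d_t$ does not vanish. For an underfitted $\bm{\delta}_k$, the numerator $\E^L_k(d_k\mid\bm{y})$ is $O_p(1)$, because the local posterior concentrates at the KL-minimizer and $d_k$ has polynomial growth with bounded posterior moments. This yields \eqref{eq:BF_underfit}.

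The main work is the overfitted case. A supermodel $\bm{\delta}_k\supset\bm{\delta}_t$ with $p_k-p_t$ extra free parameters arises either from an extra coefficient whose true value is $0$ or from splitting a fused block, which adds a free difference whose true value is $0$. I would reparametrize $\bm{\theta}_k$ linearly so that these $p_k-p_t$ extra directions are exactly coordinates $\eta_1,\dots,\eta_{p_k-p_t}$ with true value zero (either $\theta_j$ itself or $\theta_j-\theta_{j-1}$). Each such coordinate appears squared in $d_k$, because $d_k$ contains both $\theta_j^2$ and $(\theta_j-\theta_{j-1})^2$ for $j\in\Lambda$. Under the local posterior, $\eta_i=O_p(n^{-1/2})$, so $\eta_i^2=O_p(n^{-1})$, while the remaining factors of $d_k$ converge to positive constants.

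The obstacle I expect here is that these $\eta_i$ are jointly, not independently, normal, and that one coordinate may enter $d_k$ through two factors. Both points only help the rate, but they must be handled rigorously. I would bound the expectation of the product of squares via the asymptotic normality of $\sqrt n(\bm\eta-\hat{\bm\eta})$ together with uniform integrability (polynomial moments of Gaussians), giving $\E^L_k(d_k\mid\bm{y})=O_p(n^{-(p_k-p_t)})$. Combining this with the local factor $O_p(n^{-(p_k-p_t)/2})$ yields \eqref{eq:BF_overfit}. The unknown $\sigma^2$ is handled by the same argument after integrating it out, since its posterior concentrates at $\sigma^{*2}$.
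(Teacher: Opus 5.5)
Your argument is correct, but it follows a different route from the paper's.

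\textbf{The paper's route.} The paper applies Laplace's method directly to the nonlocal integrand. It partitions the parameter space along the hyperplanes $\theta_j=0$ and $\theta_j-\theta_{j-1}=0$ and places a posterior mode in each region. At the mode lying in the MLE's region it reads off three factors: a log-likelihood difference of order $O_p(1)$, $Q_k(\tilde{\bm\theta}_k)=O_p(n^{-(p_k-p_t)})$, and a Hessian factor $n^{-(p_k-p_t)/2}$. Condition A supplies the exponential rate in the non-nested case.

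\textbf{Your route.} You use the exact factorization $m_k=m_k^{L}\,\E^{L}_k(d_k\mid\bm y)$. The closed form for $m(\bm y\mid\bm\delta)$ in Section~\ref{subsec:gibbs_updates_all}, a conjugate term times $\E_t[Q_{\bm\delta}]$, is essentially this factorization written out. So all that remains is a standard local Bayes factor plus a moment bound for a fixed-degree polynomial under a posterior that, in this Gaussian linear model, is exactly normal given $\sigma^2$.

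\textbf{What each buys.} Your route gives a cleaner and more checkable argument, because you never have to:
\begin{itemize}
\item justify Laplace's method at modes lying within $O_p(n^{-1/2})$ of the set where the prior vanishes;
\item count the modes;
\item control the competing modes on the other side of each near-zero extra direction, which in an overfitted model are of the same order rather than exponentially smaller.
\end{itemize}
In addition, your factor $n^{-1}$ per extra direction comes straight from Gaussian moments. By contrast, the paper's Lemma~\ref{lem:mode_rate_from_mle_expansion} only gives mode rates $O_p(n^{-1/3})$ or $O_p(n^{-1/4})$ in the fusion configurations, so the bound on $Q_k$ at the mode has to be argued separately there. The paper's route, in turn, yields a mode-level description that it reuses for the posterior-mean results.

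\textbf{The combinatorial step.} The step you assert does hold. In terms of the block parameters $\beta_1,\dots,\beta_{p_k}$ of $\bm\delta_k$, the subspace of $\bm\delta_t$ is cut out by two kinds of constraints:
\begin{itemize}
\item $\beta_m=0$ for blocks with zero true value;
\item $\beta_m-\beta_{m-1}=0$ for consecutive blocks that are adjacent in the original order (no zero between them) and have equal nonzero true value.
\end{itemize}
These $p_k-p_t$ constraints are linearly independent. Each is a squared factor of $d_k$, because the truth cannot fuse across a zero, so such pairs always belong to $\Lambda_k$.

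\textbf{A small imprecision.} The claim that ``the remaining factors converge to positive constants'' is not quite accurate. For instance, the fusion factor between two adjacent zero-valued blocks also vanishes. Those extra vanishing factors only make $d_k$ smaller, so it suffices that they are $O_p(1)$ with bounded posterior moments.
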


In the situation that $\bm{\delta}_t \subset \bm{\delta}_k$,  while local priors yield the decay rate $O_p(n^{-(p_k-p_t)/2})$ \citep{dawid1999trouble},  the proposed method yields the faster rate $O_p(n^{-3(p_k-p_t)/2})$. 
On the other hand,  in the situation that $\bm{\delta}_t \not\subset \bm{\delta}_k$,  the Bayes factor decays at an exponential rate,  as in the case of local priors. Therefore,  the primary effect of the proposed method is to suppress the posterior probability of redundant models in small samples.

While Proposition~\ref{thm:BF_convergence} shows an improved convergence rate of the Bayes factor for redundant models,  it is not clear from this result alone how this improvement is reflected in the asymptotic behavior of the posterior mean for the regression coefficients. To clarify this,  it is first necessary to investigate the asymptotic behavior of the posterior mean for the regression coefficients,  $\E(\theta_j \mid \bm{y},  \bm{\delta}_k)$,  under each fixed model. 
The following proposition provides the corresponding explicit convergence rates.

\begin{proposition}[Conditional posterior mean under local rate assumptions]
\label{prop:sparse_component_mean_local}
Assume that $P(\bm{y}\mid \bm{\theta}, \phi)$ satisfies the Walker condition and that Conditions A and B hold.
Define
$\Delta_i^*
=
\theta_i^*-\theta_{i-1}^*$.
Then the following assertions hold:
\begin{align}
&\theta_i^*\neq 0
\ \Longrightarrow\
\E(\theta_i\mid \bm{y})
=
\theta_i^*+O_p(n^{-1/2}),
\label{eq:mean_rate_fixed_final}
\\
&\theta_i^*=0,\quad
\Delta_i^*\neq 0,\quad
\Delta_{i+1}^*\neq 0
\ \Longrightarrow\
\E(\theta_i\mid \bm{y})
=
O_p(n^{-1/2}),
\label{eq:mean_rate_zero_isolated_final}
\\
&\theta_i^*=0,\quad
\text{exactly one of $\Delta_i^*$ and $\Delta_{i+1}^*$ is equal to zero}
\ \Longrightarrow\
\E(\theta_i\mid \bm{y})
=
O_p(n^{-1/3}),
\label{eq:mean_rate_zero_onefusion_final}
\\
&\theta_i^*=0,\quad
\Delta_i^*=0,\quad
\Delta_{i+1}^*=0
\ \Longrightarrow\
\E(\theta_i\mid \bm{y})
=
O_p(n^{-1/4}).
\label{eq:mean_rate_zero_twofusion_final}
\end{align}
\end{proposition}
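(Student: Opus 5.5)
The plan is to write the BMA posterior mean as
\begin{align*}
\E(\theta_i\mid\bm{y})=\sum_{k=1}^{K}\E(\theta_i\mid\bm{y},\bm{\delta}_k)\,P(\bm{\delta}_k\mid\bm{y}),
\end{align*}
and split the sum into three groups: the true model $\bm{\delta}_t$; the non-nested models $\bm{\delta}_t\not\subset\bm{\delta}_k$; and the redundant models $\bm{\delta}_t\subset\bm{\delta}_k$ with $\bm{\delta}_k\neq\bm{\delta}_t$. Since $p$ is fixed and the model prior is uniform by Proposition~\ref{prop:uniform_objective_prior}, $P(\bm{\delta}_k\mid\bm{y})\le \mathrm{BF}_{kt}$. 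Proposition~\ref{thm:BF_convergence} therefore gives $P(\bm{\delta}_k\mid\bm{y})=O_p(n^{-\frac32(p_k-p_t)})$ for redundant models and $O_p(e^{-Cn})$ for non-nested ones, and hence $P(\bm{\delta}_t\mid\bm{y})=1-O_p(n^{-3/2})$.

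\textbf{Conditional means.} Under Walker's condition each conditional posterior concentrates at the Kullback--Leibler minimiser at rate $n^{-1/2}$, and conditional means are $O_p(1)$. Hence the non-nested group contributes $O_p(e^{-Cn})$. Within the true model, a nonzero $\theta_i^*$ gives $\E(\theta_i\mid\bm{y},\bm{\delta}_t)=\theta_i^*+O_p(n^{-1/2})$. A zero $\theta_i^*$ in the true model has $\theta_i\equiv0$ under the spike, so its conditional mean is exactly $0$. This already yields \eqref{eq:mean_rate_fixed_final}, because every other term is $O_p(n^{-3/2})$ times an $O_p(1)$ quantity. The case $\theta_i^*=0$ therefore hinges on the redundant models in which $\theta_i$ is freed.

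\textbf{Redundant models, the main step.} In a redundant model $\bm{\delta}_k$ where coordinate $i$ is active, the likelihood is locally quadratic around a point with $\theta_i=0$ and possibly $\theta_i-\theta_{i\pm1}=0$. The fusion-pMOM factor then contributes $\theta_i^2$, together with $(\theta_i-\theta_{i-1})^2$ and/or $(\theta_{i+1}-\theta_i)^2$ when the neighbouring true differences vanish. This is a nonlocal penalty of degree $2m$ in a shrinking direction, where $m\in\{1,2,3\}$ counts the vanishing constraints involving $\theta_i$. Following the Laplace-type argument of Rossell and Telesca for pMOM posterior means, I would show that such a polynomial factor pushes the conditional posterior mass of $\theta_i$ away from zero at scale $n^{-1/2}$ but no further, so $\E(\theta_i\mid\bm{y},\bm{\delta}_k)=O_p(n^{-1/2})$. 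The weights do the rest: the cheapest redundant model freeing $\theta_i$ adds $p_k-p_t$ extra dimensions. With fused neighbours, freeing $\theta_i$ while keeping the rest of the block's structure costs extra parameters, and the Bayes factor bound combines with the penalty-induced sharpening to give the stated rates.

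\textbf{Obstacle.} The hardest part is getting exactly the exponents $n^{-1/3}$ and $n^{-1/4}$. These do not drop out of $n^{-1/2}\cdot n^{-3(p_k-p_t)/2}$, which would be too fast. I would therefore not bound the conditional mean and the posterior weight separately. Instead I would evaluate the ratio
\begin{align*}
\int\theta_i\, f_k\,\pi_k\, d\theta \Big/ m_t(\bm{y})
\end{align*}
jointly, integrating $\theta_i$ against the product of Gaussian likelihood and polynomial prior terms. This amounts to computing the first moment of $|\theta_i|^{2m}$-weighted Gaussians with the scaling chosen to balance against the null's marginal. A scaling argument over the one- or two-sided fusion directions should produce exponents $1/(m+1)$, matching $1/2$, $1/3$ and $1/4$ for $m=1,2,3$. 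I would verify this balancing first in the scalar case before handling the general block structure.
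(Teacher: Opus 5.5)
\textbf{You are proving a different statement, and the missing piece is this statement.} The notation $\E(\theta_i\mid\bm{y})$ is overloaded, but this proposition concerns the posterior mean under a \emph{fixed} model $\bm{\delta}_k$ in which $\theta_i$ is a free slab coordinate. The title says \emph{conditional}; the paper introduces it as the rate of $\E(\theta_j\mid\bm{y},\bm{\delta}_k)$ and drops the model index in the proof. Models are averaged only later, in Theorem~\ref{prop:expectation}. Your decomposition over models, the uniform model prior, the weights from Proposition~\ref{thm:BF_convergence}, and the remark that the spike makes the true-model mean exactly $0$ all belong to that later theorem. In the zero cases here, no spike acts on $\theta_i$ and there is no model weight to do any shrinking.

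Even on your reading, everything rests on two fixed-model estimates that you assume rather than prove. The first is $\E(\theta_i\mid\bm{y},\bm{\delta}_t)=\theta_i^*+O_p(n^{-1/2})$, which is the first assertion itself. The second is $\E(\theta_i\mid\bm{y},\bm{\delta}_k)=O_p(n^{-1/2})$ for redundant $\bm{\delta}_k$, which you only announce.

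The paper obtains its fixed-model bounds from a Laplace expansion of the mean about the posterior mode, $\E(\theta_i\mid\bm{y})=\tilde\theta_i+n^{-1}\sum_j b_{ij}\bigl(-\tfrac12\sum_{r,s}b_{rs}t_{rsj}\bigr)+O_p(n^{-2})$. It combines this with a lemma bounding $\tilde\theta_i$ through the first-order condition, which contains the singular terms $2/\tilde\theta_i$, $2/\tilde\Delta_i^-$ and $-2/\tilde\Delta_i^+$. It then checks that the correction, driven by third derivatives such as $4/\theta_i^3$, is of smaller order than the mode.

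A direct proof of your own sub-claim would also work. Rescale $\bm{u}=\sqrt{n}(\bm{\theta}-\hat{\bm{\theta}})$. Each nonlocal factor attached to a vanishing true quantity becomes $n^{-1}$ times the square of an affine function of $\bm{u}$ with $O_p(1)$ intercept, e.g.\ $\theta_i^2=n^{-1}(\sqrt{n}\hat\theta_i+u_i)^2$. These powers of $n$ cancel against the normaliser, and the Walker conditions supply the tail control. In this Gaussian linear model it is even exact: given $\sigma^2$, the posterior is a Gaussian with covariance $\sigma^2A_{\bm{\delta}}^{-1}=O(n^{-1})$ times $Q_{\bm{\delta}}$. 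This gives $\E(\theta_i\mid\bm{y},\bm{\delta}_k)=\theta_i^*+O_p(n^{-1/2})$ in all four cases, which implies every assertion.

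\textbf{The obstacle is illusory, and the proposed remedy cannot work.} $O_p$ rates are upper bounds, so $n^{-1/2}\cdot n^{-3/2}$ is not \emph{too fast}. It implies $O_p(n^{-1/3})$ and $O_p(n^{-1/4})$, and nothing requires hitting those exponents exactly; the paper itself stresses that its rates are only upper bounds.

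Evaluating the ratio jointly changes nothing, because identically $\int\theta_i f_k\pi_k\,d\bm{\theta}/m_t(\bm{y})=\E(\theta_i\mid\bm{y},\bm{\delta}_k)\,\mathrm{BF}_{kt}$. Nor is there any mechanism producing a scale $n^{-1/(m+1)}$: after the $\sqrt{n}$ rescaling, $\theta_i^2$, $(\theta_i-\theta_{i-1})^2$ and $(\theta_{i+1}-\theta_i)^2$ all live on the same $n^{-1/2}$ scale.

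In the paper, the exponents $1/3$ and $1/4$ appear only as bounds on the mode. The first-order condition is multiplied by $\tilde\theta_i\tilde\Delta_i^-$ (resp.\ $\tilde\theta_i\tilde\Delta_i^-\tilde\Delta_i^+$), and all these factors are treated as having one common order $x$. That gives $nx^{m+1}=O_p(1)$, where $m$ is the number of vanishing constraints. These are not sharp scales that a balancing argument could recover.
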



Proposition~\ref{prop:sparse_component_mean_local} shows that,  under a fixed model,  the conditional posterior mean for the regression coefficient has convergence rate $\E(\theta_j \mid \bm{y}, \bm{\delta})=O_p(n^{-1/2})$. 
That is,  even under a fixed model,  the conditional posterior mean for the regression coefficient has a convergence rate comparable to that under local priors.

Finally,  in order to investigate the effectiveness of the proposed method,  we clarify how model averaging improves the accuracy of regression coefficient estimation.

\begin{theorem}[Convergence rate of the posterior mean]
\label{prop:expectation}
Let $\bm{\delta}_t$ denote the true model. Assume that the Walker condition and Conditions A and B hold.
Define
$\Delta_i^*
=
\theta_i^*-\theta_{i-1}^*$.
Then, for the posterior mean
\begin{align}
\E(\theta_i\mid \bm y_n)
=
\sum_k
\E(\theta_i\mid \bm{\delta}_k,\bm y_n)
P(\bm{\delta}_k\mid \bm y_n),
\end{align}
the following holds:
\begin{align}
&\theta_i^*\neq 0
\Longrightarrow
\E(\theta_i\mid \bm y_n)
=
\theta_i^*+O_p(n^{-1/2}),
\\
&\theta_i^*=0,\quad
\Delta_i^*\neq 0,\quad
\Delta_{i+1}^*\neq 0
\Longrightarrow
\E(\theta_i\mid \bm y_n)
=
O_p(n^{-2}),
\\
&\theta_i^*=0,\quad
\text{exactly one of $\Delta_i^*$ and $\Delta_{i+1}^*$ is equal to zero}
\Longrightarrow
\E(\theta_i\mid \bm y_n)
=
O_p(n^{-11/6}),
\\
&\theta_i^*=0,\quad
\Delta_i^*=0,\quad
\Delta_{i+1}^*=0
\Longrightarrow
\E(\theta_i\mid \bm y_n)
=
O_p(n^{-7/4}).
\end{align}
\end{theorem}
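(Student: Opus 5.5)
The plan is to split the BMA sum according to whether $\bm{\delta}_k$ contains the true model, and to control each piece with Propositions~\ref{thm:BF_convergence} and \ref{prop:sparse_component_mean_local}. Write
\begin{align*}
\E(\theta_i\mid \bm y_n)
=\sum_{k:\,\bm{\delta}_t\not\subset\bm{\delta}_k}\E(\theta_i\mid \bm{\delta}_k,\bm y_n)P(\bm{\delta}_k\mid \bm y_n)
+\E(\theta_i\mid \bm{\delta}_t,\bm y_n)P(\bm{\delta}_t\mid \bm y_n)
+\sum_{k:\,\bm{\delta}_t\subsetneq\bm{\delta}_k}\E(\theta_i\mid \bm{\delta}_k,\bm y_n)P(\bm{\delta}_k\mid \bm y_n).
\end{align*}
The model prior is uniform by Proposition~\ref{prop:uniform_objective_prior} and $K=F_{2p+1}$ is fixed. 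Hence $P(\bm{\delta}_k\mid\bm y_n)\le \mathrm{BF}_{kt}$ and $P(\bm{\delta}_t\mid\bm y_n)=1-O_p(n^{-3/2})$.

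\textbf{Non-nested models.} In these models the conditional posterior means are $O_p(1)$. This follows from the Walker condition: the posterior under a misspecified model concentrates on the KL projection, so the mean stays bounded. Combined with \eqref{eq:BF_underfit}, the whole first sum is $O_p(e^{-Cn})$ and is negligible in every case.

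\textbf{Case $\theta_i^*\neq0$.} In the true model and in its supermodels, coordinate $i$ is a free or fused nonzero coefficient. The conditional means there are $\theta_i^*+O_p(n^{-1/2})$ by \eqref{eq:mean_rate_fixed_final}. Summing with weights adding to $1-O_p(e^{-Cn})$ gives $\theta_i^*+O_p(n^{-1/2})$.

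\textbf{Cases $\theta_i^*=0$.} In the true model $\theta_i\equiv0$ exactly, by the Dirac spike, so the middle term vanishes identically. Only supermodels $\bm{\delta}_k\supsetneq\bm{\delta}_t$ in which $\theta_i$ is not forced to zero contribute. For such a model, coordinate $i$ is either a free slab coordinate or fused to a neighbour, so $p_k-p_t\ge1$. Proposition~\ref{thm:BF_convergence} then gives $P(\bm{\delta}_k\mid\bm y_n)=O_p(n^{-3/2})$. The conditional means in these models are $O_p(n^{-1/2})$, $O_p(n^{-1/3})$ and $O_p(n^{-1/4})$ in the three configurations, by \eqref{eq:mean_rate_zero_isolated_final}--\eqref{eq:mean_rate_zero_twofusion_final}. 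Multiplying gives $n^{-2}$, $n^{-3/2-1/3}=n^{-11/6}$ and $n^{-3/2-1/4}=n^{-7/4}$. Models with a larger dimension excess $p_k-p_t\ge2$ contribute even smaller terms, and there are only finitely many models, so the leading order is the one just computed.

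\textbf{Main obstacle.} The delicate step is matching each configuration of $(\Delta_i^*,\Delta_{i+1}^*)$ to the correct rate from Proposition~\ref{prop:sparse_component_mean_local}. Its rates are for the nonlocal slab applied to a redundant coordinate, whose true value is zero or lies on a fusion boundary. One must check that in every minimal supermodel freeing $\theta_i$ the conditional mean rate is no worse than the stated one. One must also check that the $O_p(\cdot)$ bounds from the two propositions may be multiplied, since both are stochastic orders on the same data and $O_p(a_n)O_p(b_n)=O_p(a_nb_n)$. I would verify the first point by enumerating the minimal supermodels in which $\theta_i$ is unlocked: a single new free coordinate, or a broken fusion with a neighbour. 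I would then confirm that the fused case, where the prior factor $(\theta_i-\theta_{i-1})^2$ competes with $\theta_i^2$, is the one producing the slower $n^{-1/3}$ and $n^{-1/4}$ rates used above.
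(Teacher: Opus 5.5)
Your proposal is correct and follows essentially the same route as the paper. Both split the BMA sum into the true model, its supermodels and the non-nested models; the true-model term vanishes when $\theta_i^*=0$, the supermodel weights $O_p(n^{-3/2})$ from Proposition~\ref{thm:BF_convergence} are multiplied by the conditional rates of Proposition~\ref{prop:sparse_component_mean_local} with $p_k-p_t=1$ dominant, and the non-nested part is $O_p(e^{-Cn})$. The paper also applies Proposition~\ref{prop:sparse_component_mean_local} to the supermodels directly by the true configuration, so the enumeration you flag as the main obstacle is not done there either; since the theorem only asserts upper bounds, faster rates in the minimal supermodels would only help.
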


\begin{theorem}[Convergence rate of the posterior mean for adjacent differences]
\label{thm:posterior_mean_difference_fixed_dim}
Let $\bm{\delta}_t$ denote the true model. Assume that the Walker condition and Conditions A and B hold.
For each model $\bm{\delta}_k$,  define
\begin{align}
\E(\Delta_j\mid \bm{y})
=
\sum_k \E(\Delta_j\mid \bm{\delta}_k, \bm{y})\, P(\bm{\delta}_k\mid \bm{y}), 
\label{eq:bma_mean_difference}
\end{align}
where $\Delta_j:=\theta_j-\theta_{j-1}$ and
$\Delta_j^*:=\theta_j^*-\theta_{j-1}^*$.
Then the posterior mean satisfies
\begin{align}
&\Delta_j^*\neq0
\Longrightarrow
\E(\Delta_j\mid \bm{y})
=
\Delta_j^*+O_p(n^{-1/2}), 
\label{eq:thm_diff_fixed_nonzero}
\\
&\Delta_j^*=0, \
\Delta_{j-1}^*\neq 0, \
\Delta_{j+1}^*\neq 0
\Longrightarrow
\E(\Delta_j\mid \bm{y})
=
O_p(n^{-2}), 
\label{eq:thm_diff_zero_isolated}
\\
&\Delta_j^*=0, \
\text{exactly one of }\Delta_{j-1}^* \text{ and } \Delta_{j+1}^*
\text{ is equal to zero}
\Longrightarrow
\E(\Delta_j\mid \bm{y})
=
O_p(n^{-11/6}), 
\label{eq:thm_diff_zero_onefusion}
\\
&\Delta_j^*=0, \
\Delta_{j-1}^*=0, \
\Delta_{j+1}^*=0
\Longrightarrow
\E(\Delta_j\mid \bm{y})
=
O_p(n^{-7/4}).
\label{eq:thm_diff_zero_twofusion}
\end{align}
\end{theorem}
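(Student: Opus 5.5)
We prove Theorem~\ref{thm:posterior_mean_difference_fixed_dim} by repeating the argument for Theorem~\ref{prop:expectation}, with the coefficient $\theta_i$ replaced by the adjacent difference $\Delta_j$. Since $p$ is fixed, the number of models $K=F_{2p+1}$ is finite. We therefore split the BMA sum \eqref{eq:bma_mean_difference} into three groups: the true model $\bm\delta_t$, the redundant models $\bm\delta_t\subset\bm\delta_k$ with $k\neq t$, and the misspecified models $\bm\delta_t\not\subset\bm\delta_k$. Each group is bounded separately, and the bounds are summed over finitely many terms. Throughout, $P(\bm\delta_k\mid\bm y)\le \mathrm{BF}_{kt}$ because the prior over models is uniform (Proposition~\ref{prop:uniform_objective_prior}), so Proposition~\ref{thm:BF_convergence} gives $P(\bm\delta_k\mid\bm y)=O_p(n^{-3(p_k-p_t)/2})$ for redundant models and $O_p(e^{-Cn})$ for misspecified ones.

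\textbf{Case $\Delta_j^*\neq 0$.} Every model containing the truth has $\delta_j\neq 1$, so $\Delta_j$ is a free contrast of slab coefficients in each such model. Writing $\Delta_j=\theta_j-\theta_{j-1}$, the first assertion of Proposition~\ref{prop:sparse_component_mean_local} applied to both coefficients gives $\E(\Delta_j\mid\bm\delta_k,\bm y)=\Delta_j^*+O_p(n^{-1/2})$. If either $\theta_j^*$ or $\theta_{j-1}^*$ is zero, we use the appropriate zero-coefficient assertion of that proposition instead, which still gives a rate of at most $O_p(n^{-1/2})$. Because $P(\bm\delta_t\mid\bm y)\to 1$ and the remaining weights vanish, the misspecified models contribute $O_p(e^{-Cn})$ times a conditional mean that is $O_p(1)$ under the Walker condition. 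The first claim follows.

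\textbf{Case $\Delta_j^*=0$.} In the true model $\delta_j=1$ (or both coefficients are zero), so $\E(\Delta_j\mid\bm\delta_t,\bm y)=0$ exactly, by the Dirac spike. A redundant model can have nonzero conditional mean for $\Delta_j$ only if it unfuses $j$ from $j-1$. This adds at least one dimension, so $p_k-p_t\ge 1$ and its weight is $O_p(n^{-3/2})$. The second ingredient is the conditional rate of $\E(\Delta_j\mid\bm\delta_k,\bm y)$ in such a model. Here the fusion-pMOM factor $(\theta_{\delta_j}-\theta_{\delta_{j-1}})^2$ plays the same role for $\Delta_j$ that $\theta_i^2$ plays for $\theta_i$. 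The analogue of Proposition~\ref{prop:sparse_component_mean_local} for differences should then give three rates:
\begin{itemize}
\item $O_p(n^{-1/2})$ when the neighbouring differences $\Delta^*_{j\pm1}$ are nonzero;
\item $O_p(n^{-1/3})$ when exactly one of them is zero;
\item $O_p(n^{-1/4})$ when both are zero.
\end{itemize}
In the last two cases the redundant model must also unfuse a neighbouring true fusion, and these rates reflect the extra prior factors $(\Delta_{j\pm1})^2$ that then interact. Multiplying each rate by $n^{-3/2}$ gives $n^{-2}$, $n^{-11/6}$ and $n^{-7/4}$ respectively. Models with $p_k-p_t\ge 2$ contribute at most $O_p(n^{-3})\cdot O_p(1)$, and misspecified models contribute $O_p(e^{-Cn})$, so both are negligible.

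\textbf{Main obstacle.} The difficult step is establishing the difference analogue of Proposition~\ref{prop:sparse_component_mean_local}, namely the conditional rates $n^{-1/2}$, $n^{-1/3}$, $n^{-1/4}$ for $\E(\Delta_j\mid\bm\delta_k,\bm y)$ in redundant models. The reparametrisation $\theta\mapsto(\theta_1,\Delta_2,\dots,\Delta_p)$ is linear with unit Jacobian. Under it the fusion-pMOM density becomes a polynomial in the $\Delta$'s times a Gaussian. The posterior in a fixed model is then a Gaussian likelihood of scale $n^{-1/2}$ multiplied by polynomial factors that vanish at the true value zero. I would first try a local Laplace-type expansion around the MLE: rescale $\Delta=n^{-1/2}u$ and track the leading monomials of the prior product. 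The differing exponents $1/3$ and $1/4$ should come from how many quadratic prior factors (here $\Delta_j^2$ and $\Delta_{j\pm1}^2$) vanish simultaneously at the truth. If a direct derivation is too heavy, the argument reduces to citing the proof of Proposition~\ref{prop:sparse_component_mean_local} in the transformed coordinates. The symmetry between the roles of $\theta_i^2$ and $\Delta_j^2$ in the prior \eqref{eq:fusion-pMOM-prior} makes that proof transfer almost verbatim.
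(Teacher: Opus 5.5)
Your proposal follows the paper's proof essentially step for step. It uses the same three-way split of the BMA sum, the same weights $O_p(n^{-3/2})$, $O_p(n^{-3})$ and $O_p(e^{-Cn})$ from Proposition~\ref{thm:BF_convergence} over finitely many models, an exact zero under $\bm{\delta}_t$ when $\Delta_j^*=0$, and the products of $n^{-3/2}$ with the conditional rates $n^{-1/2}$, $n^{-1/3}$ and $n^{-1/4}$; moreover, the difference analogue of Proposition~\ref{prop:sparse_component_mean_local} that you flag as the main obstacle is not derived in the paper either, which simply invokes ``the assumed conditional mean rate'' at that point, although two small inaccuracies in your write-up are harmless. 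First, when $\Delta_j^*\neq 0$ and $\theta_j^*=0$ (or $\theta_{j-1}^*=0$) with the other adjacent true difference zero, Proposition~\ref{prop:sparse_component_mean_local} gives $O_p(n^{-1/3})$ rather than $O_p(n^{-1/2})$, but only in redundant models whose weight $O_p(n^{-3/2})$ absorbs it (the paper just bounds those conditional means by $O_p(1)$); second, a model in which $\Delta_j$ is free and one of $(\Delta_{j\pm1})^2$ also enters the prior as a factor vanishing at the truth, which is how you explain $n^{-1/3}$ and $n^{-1/4}$, has $p_k-p_t\ge 2$ and weight $O_p(n^{-3})$, so pairing those rates with $n^{-3/2}$ is merely conservative.
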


Under local priors,  it is known that the convergence rate remains $O_p(n^{-1/2})$ in all of these settings \citep{dawid1999trouble}. 
In contrast,  Theorems~\ref{prop:expectation} and \ref{thm:posterior_mean_difference_fixed_dim} show that,  under the proposed method,  faster convergence rates are obtained for both unnecessary regression coefficients and unnecessary adjacent differences. 
This implies that the proposed method has the property of inducing stronger shrinkage on unnecessary regression coefficients and unnecessary adjacent differences.

On the other hand, under the pMOM prior for variable selection, regression coefficients whose true values are zero uniformly attain the convergence rate $O_p(n^{-2})$.
In contrast, under the proposed method, when $\theta_i^*=0$, the upper bound on the convergence rate is suggested to be of the same order or slower because of the structure that promotes variable fusion.
However, the $O_p$ evaluations established here provide only upper bounds on the convergence rates and do not imply that the actual convergence rates coincide with these orders.
Therefore, these upper bounds alone do not allow us to conclude that the introduction of variable fusion actually reduces the estimation accuracy of the regression coefficients.
Nevertheless, the possibility that a trade-off arises between variable selection and variable fusion in estimating regression coefficients whose true values are zero cannot be ruled out.

At the same time,  however,  as shown in Theorem~\ref{thm:posterior_mean_difference_fixed_dim},  the proposed method induces strong shrinkage on the differences between adjacent regression coefficients. 
This is due to a feature of the fusion-pMOM prior,  which promotes not only variable selection but also variable fusion simultaneously. 
Therefore,  the proposed method can be regarded as a method that achieves fast convergence rates for both variable selection and variable fusion in regression coefficients.

\section{Computational Algorithm}
\label{sec:Computational_Algorithm}
The sampling under the proposed method is carried out by Gibbs sampling. In Section~\ref{subsec:gibbs_updates_all},  we provide the Gibbs sampling algorithm. In Section~\ref{subsec:no-local_sampling},  we describe a sampling method from the non-local prior distribution.

\subsection{Gibbs Updates}
\label{subsec:gibbs_updates_all}

For notational convenience,  we define
\begin{align*}
Q_{\bm{\delta}}(\bm{\theta}_{\bm{\delta}})
&:=
\prod_{j=1}^{p_{\bm{\delta}}}\theta_{\bm{\delta}_j}^2
\prod_{j\in\Lambda_{\bm{\delta}}}(\theta_{\bm{\delta}_j}-\theta_{\bm{\delta}_{j-1}})^2, 
&
A_{\bm{\delta}}
&:=
X_{\bm{\delta}}^{\top}X_{\bm{\delta}}+I,  \\
\tilde{\bm{\beta}}_{\bm{\delta}}
&:=
A_{\bm{\delta}}^{-1}X_{\bm{\delta}}^{\top}\bm{y}, 
&
R_{\bm{\delta}}
&:=
\bm{y}^{\top}(I-X_{\bm{\delta}}A_{\bm{\delta}}^{-1}X_{\bm{\delta}}^{\top})\bm{y},  \\
\nu_{\bm{\delta}}
&:=
n+2p_{\bm{\delta}}+2|\Lambda_{\bm{\delta}}|+2\alpha, 
&
s_{\bm{\delta}}^{2}
&:=
\frac{R_{\bm{\delta}}+2\psi}{\nu_{\bm{\delta}}}.
\end{align*}
Using these notations,  the Gibbs sampling algorithm is given as follows. Here,  the indicator function is denoted by $\bm{1}_{\{\cdot\}}$.

\begin{enumerate}
\item For each $j=p, \ldots, 1$,  sample\\
$
\begin{aligned}
\delta_j^{(k+1)}
&\sim \mathrm{Categorical}\!\left(h_{j, -1}^{(k)}, \, h_{j, 0}^{(k)}, \, h_{j, 1}^{(k)}\right), 
\end{aligned}
$\\
where, 
$
\begin{aligned}
h_{j, x}^{(k)}
&\coloneqq
\frac{\tilde h_{j, x}^{(k)}}{\sum_{s\in\{-1, 0, 1\}}\tilde h_{j, s}^{(k)}}, 
\qquad x\in\{-1, 0, 1\}, 
\end{aligned}
$
\\[-2pt]
\[
\bm{\delta}^{(k, j\leftarrow x)}
\coloneqq
(\delta_1^{(k)}, \ldots, \delta_{j-1}^{(k)}, x, \delta_{j+1}^{(k)}, \ldots, \delta_p^{(k)})^\top, 
\]
\[
\tilde h_{j, x}^{(k)}
\coloneqq
\begin{cases}
m\!\left(\bm{y}\mid \bm{\delta}^{(k, j\leftarrow x)}\right)\, 
p^{(j)}_{\delta_{j-1}^{(k)}, x}\, 
p^{(j+1)}_{x, \delta_{j+1}^{(k)}}, 
& (\text{if }j=2, \ldots, p-1), \\[6pt]
m\!\left(\bm{y}\mid \bm{\delta}^{(k, 1\leftarrow x)}\right)\, 
\pi_{x}\, 
p^{(2)}_{x, \delta_{2}^{(k)}}, 
& (\text{if }j=1), \\[6pt]
m\!\left(\bm{y}\mid \bm{\delta}^{(k, p\leftarrow x)}\right)\, 
p^{(p)}_{\delta_{p-1}^{(k)}, x}, 
& (\text{if }j=p).
\end{cases}
\]


\item
$
\bm{\theta}_{\bm{\delta}^{(k+1)}}^{(k+1)}
\sim
\mathrm{fusion\text{-}pMOM}
\left(
\tilde{\bm{\beta}}_{\bm{\delta}^{(k+1)}}, 
(\sigma^2)^{(k)}A_{\bm{\delta}^{(k+1)}}^{-1}
\right), 
$\\
where
$
\mathrm{fusion\text{-}pMOM}(\bm{\mu}, \Sigma)
$
denotes the distribution with probability density function proportional to
\begin{align*}
\phi_d(\bm{\theta};\bm{\mu}, \Sigma)
\prod_{j\in S_{\bm{\delta}}}\theta_j^2
\prod_{j\in\Lambda_{\bm{\delta}}}(\theta_j-\theta_{j-1})^2, 
\end{align*}
where $\phi_d(\bm{\theta};\bm{\mu}, \Sigma)$ is the density of the $d$-dimensional normal distribution with mean $\bm{\mu}$ and covariance matrix $\Sigma$.

\item
$
\begin{aligned}
(\sigma^2)^{(k+1)}
&\sim
\mathrm{IG}\!\left(
\frac{
n+3p_{\bm{\delta}^{(k+1)}}+2|\Lambda_{\bm{\delta}^{(k+1)}}|+2\alpha
}{2}, \, 
\psi+\frac{
\bigl\|\bm{y}-X_{\bm{\delta}^{(k+1)}}\bm{\theta}_{\bm{\delta}^{(k+1)}}^{(k+1)}\bigr\|_2^2
+\bigl(\bm{\theta}_{\bm{\delta}^{(k+1)}}^{(k+1)}\bigr)^\top
\bm{\theta}_{\bm{\delta}^{(k+1)}}^{(k+1)}
}{2}
\right).
\end{aligned}
$

\item For each $j=2, \ldots, p$,  sample

\hspace*{1em}if $\delta_{j-1}^{(k+1)}\in\{-1, 1\}$,  then

\hspace*{2em}
$\bm{\omega}_j^{(k+1)}
\sim
\mathrm{Dirichlet}\Bigl(
Aa_{j, -1}+\bm{1}_{\{\delta_j^{(k+1)}=-1\}}, \
Aa_{j, 0}+\bm{1}_{\{\delta_j^{(k+1)}=0\}}, \
Aa_{j, 1}+\bm{1}_{\{\delta_j^{(k+1)}=1\}}
\Bigr).$

\hspace*{1em}if $\delta_{j-1}^{(k+1)}=0$,  then

\hspace*{2em}
$\bm{\omega}_j^{(k+1)}
\sim
\mathrm{Dirichlet}\bigl(
Aa_{j, -1}, Aa_{j, 0}, Aa_{j, 1}
\bigr).$

\item For each $j=1, \ldots, p$,  sample

\hspace*{1em}if $\delta_{j-1}^{(k+1)}=0$,  then

\hspace*{2em}
$\kappa_{j, -1}^{(k+1)}
\sim
\mathrm{Beta}\Bigl(
Bc_{j, -1}+\bm{1}_{\{\delta_j^{(k+1)}=-1\}}, \
Bd_{j, 0}+\bm{1}_{\{\delta_j^{(k+1)}=0\}}
\Bigr).$

\hspace*{1em}if $\delta_{j-1}^{(k+1)}\in\{-1, 1\}$,  then

\hspace*{2em}
$\kappa_{j, -1}^{(k+1)}
\sim
\mathrm{Beta}\bigl(Bc_{j, -1}, Bd_{j, 0}\bigr).$

\end{enumerate}
The marginal likelihood is given by
\begin{align}
m(\bm{y}\mid\bm{\delta})
&=
C_{\bm{\delta}}(2\pi)^{-(n+p_{\bm{\delta}})/2}\frac{\psi^{\alpha}}{\Gamma(\alpha)}
2^{\frac{\nu_{\bm{\delta}}+p_{\bm{\delta}}}{2}}
\pi^{p_{\bm{\delta}}/2}
|A_{\bm{\delta}}|^{-1/2}
(\nu_{\bm{\delta}}s_{\bm{\delta}}^{2})^{-\nu_{\bm{\delta}}/2}
\Gamma\left(\frac{\nu_{\bm{\delta}}}{2}\right)
\E_{t}\left[Q_{\bm{\delta}}(\bm{\theta}_{\bm{\delta}})\right].
\end{align}
Here,  $\E_t[\cdot]$ denotes expectation with respect to the $p_{\bm{\delta}}$-dimensional $t$ distribution with mean $\tilde{\bm{\beta}}_{\bm{\delta}}$,  scale matrix $s_{\bm{\delta}}^{2}A_{\bm{\delta}}^{-1}$,  and degrees of freedom $\nu_{\bm{\delta}}$. The derivation of the marginal likelihood and the Gibbs update formulas are given in the Supplementary Material  \citep{miyake2026supplement}.

\subsection{Sampling from the Non-Local Prior Distribution}
\label{subsec:no-local_sampling}

Sampling from the fusion-pMOM distribution can be reduced to sampling from the following truncated multivariate normal distribution:
\begin{enumerate}
\renewcommand{\labelenumi}{(\arabic{enumi})}

\item
\(
\begin{aligned}
s_j^{(k+1)} &\sim \mathrm{Unif}\!\left(0, (\theta_j^{(k)})^2\right), 
\qquad \lambda_j^{(k+1)} \coloneqq \sqrt{s_j^{(k+1)}}, 
\qquad j=1, \ldots, p, \\
t_j^{(k+1)} &\sim \mathrm{Unif}\!\left(0, (\theta_j^{(k)}-\theta_{j-1}^{(k)})^2\right), 
\qquad \eta_j^{(k+1)} \coloneqq \sqrt{t_j^{(k+1)}}, 
\qquad j \in \Lambda.
\end{aligned}
\)

\item 
\(
\begin{aligned}
\bm{\theta}^{(k+1)}
&\sim
\mathrm{N}_p(\bm{\mu}, \bm{\Sigma})
\prod_{j=1}^{p}\bm{1}_{\{|\theta_j|>\lambda_j^{(k+1)}\}}
\prod_{j\in \Lambda}\bm{1}_{\{|\theta_j-\theta_{j-1}|>\eta_j^{(k+1)}\}}.
\end{aligned}
\)
\end{enumerate}
This can be derived by straightforward algebraic calculation by using Corollary 2 of \citet{rossell2017nonlocal}.

\section{Numerical Experiments}
\label{sec:numerical}
Hereafter, we refer to the proposed method as the Bayesian fused nonlocal prior method (BFN). 
In this section, we validate BFN through numerical experiments. 
In Section~\ref{subsec:montecarlo}, we conduct a Monte Carlo simulation study. 
In Section~\ref{subsec:real_data_analysis}, we apply BFN to spectral data to predict the composition of products.
\subsection{Monte Carlo Simulation}

\label{subsec:montecarlo}
We generate data according to the following true linear regression model:
\begin{align}
    \bm{y} = X \bm{\theta}^* + \bm{\varepsilon}, 
\end{align}
 where $\bm{\theta}^*$ denotes the $p$-dimensional true regression coefficient vector,  $X$ denotes the $n \times p$ design matrix,  and $\bm{\varepsilon}$ denotes an $n$-dimensional error vector following $\mathrm{N}_n(\bm{0}, \sigma^2 I_n)$. In addition,  each row $\bm{x}_j\ (j=1, \ldots, n)$ of the design matrix $X$ is independently generated from the multivariate normal distribution $\mathrm{N}_p(\bm{0}, \Sigma)$ with common covariance matrix $\Sigma$. The $(i, j)$th element of $\Sigma$ is given by
\begin{align}
    \Sigma_{i, j}
    =
    \begin{cases}
        1,  & i=j, \\
        \rho,  & i\neq j
    \end{cases}
\end{align}
for all $i$ and $j$.

In the numerical experiments,  we consider three settings to examine the performance of BFN under different true regression coefficient vectors,  correlations among covariates,  error variances,  and sample sizes.
We consider the following three cases:

\begin{itemize}
    \item[] Case 1: $\bm{\theta}^*=(\bm{3}_5^\top, \bm{5}_5^\top, \bm{3}_5^\top, \bm{5}_5^\top)^\top$, 
$\rho=0,0.5$, 
$\sigma=0.5$, 
and $n=20, 40, 60, 80, 100$.
    \item[] Case 2: 
$\bm{\theta}^*=(\bm{0}_{140}^\top, \bm{-1.5}_5^\top, \bm{1.5}_5^\top)^\top$, 
$\rho=0,0.5$, 
$\sigma=0.5$, 
and $n=50, 70$.
    \item[] Case 3: 
$\bm{\theta}^*=(\bm{0}_{495}^\top, \bm{3}_5^\top)^\top$, 
$\rho=0,0.5$, 
$\sigma=0.5$, 
and $n=100, 200$.
\end{itemize}
Roughly speaking,  Case 1 corresponds to the low-dimensional setting,  while Cases 2 and 3 correspond to the high-dimensional setting.

For comparison, we consider Fused Lasso (FL) \citep{tibshirani2005fusedlasso}, Bayesian Fused Lasso (BFL) \citep{Kyung2010BayesFusedLasso}, Bayesian Fused Lasso via Horseshoe Prior (BFLH) \citep{kakikawa2023bayesian}, and Bayesian Fused Lasso via Spike-and-slab (BFLSS) \citep{wu2021variable}. 
Since BFLSS did not yield stable results in the high-dimensional setting, it is included only in the comparison for Case 1.

For FL, the regularization parameter was selected by using the \texttt{penalized} package in R. 
For the hyperparameters of the Gamma distributions in BFL and BFLH, we set the shape and rate parameters to 1 and 10, respectively, so that the resulting prior distribution would be relatively flat and weakly informative. 
This setting is also recommended in \citet{Kyung2010BayesFusedLasso}. 
On the other hand, for BFN, the hyperparameters were set to $A=B=1, \ \phi=\psi=1$.

BFL, BFLH, BFLSS, and BFN were all implemented by using Gibbs sampling. 
The number of posterior samples was set to 8000, and the first 2000 samples were discarded as burn-in.

To evaluate estimation accuracy,  we use the mean squared error (MSE),  defined by
\begin{align}
    \mathrm{MSE}
    =
    (\hat{\bm{\theta}}-\bm{\theta}^*)^\top(\hat{\bm{\theta}}-\bm{\theta}^*), 
\end{align}
where $\hat{\bm{\theta}}$ denotes an estimator of the regression coefficient vector $\bm{\theta}$. 
We also use the prediction squared error (PSE) to evaluate predictive performance. The PSE is defined by
\begin{align}
    \mathrm{PSE}
    =
    (\hat{\bm{\theta}}-\bm{\theta}^*)^\top
    \Sigma
    (\hat{\bm{\theta}}-\bm{\theta}^*).
\end{align}
Furthermore,  as a measure for evaluating the accuracy of variable selection and variable fusion,  we define $P_B$ by
\begin{align}
    P_B
    &=
    \frac{
        N_{\mathrm{selection}}
        +
        N_{\mathrm{fusion}}
        +
        N_{\mathrm{otherwise}}
    }{p},
\end{align}
where
\begin{align}
    N_{\mathrm{selection}}
    &\coloneqq
    \#\left\{
    j=1,\ldots,p
    \,\middle|\,
    \theta_j^*=0,\ 
    \hat{\theta}_j=0
    \right\},
    \\
    N_{\mathrm{fusion}}
    &\coloneqq
    \#\left\{
    j=2,\ldots,p
    \,\middle|\,
    \theta_j^*\neq 0,\ 
    \theta_j^*-\theta_{j-1}^*=0,\ 
    \hat{\theta}_j\neq 0,\ 
    \hat{\theta}_j-\hat{\theta}_{j-1}=0
    \right\},
    \\
    N_{\mathrm{otherwise}}
    &\coloneqq
    \#\left\{
    j=1,\ldots,p
    \,\middle|\,
    \theta_j^*\neq 0,\ 
    \theta_j^*-\theta_{j-1}^*\neq 0,\ 
    \hat{\theta}_j\neq 0,\ 
    \hat{\theta}_j-\hat{\theta}_{j-1}\neq 0
    \right\}.
\end{align}
That is,  $N_{\mathrm{selection}}$ represents the number of components for which a coefficient that is truly zero is correctly estimated as zero,  while $N_{\mathrm{fusion}}$ represents the number of pairs of nonzero coefficients with a truly zero adjacent difference for which that difference is correctly estimated as zero. Therefore,  $P_B$ is an index that comprehensively evaluates the accuracy of both variable selection and variable fusion.

The results of the numerical experiments are reported in Tables~\ref{tab:case1},  \ref{tab:case2},  and \ref{tab:case3}. 
In Case 1, we first focus on $P_B$. BFN shows favorable values in all settings and, in particular, achieves the best or comparable performance relative to the competing methods. Moreover, the values of $P_B$ approach one as the sample size increases, indicating that BFN can recover the true model more accurately for larger sample sizes. These results suggest that BFN can appropriately identify the true model involving both variable selection and variable fusion.

We next examine MSE. BFN achieves the smallest value in many settings and, in particular, for $\rho=0.5$, it attains the best performance except when $n=60$. This result is consistent with the theoretical property that BFN can recover the true model with high accuracy, thereby leading to stable performance in coefficient estimation.

Finally, with respect to PSE, BFN achieves the best or nearly best performance in many settings. This indicates that the improved performance in recovering the true model and estimating the coefficients does not come at the cost of predictive accuracy. Rather, BFN can identify the true model while maintaining favorable predictive performance. Overall, in Case 1, BFN demonstrates superior performance compared with the competing methods in terms of true model recovery, coefficient estimation accuracy, and predictive accuracy.

In Cases 2 and 3, the same tendency is maintained even in high-dimensional settings. In particular, BFN achieves the best $P_B$ values in all settings, indicating that it can recover the true model with high accuracy even when the number of covariates is large relative to the sample size. In terms of MSE, BFN also attains the smallest or nearly smallest values in many settings. Furthermore, with respect to PSE, it achieves the best performance in all settings of Cases 2 and 3. Overall, these results demonstrate that BFN performs favorably not only in low-dimensional settings but also in high-dimensional settings, in terms of true model recovery, coefficient estimation accuracy, and predictive accuracy.

\begin{table}[htbp]
\centering
\caption{Average MSE,  PSE,  and $P_B$ in Case 1 (over 100 trials)}
\label{tab:case1}
\renewcommand{\arraystretch}{1.15}
\resizebox{\textwidth}{!}{%
\begin{tabular}{c l c c c c c c c c c c c c}
\hline
& & \multicolumn{6}{c}{$\rho = 0$} & \multicolumn{6}{c}{$\rho = 0.5$} \\
\cline{3-8} \cline{9-14}
$n$ & Method
& MSE & (sd) & PSE & (sd) & $P_B$ & (sd)
& MSE & (sd) & PSE & (sd) & $P_B$ & (sd) \\
\hline

\multirow{5}{*}{20}
& FL       & 0.217 & 0.224 & 0.223 & 0.185 & 0.784 & 0.067 & 0.258 & 0.215 & 0.383 & 0.270 & 0.761 & 0.051 \\
& BFL      & 0.195 & 0.163 & 0.242 & 0.178 & 0.682 & 0.058 & 0.259 & 0.227 & \textbf{0.343} & 0.262 & 0.819 & 0.042 \\
& BFLH     & 0.188 & 0.133 & 0.251 & 0.126 & 0.797 & 0.025 & 0.276 & 0.194 & 0.417 & 0.189 & 0.774 & 0.052 \\
& BFLSS    & 0.261 & 0.162 & 0.326 & 0.170 & 0.782 & 0.050 & 0.293 & 0.222 & 0.366 & 0.090 & 0.815 & 0.019 \\
& BFN & \textbf{0.123} & 0.221 & \textbf{0.138} & 0.247 & \textbf{0.882} & 0.041 & \textbf{0.224} & 0.178 & 0.360 & 0.157 & \textbf{0.823} & 0.027 \\
\hline

\multirow{5}{*}{40}
& FL       & 0.238 & 0.265 & 0.223 & 0.305 & 0.647 & 0.068 & 0.251 & 0.258 & 0.390 & 0.235 & 0.659 & 0.060 \\
& BFL      & 0.303 & 0.239 & 0.371 & 0.098 & 0.798 & 0.032 & 0.254 & 0.239 & 0.259 & 0.125 & 0.800 & 0.022 \\
& BFLH     & 0.231 & 0.220 & 0.286 & 0.113 & \textbf{0.896} & 0.027 & 0.276 & 0.231 & 0.269 & 0.115 & 0.884 & 0.020 \\
& BFLSS    & 0.256 & 0.340 & 0.233 & 0.207 & 0.834 & 0.057 & 0.279 & 0.194 & 0.302 & 0.082 & 0.847 & 0.013 \\
& BFN & \textbf{0.130} & 0.106 & \textbf{0.155} & 0.229 & \textbf{0.896} & 0.041 & \textbf{0.212} & 0.245 & \textbf{0.218} & 0.265 & \textbf{0.910} & 0.066 \\
\hline

\multirow{5}{*}{60}
& FL       & 0.247 & 0.238 & 0.278 & 0.188 & 0.727 & 0.023 & 0.249 & 0.258 & 0.277 & 0.148 & 0.720 & 0.024 \\
& BFL      & 0.293 & 0.203 & 0.342 & 0.110 & 0.726 & 0.030 & 0.257 & 0.229 & 0.275 & 0.097 & 0.732 & 0.019 \\
& BFLH     & \textbf{0.205} & 0.191 & 0.235 & 0.198 & 0.895 & 0.056 & \textbf{0.211} & 0.214 & 0.329 & 0.143 & 0.898 & 0.036 \\
& BFLSS    & 0.283 & 0.230 & 0.322 & 0.183 & 0.828 & 0.031 & 0.285 & 0.162 & 0.340 & 0.168 & 0.836 & 0.043 \\
& BFN & 0.229 & 0.250 & \textbf{0.221} & 0.096 & \textbf{0.920} & 0.029 & 0.261 & 0.190 & \textbf{0.210} & 0.089 & \textbf{0.932} & 0.027 \\
\hline

\multirow{5}{*}{80}
& FL       & 0.213 & 0.236 & 0.187 & 0.314 & 0.876 & 0.072 & 0.235 & 0.198 & 0.360 & 0.159 & 0.904 & 0.025 \\
& BFL      & 0.257 & 0.192 & 0.304 & 0.116 & 0.874 & 0.021 & 0.207 & 0.236 & 0.266 & 0.069 & 0.908 & 0.008 \\
& BFLH     & \textbf{0.109} & 0.283 & \textbf{0.115} & 0.283 & 0.871 & 0.056 & 0.141 & 0.199 & \textbf{0.097} & 0.109 & 0.907 & 0.024 \\
& BFLSS    & 0.157 & 0.199 & 0.134 & 0.284 & 0.941 & 0.098 & 0.144 & 0.240 & 0.231 & 0.203 & 0.917 & 0.059 \\
& BFN & 0.123 & 0.109 & 0.184 & 0.224 & \textbf{0.977} & 0.044 & \textbf{0.108} & 0.145 & 0.114 & 0.250 & \textbf{0.943} & 0.056 \\
\hline

\multirow{5}{*}{100}
& FL       & 0.278 & 0.210 & 0.323 & 0.268 & 0.969 & 0.057 & 0.228 & 0.164 & 0.337 & 0.114 & 0.939 & 0.013 \\
& BFL      & 0.283 & 0.121 & 0.338 & 0.205 & 0.915 & 0.041 & 0.208 & 0.154 & 0.167 & 0.241 & 0.952 & 0.078 \\
& BFLH     & 0.117 & 0.147 & 0.162 & 0.226 & 0.964 & 0.052 & 0.119 & 0.180 & 0.187 & 0.187 & 0.941 & 0.040 \\
& BFLSS    & 0.234 & 0.195 & 0.278 & 0.207 & 0.920 & 0.027 & 0.126 & 0.213 & 0.198 & 0.169 & 0.934 & 0.029 \\
& BFN & \textbf{0.092} & 0.164 & \textbf{0.097} & 0.190 & \textbf{0.991} & 0.047 & \textbf{0.113} & 0.171 & \textbf{0.110} & 0.198 & \textbf{0.962} & 0.043 \\
\hline

\end{tabular}
}
\end{table}

\begin{table}[htbp]
\centering
\caption{Average MSE,  PSE,  and $P_B$ in Case 2 (over 100 trials)}
\label{tab:case2}
\renewcommand{\arraystretch}{1.15}
\resizebox{\textwidth}{!}{%
\begin{tabular}{c l c c c c c c c c c c c c}
\hline
& & \multicolumn{6}{c}{$\rho = 0$} & \multicolumn{6}{c}{$\rho = 0.5$} \\
\cline{3-8} \cline{9-14}
$n$ & Method
& MSE & (sd) & PSE & (sd) & $P_B$ & (sd)
& MSE & (sd) & PSE & (sd) & $P_B$ & (sd) \\
\hline

\multirow{4}{*}{50}
& FL       & 1.939 & 0.409 & 2.077 & 0.419 & 0.790 & 0.031 & 1.826 & 0.447 & 1.862 & 0.296 & 0.749 & 0.027 \\
& BFL      & 1.941 & 0.401 & 2.061 & 0.349 & 0.464 & 0.069 & 2.279 & 0.401 & 2.192 & 0.376 & 0.324 & 0.016 \\
& BFLH     & \textbf{1.600} & 0.245 & 1.700 & 0.272 & 0.441 & 0.071 & 1.832 & 0.262 & 1.963 & 0.229 & 0.429 & 0.032 \\
& BFN & 1.609 & 0.267 & \textbf{1.482} & 0.313 & \textbf{0.822} & 0.019 & \textbf{1.592} & 0.089 & \textbf{1.671} & 0.352 & \textbf{0.840} & 0.018 \\
\hline

\multirow{4}{*}{70}
& FL       & 1.960 & 0.197 & 2.047 & 0.423 & 0.791 & 0.030 & 1.854 & 0.233 & 1.998 & 0.353 & 0.728 & 0.017 \\
& BFL      & 1.915 & 0.343 & 1.882 & 0.358 & 0.509 & 0.033 & 1.698 & 0.431 & 1.650 & 0.203 & 0.578 & 0.031 \\
& BFLH     & 1.758 & 0.462 & 2.004 & 0.433 & 0.673 & 0.043 & \textbf{1.563} & 0.259 & 1.696 & 0.424 & 0.456 & 0.074 \\
& BFN & \textbf{1.654} & 0.353 & \textbf{1.564} & 0.361 & \textbf{0.928} & 0.013 & 1.736 & 0.264 & \textbf{1.596} & 0.162 & \textbf{0.899} & 0.015 \\
\hline

\end{tabular}%
}
\end{table}

\begin{table}[htbp]
\centering
\caption{Average MSE,  PSE,  and $P_B$ in Case 3 (over 100 trials)}
\label{tab:case3}
\renewcommand{\arraystretch}{1.15}
\resizebox{\textwidth}{!}{%
\begin{tabular}{c l c c c c c c c c c c c c}
\hline
& & \multicolumn{6}{c}{$\rho = 0$} & \multicolumn{6}{c}{$\rho = 0.5$} \\
\cline{3-8} \cline{9-14}
$n$ & Method
& MSE & (sd) & PSE & (sd) & $P_B$ & (sd)
& MSE & (sd) & PSE & (sd) & $P_B$ & (sd) \\
\hline

\multirow{4}{*}{100}
& FL       & 3.144 & 0.369 & 3.237 & 0.390 & 0.433 & 0.057 & 0.405 & 0.585 & 0.489 & 0.446 & 0.792 & 0.053 \\
& BFL      & 3.259 & 0.523 & 3.412 & 0.316 & 0.486 & 0.095 & 0.362 & 0.522 & 0.507 & 0.551 & 0.815 & 0.016 \\
& BFLH     & 3.408 & 0.412 & 3.587 & 0.289 & 0.833 & 0.026 & 0.330 & 0.323 & 0.395 & 0.390 & 0.658 & 0.057 \\
& BFN & \textbf{3.049} & 0.293 & \textbf{2.991} & 0.205 & \textbf{0.922} & 0.040 & \textbf{0.320} & 0.046 & \textbf{0.223} & 0.317 & \textbf{0.863} & 0.033 \\
\hline

\multirow{4}{*}{200}
& FL       & 2.890 & 0.190 & 5.811 & 0.607 & 0.573 & 0.066 & 0.278 & 0.156 & 0.252 & 0.451 & 0.812 & 0.018 \\
& BFL      & 2.736 & 0.230 & 2.526 & 0.337 & 0.677 & 0.109 & \textbf{0.216} & 0.446 & 0.284 & 0.212 & 0.892 & 0.011 \\
& BFLH     & 2.176 & 0.298 & 2.264 & 0.386 & 0.879 & 0.035 & 0.514 & 0.188 & 0.823 & 0.578 & 0.888 & 0.010 \\
& BFN & \textbf{2.167} & 0.160 & \textbf{2.088} & 0.540 & \textbf{0.911} & 0.018 & 0.303 & 0.393 & \textbf{0.239} & 0.117 & \textbf{0.899} & 0.019 \\
\hline

\end{tabular}%
}
\end{table}

\subsection{Application}
\label{subsec:real_data_analysis}

To investigate the effectiveness of BFN on real data,  we analyzed the cookie data based on near-infrared spectroscopy \citep{osborne1984biscuit}. 
This dataset consists of 700 reflectance spectra measured at 2 nm intervals from 1100 nm to 2498 nm for 72 cookie dough samples,  along with the compositional information of each sample. 
In this study,  only the flour content was used as the response variable,  while the reflectance spectra corresponding to the 300 wavelengths from 1200 nm to 2400 nm were adopted as explanatory variables. 
In addition,  the 23rd and 61st observations were removed as outliers,  following the outlier information provided with the dataset.

In this dataset, the wavelengths have a natural ordered structure, and there also exists strong correlation among adjacent wavelengths. 
For this reason, methods that perform only variable selection may select only a subset from a group of neighboring wavelengths, resulting in unstable and fragmented estimation results. 
In contrast, BFN aims to select effective wavelengths while simultaneously fusing coefficients for adjacent wavelengths having similar roles. 
Therefore, this dataset provides an appropriate real-data example for examining the effectiveness of BFN, which simultaneously performs variable selection and variable fusion.

In the analysis,  after removing the outliers,  the remaining 70 observations were randomly divided into training and test sets, consisting of 39 and 31 observations, respectively. This random split was repeated 10 times, and the results were summarized over the 10 repetitions.
We fitted BFN, FL, BFL, and BFLH to the training data. 
Note that BFLSS was not included in the analysis because it did not yield stable results. 
For evaluation, we computed the prediction mean squared error based on the test data and recorded the average model dimension. 
In this way, we examined the usefulness of BFN in the real data analysis, with a particular focus on predictive performance and model simplicity.


\begin{table}[htbp]
\centering
\caption{Comparison of the methods for the cookie data}
\label{tab:cookie}
\renewcommand{\arraystretch}{1.2}
\setlength{\tabcolsep}{18pt}
\begin{tabular}{lccc}
\hline
Method & Test MSE & sd & Average model dimension \\
\hline
FL       & 2.509 & 0.262 & 48.2 \\
BFL      & 2.346 & 0.306 & 47.0 \\
BFLH     & 2.438 & 0.272 & 34.9 \\
BFN & 2.415 & 0.313 & 30.2 \\
\hline
\end{tabular}
\end{table}
 
Table~\ref{tab:cookie} presents the comparison results for the methods. 
Although BFN does not attain the smallest test MSE, its value is comparable to those of BFL and BFLH, indicating that its predictive performance is not substantially inferior to that of the existing methods. 
In contrast, BFN achieves the smallest average model dimension and yields the most parsimonious model. 
This result suggests that BFN can perform model selection with fewer variables without substantially sacrificing predictive performance. 
Therefore, for this dataset, BFN shows a favorable balance between prediction accuracy and model simplicity.

\section{Conclusion}

We considered a Bayesian linear regression modeling that simultaneously achieves variable selection and variable fusion for ordered covariates. 
To this end,  we adopted a Bayesian linear regression with the spike-and-slab prior. 
As the slab prior,  we presented the fusion-pMOM prior according to the non-local prior. 
Some theoretical properties of the proposed method were established. 
Monte Carlo simulations and a real data analysis showed the usefulness of the proposed method.

It is of interest to further clarify the theoretical properties of the proposed method under high-dimensional asymptotics: $p$ increases with the sample size $n$. 
Incorporating variable fusion made it possible to theoretically capture the improvement in estimation accuracy for the differences between adjacent regression coefficients. 
However,  it has not been sufficiently described how the estimation accuracy of the individual regression coefficients is improved from the convergence rate in this study.
To clarify the effect of the proposed method on coefficient estimation in more detail,  more refined asymptotic and non-asymptotic analyses are important.
While this study considered only fusion between variables adjacent in a one-dimensional order,  it would be possible to extend the proposed method to more general graph structures or multidimensional adjacency relations.  
Improving computational efficiency and refining hyperparameter selection methods are also important for enhancing the applicability of the proposed method to larger-scale practical problems. 
We leave these for future research.
\section*{Acknowledgments}
A. O. was supported by JSPS KAKENHI Grant Number JP25K24377.
S. K. was supported by JSPS KAKENHI Grant Numbers JP23K11008,  JP23H00809,  and JP23H03352. 
The super-computing resource was provided by Human Genome Center, the Institute of Medical Science, the University of Tokyo.

\bibliographystyle{apalike}
\bibliography{refs}

@inproceedings{wu2021variable,
  title     = {Variable fusion for bayesian linear regression via spike-and-slab priors},
  author    = {Wu, Shengyi and Shimamura, Kaito and Yoshikawa, Kohei and Murayama, Kazuaki and Kawano, Shuichi},
  booktitle = {Proc. 13th KES Int. Conf. Intell. Decis. Technol.},
  pages     = {491--501},
  year      = {2021},
  publisher = {Springer}
}

@article{rossell2017nonlocal,
  title     = {Nonlocal priors for high-dimensional estimation},
  author    = {Rossell, David and Telesca, Donatello},
  journal   = {J. Am. Stat. Assoc.},
  volume    = {112},
  number    = {517},
  pages     = {254--265},
  year      = {2017},
  publisher = {Taylor \& Francis}
}

@article{johnson2010use,
  title     = {On the use of non-local prior densities in Bayesian hypothesis tests},
  author    = {Johnson, Valen E. and Rossell, David},
  journal   = {J. R. Stat. Soc. Ser. B. Stat. Methodol.},
  volume    = {72},
  number    = {2},
  pages     = {143--170},
  year      = {2010},
  publisher = {Oxford University Press}
}

@article{kakikawa2023bayesian,
  title     = {Bayesian fused lasso modeling via horseshoe prior},
  author    = {Kakikawa, Yuko and Shimamura, Kaito and Kawano, Shuichi},
  journal   = {Jpn. J. Stat. Data Sci.},
  volume    = {6},
  number    = {2},
  pages     = {705--727},
  year      = {2023},
  publisher = {Springer}
}

@article{Shi2019MassNonlocal,
  title     = {Model selection using mass-nonlocal prior},
  author    = {Shi, Guiling and Lim, Chae Young and Maiti, Tapabrata},
  journal   = {Stat. Probab. Lett.},
  volume    = {147},
  pages     = {36--44},
  year      = {2019},
  doi       = {10.1016/j.spl.2018.11.027}
}

@article{Kyung2010BayesFusedLasso,
  title     = {Penalized regression, standard errors, and Bayesian lassos},
  author    = {Kyung, Minjung and Gill, Jeff and Ghosh, Malay and Casella, George},
  journal   = {Bayesian Anal.},
  volume    = {5},
  number    = {2},
  pages     = {369--412},
  year      = {2010},
  doi       = {10.1214/10-BA607}
}

@article{gabriel2002structure,
  title     = {The structure of haplotype blocks in the human genome},
  author    = {Gabriel, Stacey B. and Schaffner, Stephen F. and Nguyen, Huy and Moore, Jamie M. and Roy, Jessica and Blumenstiel, Brendan and Higgins, John and DeFelice, Matthew and Lochner, Amy and Faggart, Maura and Liu-Cordero, Shau Neen and Rotimi, Charles and Adeyemo, Adebowale and Cooper, Richard and Ward, Ryk and Lander, Eric S. and Daly, Mark J. and Altshuler, David},
  journal   = {Science},
  volume    = {296},
  number    = {5576},
  pages     = {2225--2229},
  year      = {2002},
  publisher = {Am. Assoc. Adv. Sci.},
  doi       = {10.1126/science.1069424}
}

@article{mcinnes2021gwaspgx,
  title     = {Genomewide Association Studies in Pharmacogenomics},
  author    = {McInnes, Gregory and Yee, Sook Wah and Pershad, Yash and Altman, Russ B.},
  journal   = {Clin. Pharmacol. Ther.},
  volume    = {110},
  number    = {3},
  pages     = {637--648},
  year      = {2021},
  publisher = {Wiley},
  doi       = {10.1002/cpt.2349}
}

@techreport{land1996variablefusion,
  title       = {Variable fusion: a new method of adaptive signal regression},
  author      = {Land, S. and Friedman, J.},
  institution = {Department of Statistics, Stanford University},
  type        = {Technical Report},
  year        = {1996}
}

@article{tibshirani2005fusedlasso,
  title     = {Sparsity and smoothness via the fused lasso},
  author    = {Tibshirani, Robert and Saunders, Michael and Rosset, Saharon and Zhu, Ji and Knight, Keith},
  journal   = {J. R. Stat. Soc. Ser. B. Stat. Methodol.},
  volume    = {67},
  number    = {1},
  pages     = {91--108},
  year      = {2005},
  publisher = {Oxford University Press},
  doi       = {10.1111/j.1467-9868.2005.00490.x}
}

@incollection{rossell2013hdclassifiers,
  title     = {High-Dimensional Bayesian Classifiers Using Non-Local Priors},
  author    = {Rossell, David and Telesca, Donatello and Johnson, Valen E.},
  booktitle = {Stat. Models Data Anal. XV},
  publisher = {Springer},
  pages     = {305--314},
  year      = {2013}
}

@article{kass1995bayes,
  title     = {Bayes Factors},
  author    = {Kass, Robert E. and Raftery, Adrian E.},
  journal   = {J. Am. Stat. Assoc.},
  year      = {1995},
  volume    = {90},
  number    = {430},
  pages     = {773--795},
  doi       = {10.1080/01621459.1995.10476572}
}

@article{hoeting1999bma,
  title     = {Bayesian Model Averaging: A Tutorial},
  author    = {Hoeting, Jennifer A. and Madigan, David and Raftery, Adrian E. and Volinsky, Chris T.},
  journal   = {Stat. Sci.},
  year      = {1999},
  volume    = {14},
  number    = {4},
  pages     = {382--417},
  doi       = {10.1214/ss/1009212519}
}

@article{lindley1957statistical,
  title     = {A Statistical Paradox},
  author    = {Lindley, D. V.},
  journal   = {Biometrika},
  year      = {1957},
  volume    = {44},
  number    = {1-2},
  pages     = {187--192},
  doi       = {10.1093/biomet/44.1-2.187}
}

@article{bartlett1957comment,
  title     = {A Comment on D. V. Lindley's Statistical Paradox},
  author    = {Bartlett, M. S.},
  journal   = {Biometrika},
  year      = {1957},
  volume    = {44},
  number    = {3-4},
  pages     = {533--534},
  doi       = {10.1093/biomet/44.3-4.533}
}

@article{berger1987testing,
  title     = {Testing Precise Hypotheses},
  author    = {Berger, James O. and Delampady, Mohan},
  journal   = {Stat. Sci.},
  year      = {1987},
  volume    = {2},
  number    = {3},
  pages     = {317--335},
  doi       = {10.1214/ss/1177013238}
}

@article{clyde2004model,
  title     = {Model Uncertainty},
  author    = {Clyde, Merlise A. and George, Edward I.},
  journal   = {Stat. Sci.},
  year      = {2004},
  volume    = {19},
  number    = {1},
  pages     = {81--94},
  doi       = {10.1214/088342304000000035}
}

@book{fan2020statistical,
  title     = {Statistical foundations of data science},
  author    = {Fan, Jianqing and Li, Runze and Zhang, Cun-Hui and Zou, Hui},
  year      = {2020},
  publisher = {Chapman and Hall/CRC}
}

@book{buhlmann2016handbook,
  title     = {Handbook of big data},
  author    = {B{\"u}hlmann, Peter and Drineas, Petros and Kane, Michael and van der Laan, Mark},
  year      = {2016},
  publisher = {CRC Press}
}

@article{george1993variable,
  title     = {Variable Selection Via Gibbs Sampling},
  author    = {George, Edward I. and McCulloch, Robert E.},
  journal   = {J. Am. Stat. Assoc.},
  volume    = {88},
  number    = {423},
  pages     = {881--889},
  year      = {1993}
}

@article{walker1969asymptotic,
  title     = {On the Asymptotic Behavior of Posterior Distributions},
  author    = {Walker, A. M.},
  journal   = {J. R. Stat. Soc. Ser. B Methodol.},
  volume    = {31},
  number    = {1},
  pages     = {80--88},
  year      = {1969}
}

@techreport{dawid1999trouble,
  title       = {The Trouble With Bayes Factors},
  author      = {Dawid, A. P.},
  year        = {1999},
  institution = {University College London},
  address     = {London},
  type        = {Technical Report}
}

@article{osborne1984biscuit,
  author    = {Osborne, Brian G. and Fearn, Tom and Miller, A. R. and Douglas, S.},
  title     = {Application of near infrared reflectance spectroscopy to the compositional analysis of biscuits and biscuit doughs},
  journal   = {J. Sci. Food Agric.},
  year      = {1984},
  volume    = {35},
  number    = {1},
  pages     = {99--105},
  doi       = {10.1002/jsfa.2740350116}
}

@article{rockova2018spike,
  author    = {Ro{\v{c}}kov{\'a}, Veronika and George, Edward I.},
  title     = {The Spike-and-Slab LASSO},
  journal   = {J. Am. Stat. Assoc.},
  year      = {2018},
  volume    = {113},
  number    = {521},
  pages     = {431--444},
  doi       = {10.1080/01621459.2016.1260469}
}

@article{ishwaran2005spike,
  author    = {Ishwaran, Hemant and Rao, J. Sunil},
  title     = {Spike and Slab Variable Selection: Frequentist and Bayesian Strategies},
  journal   = {Ann. Stat.},
  year      = {2005},
  volume    = {33},
  number    = {2},
  pages     = {730--773}
}

@article{Shin2018,
  author  = {Shin, Minsuk and Bhattacharya, Anirban and Johnson, Valen E.},
  title   = {Scalable Bayesian Variable Selection Using Nonlocal Prior Densities in Ultrahigh-Dimensional Settings},
  journal = {Statistica Sinica},
  volume  = {28},
  number  = {2},
  pages   = {1053--1078},
  year    = {2018}
}

@misc{miyake2026supplement,
  author = {Miyake, Junya and Okazaki, Akira and Kawano, Shuichi},
  title = {Supplementary Material for "Variable Fusion and Selection via a
Spike-and-Slab Approach with Nonlocal Priors"},
  year = {2026},
  note = {DOI: to be provided by the typesetter}
}

\appendix
\renewcommand{\theequation}{\thesection\arabic{equation}}

\section{Impossibility of a Uniform Prior under Componentwise Independence}
\label{app:uniform_prior_impossible}
A proof of Proposition~\ref{prop:unavailable} is as follows.

\begin{proof}
It suffices to consider the two-dimensional case.
For the two-dimensional regression problem with two explanatory variables,  let $\mathcal{M}_2$ denote the model space that simultaneously accounts for variable selection and variable fusion. We define the mapping
\begin{align}
\phi:\{-1, 0\}\times\{-1, 0, 1\}\to\mathcal{M}_2
\end{align}
that maps an indicator vector to the corresponding model.
Assume a product-form prior on the components of $\bm{\delta}$,  and define
\begin{align}
q^{(j)}_s\coloneqq \Pr(\delta_j=s)\qquad (j=1, 2, \ s\in\{-1, 0, 1\}).
\end{align}

Define the five models by their representatives:
\begin{align*}
m_{00}\coloneqq \phi(0, 0), \quad
m_{10}\coloneqq \phi(-1, 0), \quad
m_{01}\coloneqq \phi(0, -1), \quad
m_{\mathrm{F}}\coloneqq \phi(-1, 1), \quad
m_{\mathrm{S}}\coloneqq \phi(-1, -1).
\label{eq:def_models_by_rep}
\end{align*}
Their preimages are given by
\begin{align}
\phi^{-1}(m_{00})&=\{(0, 0), (0, 1)\},  \quad
\phi^{-1}(m_{10})=\{(-1, 0)\},  \quad
\phi^{-1}(m_{01})=\{(0, -1)\},  \\
\phi^{-1}(m_{\mathrm{F}})&=\{(-1, 1)\},  \quad
\phi^{-1}(m_{\mathrm{S}})=\{(-1, -1)\}. \notag\label{eq:preimageS}
\end{align}
Therefore,  the induced prior on the model space,  $\pi(m)\coloneqq\Pr(\phi(\bm{\delta})=m)$,  satisfies
\begin{align}
\pi(m_{00})&=q^{(1)}_{0}\{q^{(2)}_{0}+q^{(2)}_{1}\},  \quad
\pi(m_{10})=q^{(1)}_{-1}q^{(2)}_{0},  \quad
\pi(m_{01})=q^{(1)}_{0}q^{(2)}_{-1},  \\
\pi(m_{\mathrm{F}})&=q^{(1)}_{-1}q^{(2)}_{1},  \quad
\pi(m_{\mathrm{S}})=q^{(1)}_{-1}q^{(2)}_{-1}. \notag\label{eq:piS}
\end{align}

Now suppose that the prior on $\mathcal{M}_2$ is uniform,  that is, 
\begin{align}
\pi(m_{00})=\pi(m_{10})=\pi(m_{01})=\pi(m_{\mathrm{F}})=\pi(m_{\mathrm{S}}).
\end{align}
From these equalities,  we obtain $q^{(2)}_{0}=q^{(2)}_{1}=q^{(2)}_{-1}$ and $q^{(1)}_{0}=q^{(1)}_{-1}$. Thus, 
\begin{align}
\pi(m_{00})
=q^{(1)}_{0}\{q^{(2)}_{0}+q^{(2)}_{1}\}
=2q^{(1)}_{0}q^{(2)}_{0}, 
\quad
\pi(m_{10})
=q^{(1)}_{0}q^{(2)}_{0}, 
\end{align}
which contradicts $\pi(m_{00})=\pi(m_{10})$.
Hence,  under a componentwise independent prior on $\bm{\delta}$,  it is impossible to assign equal prior probability to all models in $\mathcal{M}_2$.
\end{proof}

\section{A Closed-Form Yielding Equal Prior Probabilities}
\label{app:uniform_prior}

A proof of Proposition~\ref{prop:uniform_objective_prior} is as follows.
\begin{proof}
Let $\mathcal{M}_p$ denote the set of admissible models under consideration. Equivalently,  via the one-to-one correspondence with the indicator vector,  $\mathcal{M}_p$ can be represented as the set of $\bm{\delta}=(\delta_1, \ldots, \delta_p)^\top\in\{-1, 0\}\times\{-1, 0, 1\}^{p-1}$ satisfying
\begin{align}
(\delta_{j-1}, \delta_j)\neq(0, 1), 
\qquad j=2, \ldots, p.
\label{eq:admissible_constraint_app}
\end{align}
Let $(F_n)_{n\ge 1}$ be the Fibonacci sequence defined by
\begin{align}
F_1=F_2=1, \qquad F_{n+2}=F_{n+1}+F_n.
\label{eq:fib_def_app}
\end{align}

For $r\ge 2$ and $a\in\{-1, 0, 1\}$,  define
\begin{align}
N_r(a)
&\coloneqq
\#\Bigl\{(\delta_{1}, \ldots, \delta_{r})\in\{-1, 0, 1\}^r:
\delta_r=a, \ 
(\delta_{s-1}, \delta_s)\neq(0, 1)\ \text{for } s=2, \ldots, r
\Bigr\}. \label{eq:Nr_def_app}
\end{align}
For convenience,  set
$
N_1(a)
\coloneqq 1$.
The constraint \eqref{eq:admissible_constraint_app} implies that,  for $r\ge 2$, 
\begin{align}
N_r(-1) &= N_{r-1}(-1)+N_{r-1}(0)+N_{r-1}(1),  \nonumber\\
N_r(0)  &= N_{r-1}(-1)+N_{r-1}(0),  \label{eq:Nr_rec_app}\\
N_r(1)  &= N_{r-1}(-1)+N_{r-1}(0)+N_{r-1}(1). \nonumber
\end{align}
By symmetry,  $N_r(-1)=N_r(1)$ for all $r$. Let $U_r\coloneqq N_r(-1)=N_r(1)$ and
$V_r\coloneqq N_r(0)$. Then $U_1=V_1=1$ and \eqref{eq:Nr_rec_app} becomes
\begin{align}
U_r &= 2U_{r-1}+V_{r-1},  \nonumber\\
V_r &= U_{r-1}+V_{r-1}. \label{eq:UV_rec_app}
\end{align}
Using the equations in \eqref{eq:fib_def_app},  one verifies by induction that
\begin{align}
U_r = F_{2r+2}, 
\qquad
V_r = F_{2r+1}, 
\qquad r\ge 1.
\label{eq:Nr_closed_app}
\end{align}

For $j=2, \ldots, p$,  recall $r_j=p-j+1$. The parameters can be written as
\begin{align}
\Pr(\delta_j=b\mid \delta_{j-1}=a)
=
\begin{cases}
\dfrac{N_{r_j-1}(b)}{N_{r_j}(a)},  & \text{if }(a, b)\neq(0, 1), \\[6pt]
0,  & \text{if }(a, b)=(0, 1), 
\end{cases}
\qquad a, b\in\{-1, 0, 1\}.
\label{eq:trans_by_count_app}
\end{align}
Moreover,  \eqref{eq:pi_closed_app} is equivalent to
\begin{align}
\Pr(\delta_1=b)
=
\frac{N_{p-1}(b)}{\sum_{c\in\{-1, 0, 1\}}N_{p-1}(c)}, 
\qquad b\in\{-1, 0\}, 
\label{eq:init_by_count_app}
\end{align}
since \eqref{eq:Nr_closed_app} implies
\begin{align}
\sum_{c\in\{-1, 0, 1\}}N_{p-1}(c)
=2F_{2p}+F_{2p-1}
=F_{2p+2}.
\label{eq:sum_N_app}
\end{align}

Fix any $\bm{\delta}\in\mathcal{M}_p$. By the chain rule and \eqref{eq:trans_by_count_app}, 
\begin{align}
\Pr(\bm{\delta})
=
\Pr(\delta_1)\prod_{j=2}^p \Pr(\delta_j\mid \delta_{j-1}) 
=
\frac{N_{p-1}(\delta_1)}{\sum_{c\in\{-1, 0, 1\}}N_{p-1}(c)}
\prod_{j=2}^p
\frac{N_{r_j-1}(\delta_j)}{N_{r_j}(\delta_{j-1})}.
\label{eq:telescoping_app_1}
\end{align}
Since $r_2=p-1$ and $r_{j-1}-1=r_j$ for $j=3, \ldots, p$,  the product telescopes,  and hence
\begin{align}
\Pr(\bm{\delta})
=
\frac{1}{\sum_{c\in\{-1, 0, 1\}}N_{p-1}(c)}
=
\frac{1}{F_{2p+2}}.
\label{eq:telescoping_app_2}
\end{align}
Therefore,  $\Pr(\bm{\delta})$ is constant over $\bm{\delta}\in\mathcal{M}_p$,  proving the claim.
\end{proof}

\section{Normalizing Constant}
\label{app:normalize_constant}

In this appendix,  we derive \eqref{eq:normalizing_constant}.
Let $\bm{Z}=(Z_1, \dots, Z_p)^\top$ be a $p$-dimensional standard normal random vector,  that is, 
$\bm{Z}\sim \mathrm{N}(\bm{0}, I_p)$.
Define the sequence $\{\Phi_k\}_{k\ge 2}$ by
\begin{align}
\Phi_k
&:= \mathbb{E}\!\left[
\left\{\prod_{j=1}^{k} Z_j^2\right\}
\left\{\prod_{j=2}^{k} (Z_j-Z_{j-1})^2\right\}
\right]. \label{eq:Phi_def}
\end{align}
Note that the normalizing constant $C_p$ in \eqref{eq:normalizing_constant} satisfies
$C_p=\Phi_p^{-1}$.
For use in the subsequent calculations,  we also define the sequence
$\{\Psi_k\}_{k\ge 2}$ by
\begin{align}
\Psi_k
&:= \mathbb{E}\!\left[
\left\{\prod_{j=1}^{k-1} Z_j^2\right\} Z_k^4
\left\{\prod_{j=2}^{k} (Z_j-Z_{j-1})^2\right\}
\right].
\end{align}
Furthermore,  let $\mu_k:=\mathbb{E}[Z^k]$ denote the $k$th moment of a standard normal random variable
$Z\sim\mathrm{N}(0, 1)$.
Then,  $\mu_2=1$,  $\mu_4=3$,  and $\mu_6=15$,  while all odd-order moments vanish.

The sequences $\{\Phi_k\}_{k\ge 2}$ and $\{\Psi_k\}_{k\ge 2}$ satisfy the following recursion, 
\begin{align}
\Phi_k
&= \mathbb{E}\!\left[\left\{Z_k^2(Z_k-Z_{k-1})^2\right\}
\left\{\prod_{j=1}^{k-1} Z_j^2\right\}
\left\{\prod_{j=2}^{k-1} (Z_j-Z_{j-1})^2\right\}
\right]
= \mu_4\Phi_{k-1}+\mu_2\Psi_{k-1}, \\
\Psi_k
&= \mathbb{E}\!\left[\left\{Z_k^4(Z_k-Z_{k-1})^2\right\}
\left\{\prod_{j=1}^{k-1} Z_j^2\right\}
\left\{\prod_{j=2}^{k-1} (Z_j-Z_{j-1})^2\right\}
\right]
= \mu_6\Phi_{k-1}+\mu_4\Psi_{k-1}.
\end{align}
Solving these recursion yields the closed-form expression
\begin{align}
\Phi_p
= \frac{(3+\sqrt{15})^{p}-(3-\sqrt{15})^{p}}{2\sqrt{15}}. \label{eq:Phi_closed_form}
\end{align}
Therefore,  the normalizing constant is given by
\begin{align}
C_p
= \Phi_p^{-1}
= \frac{2\sqrt{15}}{(3+\sqrt{15})^{p}-(3-\sqrt{15})^{p}}. \label{eq:Cp_closed_form}
\end{align}

\section{Proofs of the Theoretical Properties}
\label{app:proofs_of_theoretical_properties}

\subsection{Proof of Proposition~\ref{thm:BF_convergence}}

To establish Proposition~\ref{thm:BF_convergence}, we first present the following lemma.
\begin{lemma}[Local convergence rate of the posterior mode obtained from a Taylor expansion around the MLE]\label{lem:mode_rate_from_mle_expansion}
Suppose that the same assumptions as in
Proposition~\ref{prop:sparse_component_mean_local} are satisfied.
Under a fixed model $\bm{\delta}_k$, let
$(\hat{\bm{\theta}}, \hat{\phi})$ denote the MLE of the likelihood, and let
$(\tilde{\bm{\theta}}, \tilde{\phi})$ denote the posterior mode of the
log-posterior distribution. Also, define
$\tilde{\Delta}_i^-:=\tilde{\theta}_i-\tilde{\theta}_{i-1}$ and
$\tilde{\Delta}_i^+:=\tilde{\theta}_{i+1}-\tilde{\theta}_i$.

Then, the following statements hold.
\begin{itemize}
\item[(i)] Suppose that $\theta_i^*\neq 0$. Moreover, suppose that
$\theta_i^*-\theta_{i-1}^*\neq 0$ if $i\in\Lambda$, and that
$\theta_{i+1}^*-\theta_i^*\neq 0$ if $i+1\in\Lambda$.
Then, $\tilde{\theta}_i-\hat{\theta}_i=O_p(n^{-1})$.

\item[(ii)] Suppose that $\theta_i^*=0$. Moreover, suppose that
$\theta_i^*-\theta_{i-1}^*\neq 0$ if $i\in\Lambda$, and that
$\theta_{i+1}^*-\theta_i^*\neq 0$ if $i+1\in\Lambda$.
Then, $\tilde{\theta}_i=O_p(n^{-1/2})$ and
$\tilde{\theta}_i-\hat{\theta}_i=O_p(n^{-1/2})$.

\item[(iii)] Suppose that $\theta_i^*=0$. Moreover, suppose that exactly one of
the two adjacent differences is equal to zero; that is, either
$i\in\Lambda$, $\theta_i^*-\theta_{i-1}^*=0$, $i+1\in\Lambda$, and
$\theta_{i+1}^*-\theta_i^*\neq 0$, or
$i\in\Lambda$, $\theta_i^*-\theta_{i-1}^*\neq 0$, $i+1\in\Lambda$, and
$\theta_{i+1}^*-\theta_i^*=0$.
Then, $\tilde{\theta}_i=O_p(n^{-1/3})$ and
$\tilde{\theta}_i-\hat{\theta}_i=O_p(n^{-1/3})$. Moreover, the adjacent
difference that converges to zero is also $O_p(n^{-1/3})$.

\item[(iv)] Suppose that $\theta_i^*=0$. Moreover, suppose that
$i\in\Lambda$, $i+1\in\Lambda$,
$\theta_i^*-\theta_{i-1}^*=0$, and
$\theta_{i+1}^*-\theta_i^*=0$.
Then, $\tilde{\theta}_i=O_p(n^{-1/4})$ and
$\tilde{\theta}_i-\hat{\theta}_i=O_p(n^{-1/4})$. Moreover,
$\tilde{\Delta}_i^-=O_p(n^{-1/4})$ and $\tilde{\Delta}_i^+=O_p(n^{-1/4})$.
\end{itemize}
\end{lemma}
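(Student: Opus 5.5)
The plan is to compare the score equation of the log-posterior with that of the log-likelihood. Under the fixed model $\bm{\delta}_k$, the log-posterior equals
\[
\ell_n(\bm{\theta},\phi)+\sum_j 2\log|\theta_j|+\sum_{j\in\Lambda}2\log|\theta_j-\theta_{j-1}|-\tfrac{1}{2\sigma^2}\bm{\theta}^\top\bm{\theta}+\text{const}.
\]
Here $\ell_n$ is the log-likelihood. By the Walker condition, $-\nabla^2\ell_n/n\to I(\bm{\theta}^*,\phi^*)$, which is positive definite, and $\hat{\bm{\theta}}-\bm{\theta}^*=O_p(n^{-1/2})$. Setting the gradient to zero at the mode and Taylor expanding $\nabla\ell_n$ around the MLE gives
\[
n I\,(\tilde{\bm{\theta}}-\hat{\bm{\theta}})\,(1+o_p(1)) = g(\tilde{\bm{\theta}}),
\qquad
g_i(\bm{\theta})=\frac{2}{\theta_i}+\frac{2}{\theta_i-\theta_{i-1}}\mathbf{1}_{\{i\in\Lambda\}}-\frac{2}{\theta_{i+1}-\theta_i}\mathbf{1}_{\{i+1\in\Lambda\}}+O(1).
\]
Each case then reduces to balancing the quadratic term $n\,(\tilde{\theta}_i-\hat{\theta}_i)$ against the singular terms of $g_i$. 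I would argue on the mode near the MLE, i.e.\ the local maximizer in the same orthant and ordering region as $\hat{\bm{\theta}}$. Consistency of that mode follows from Condition A and B: the penalty is $O(\log n)$ compared with the $O(n)$ curvature.

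\textbf{Case (i).} All relevant denominators $\tilde{\theta}_i$ and $\tilde{\theta}_i-\tilde{\theta}_{i\pm1}$ converge to nonzero limits, so $g_i(\tilde{\bm{\theta}})=O_p(1)$. Inverting $nI$ then gives $\tilde{\theta}_i-\hat{\theta}_i=O_p(n^{-1})$. The only subtlety is that $I^{-1}$ couples coordinates. A coordinate at a singular point could contaminate coordinate $i$ through the off-diagonal entries of $I^{-1}$. I would handle this by noting that the singular score contributions from other coordinates, once multiplied by $I^{-1}/n$, are at most of the order of those coordinates' displacements. Statement (i) should therefore be read as concerning the diagonal effect, or else one accepts the weakest rate that enters. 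I expect this coupling to be the first delicate point, and I would address it by working coordinatewise with the profile, fixing the other coordinates at their modes.

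\textbf{Case (ii).} Here $\theta_i^*=0$ and $\hat{\theta}_i=O_p(n^{-1/2})$. The mode equation reads $n c\,(\tilde{\theta}_i-\hat{\theta}_i)\approx 2/\tilde{\theta}_i$. Writing $\tilde{\theta}_i=u n^{-1/2}$ and $\hat{\theta}_i=v n^{-1/2}$ with $v=O_p(1)$ gives $c u(u-v)=2$. The root on the side of $v$ is $O_p(1)$, so $\tilde{\theta}_i=O_p(n^{-1/2})$, and the difference is also $O_p(n^{-1/2})$.

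\textbf{Cases (iii) and (iv).} These carry an additional singular term. In (iii), say $\Delta_i^*=0$ with $\theta_{i-1}^*=0$. The two coordinates $\theta_i$ and $\theta_{i-1}$ together contribute the terms $2/\theta_i$, $2/\theta_{i-1}$ and $2/(\theta_i-\theta_{i-1})$. I would use the ansatz that all of $\tilde{\theta}_i$, $\tilde{\theta}_{i-1}$ and their difference are of order $n^{-a}$. The quadratic term $n\cdot n^{-a}\cdot(\text{dev})$ must balance $n^{a}$, and deviations from the $n^{-1/2}$-scale MLE are of the same order $n^{-a}$. This gives $n^{1-a}\cdot n^{-a}\asymp n^{a}$, i.e.\ $a=1/3$. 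In (iv), both differences adjacent to $\theta_i$ are singular. The structure is then a three-coordinate cluster, for which the same bookkeeping with an extra nesting level yields $a=1/4$. This exponent is the step I expect to be the main obstacle. Justifying it requires a rescaling argument: set $\bm{\theta}_{\text{cluster}}=n^{-a}\bm{u}$, show that the rescaled mode equation converges to a deterministic system $I\bm{u}=\nabla\log Q(\bm{u})$ restricted to the cluster, with a unique solution in the relevant region, and conclude that $\bm{u}=O_p(1)$. I would try this first via strict concavity of the rescaled objective on each connected sign or ordering region. Tightness would then follow from a coercivity bound: $\log Q\to-\infty$ on the boundary, and the quadratic term dominates at infinity. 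Finally, the bounds $\tilde{\theta}_i-\hat{\theta}_i=O_p(n^{-a})$ follow since $\hat{\theta}_i=O_p(n^{-1/2})=o_p(n^{-a})$.
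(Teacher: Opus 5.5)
\textbf{Gap in (iii)--(iv): the balance producing the exponents $1/3$ and $1/4$ is inconsistent.} Your set-up and your treatment of (ii) follow the paper's route: the first-order condition for the mode, with the likelihood score expanded at the MLE and balanced against the singular prior scores. The step that produces the exponents in (iii) and (iv), however, does not hold up. In ``$n^{1-a}\cdot n^{-a}\asymp n^{a}$'' the left side is the size of the quadratic term $n\|\tilde{\bm\theta}-\hat{\bm\theta}\|^2$ itself, whereas $n^{a}$ is the size of a prior \emph{score}. Match score against score in $nI(\tilde{\bm\theta}-\hat{\bm\theta})\approx g(\tilde{\bm\theta})$ instead. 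The left side is then $n^{1-a}$. The right side is $n^{a}$, because every singular term of $g$ is the gradient of some $2\log|\ell(\bm\theta)|$ with $\ell$ linear, and so is homogeneous of degree $-1$ however many factors the cluster contains. This forces $a=1/2$ in (ii), (iii) and (iv) alike.

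For the same reason your rescaling step fails as formulated. Take $\bm\theta_{\mathrm{cl}}=n^{-a}\bm u$ with $a<1/2$ and divide the mode equation by $n^{1-a}$. This gives $I\bm u+O_p(n^{a-1/2})=n^{2a-1}\nabla\log Q(\bm u)$, whose limit is $I\bm u=\bm 0$ rather than $I\bm u=\nabla\log Q(\bm u)$. Its only solution, $\bm u=\bm 0$, lies on the singular set.

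The repair is to run your own program at $a=1/2$. Write $\bm\theta=\hat{\bm\theta}+n^{-1/2}\bm w$ and set $\hat{\bm z}=\sqrt n(\hat{\bm\theta}-\bm\theta^*)=O_p(1)$. The log-posterior minus a constant converges to $-\tfrac12\bm w^\top I\bm w+\sum 2\log|\ell(\hat{\bm z}+\bm w)|$, the sum running over the factors of $Q$ that vanish at $\bm\theta^*$. This limit is strictly concave and coercive on the cell containing $\bm w=\bm 0$, so its maximizer is $O_p(1)$. Hence $\tilde\theta_i$, $\tilde\Delta_i^{\pm}$ and $\tilde\theta_i-\hat\theta_i$ are all $O_p(n^{-1/2})$ in (ii)--(iv). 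That implies the stated $O_p(n^{-1/3})$ and $O_p(n^{-1/4})$ bounds, which are therefore true but not sharp.

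The paper reaches $1/3$ and $1/4$ differently. It multiplies the first-order condition by $\tilde\theta_i\tilde\Delta_i^-$ (resp.\ $\tilde\theta_i\tilde\Delta_i^-\tilde\Delta_i^+$). It then bounds what remains on the right by $O_p(1)$, although that remainder is of the order of $\tilde\theta_i$. Finally it assumes all factors share one rate. Your exponents agree with the paper's, but not through a consistent balance.

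\textbf{On (i): the coupling you flagged is real, and profiling does not remove it.} With the other coordinates fixed at their modes, the $i$-th equation still contains $n\sum_{j\neq i}I_{ij}(\tilde\theta_j-\hat\theta_j)$. This is of order $n^{1/2}$ as soon as $\bm\delta_k$ contains a coordinate $j$ with $\theta_j^*=0$. Equivalently, with $H=(h_{ij})$ the curvature matrix in the first-order condition, $\tilde\theta_i-\hat\theta_i$ picks up the term $2(H^{-1})_{ij}/\tilde\theta_j$. Here $(H^{-1})_{ij}\asymp n^{-1}$ whenever $(I^{-1})_{ij}\neq 0$, and $\tilde\theta_j\asymp n^{-1/2}$.

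For instance, take $p=2$, $\bm\theta^*=(1,0)^\top$, the full model, and $\bm x_1^\top\bm x_2/n\to\rho\neq 0$. The hypotheses of (i) hold for $i=1$, yet $\tilde\theta_1-\hat\theta_1$ is of exact order $n^{-1/2}$. So (i) cannot be proved as stated without an extra assumption. Examples would be $\bm\delta_k=\bm\delta_t$, or that no factor of $Q$ vanishing at $\bm\theta^*$ is coupled to coordinate $i$ through $I^{-1}$. The paper's proof passes over this point with ``by the regularity conditions''. Your fallback of accepting the weakest rate gives the correct order, $O_p(n^{-1/2})$, not the claimed $O_p(n^{-1})$.
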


\begin{proof}
By taking the second-order Taylor expansion of the log-likelihood part around
the MLE $(\hat{\bm{\theta}}, \hat{\phi})$, while keeping the log-prior terms
explicit, differentiating the resulting log-posterior with respect to
$\theta_i$, and setting it equal to zero at the posterior mode
$(\tilde{\bm{\theta}}, \tilde{\phi})$, we obtain
\begin{align}
0
=
-\sum_{j=1}^{p_k} h_{ij}(\tilde{\theta}_j-\hat{\theta}_j)
+\frac{2}{\tilde{\theta}_i}
+\frac{2\mathbf{1}_{\{i\in\Lambda\}}}{\tilde{\Delta}_i^-}
-\frac{2\mathbf{1}_{\{i+1\in\Lambda\}}}{\tilde{\Delta}_i^+}
-\frac{\tilde{\theta}_i}{\tau\tilde{\phi}}
+r_{n, i}.
\label{eq:FOC_theta_i}
\end{align}
where $r_{n, i}=o_p(1)$.
Moreover, under the same assumptions as in Proposition~\ref{thm:BF_convergence},
we have $\sum_{j=1}^{p_k} h_{ij}/n=O_p(1)$.

In case (i), $\tilde{\theta}_i\overset{p}{\to}\theta_i^*\neq 0$.
Therefore, the second, third, fourth, and fifth terms on the right-hand side
of \eqref{eq:FOC_theta_i} are all $O_p(1)$.
Hence,
\begin{align}
\sum_{j=1}^{p_k} h_{ij}(\tilde{\theta}_j-\hat{\theta}_j)=O_p(1)
\end{align}
holds.
Dividing both sides by $n$ yields
\begin{align}
\frac{1}{n}\sum_{j=1}^{p_k} h_{ij}(\tilde{\theta}_j-\hat{\theta}_j)=O_p(n^{-1}).
\end{align}
Therefore, by the regularity conditions, we obtain
\begin{align}
\tilde{\theta}_i-\hat{\theta}_i=O_p(n^{-1}).
\end{align}

In case (ii),  $\tilde{\theta}_i\overset{p}{\to}0$. 
Multiplying both sides of the above equation by $\tilde{\theta}_i$,  we obtain
\begin{align}
\label{eq:Zest}
0
&=
-\tilde{\theta}_i\sum_{j=1}^{p_k} h_{ij}(\tilde{\theta}_j-\hat{\theta}_j)
+2
+\frac{2\mathbf{1}_{\{i\in\Lambda\}}\tilde{\theta}_i}{\tilde{\Delta}_i^-}
-\frac{2\mathbf{1}_{\{i+1\in\Lambda\}}\tilde{\theta}_i}{\tilde{\Delta}_i^+}
-\frac{\tilde{\theta}_i^2}{\tau\tilde{\phi}}
+\tilde{\theta}_i r_{n, i}.
\end{align}
Since all terms after the second term on the right-hand side are $O_p(1)$,  
\begin{align}
\tilde{\theta}_i\sum_{j=1}^{p_k} h_{ij}(\tilde{\theta}_j-\hat{\theta}_j)=O_p(1)
\end{align}
holds. 
Dividing both sides by $n$ gives
\begin{align}
\tilde{\theta}_i\cdot \frac{1}{n}\sum_{j=1}^{p_k} h_{ij}(\tilde{\theta}_j-\hat{\theta}_j)=O_p(n^{-1}).
\end{align}
Therefore,   we obtain $\tilde{\theta}_i=O_p(n^{-1/2})$ and $\tilde{\theta}_i-\hat{\theta}_i=O_p(n^{-1/2})$.

For case (iii),   it is sufficient to show the case where $i\in\Lambda$,  $\theta_i^*-\theta_{i-1}^*=0$,  $i+1\in\Lambda$,  and $\theta_{i+1}^*-\theta_i^*\neq 0$. In this case,  $\tilde{\theta}_i\overset{p}{\to}0$,  $\tilde{\Delta}_i^-\overset{p}{\to}0$,  and $\tilde{\Delta}_i^+=\Theta_p(1)$. Therefore,  multiplying both sides of equation \eqref{eq:Zest} by $\tilde{\theta}_i\tilde{\Delta}_i^-$,  we obtain
\begin{align}
0
&=
-\tilde{\theta}_i\tilde{\Delta}_i^-\sum_{j=1}^{p_k} h_{ij}(\tilde{\theta}_j-\hat{\theta}_j)
+2\tilde{\Delta}_i^-
+2\tilde{\theta}_i
-\frac{2\mathbf{1}_{\{i+1\in\Lambda\}}\tilde{\theta}_i\tilde{\Delta}_i^-}{\tilde{\Delta}_i^+}
-\frac{\tilde{\theta}_i^2\tilde{\Delta}_i^-}{\tau\tilde{\phi}}
+\tilde{\theta}_i\tilde{\Delta}_i^- r_{n,  i}.
\end{align}
Since all terms after the second term on the right-hand side are $O_p(1)$,  
\begin{align}
\tilde{\theta}_i\tilde{\Delta}_i^-\sum_{j=1}^{p_k} h_{ij}(\tilde{\theta}_j-\hat{\theta}_j)=O_p(1)
\end{align}
holds. Dividing both sides by $n$ gives
\begin{align}
\tilde{\theta}_i\tilde{\Delta}_i^-\cdot \frac{1}{n}\sum_{j=1}^{p_k} h_{ij}(\tilde{\theta}_j-\hat{\theta}_j)=O_p(n^{-1}).
\end{align}
Here,  $\sum_{j=1}^{p_k} h_{ij}(\tilde{\theta}_j-\hat{\theta}_j)/n$ has the same order as $\tilde{\theta}_i-\hat{\theta}_i$,  and $\tilde{\theta}_i$,  $\tilde{\Delta}_i^-$,  and $\tilde{\theta}_i-\hat{\theta}_i$ have the same convergence rate. Therefore,  we obtain $\tilde{\theta}_i=O_p(n^{-1/3})$,  $\tilde{\theta}_i-\hat{\theta}_i=O_p(n^{-1/3})$,  and $\tilde{\Delta}_i^-=O_p(n^{-1/3})$. 

In case (iv),  we have $\tilde{\theta}_i\overset{p}{\to}0$,  $\tilde{\Delta}_i^-\overset{p}{\to}0$,  and $\tilde{\Delta}_i^+\overset{p}{\to}0$. Therefore,  multiplying both sides of equation \eqref{eq:Zest} by $\tilde{\theta}_i\tilde{\Delta}_i^-\tilde{\Delta}_i^+$,  we obtain
\begin{align}
0
&=
-\tilde{\theta}_i\tilde{\Delta}_i^-\tilde{\Delta}_i^+\sum_{j=1}^{p_k} h_{ij}(\tilde{\theta}_j-\hat{\theta}_j)
+2\tilde{\Delta}_i^-\tilde{\Delta}_i^+
+2\tilde{\theta}_i\tilde{\Delta}_i^+
-2\tilde{\theta}_i\tilde{\Delta}_i^-
-\frac{\tilde{\theta}_i^2\tilde{\Delta}_i^-\tilde{\Delta}_i^+}{\tau\tilde{\phi}}
+\tilde{\theta}_i\tilde{\Delta}_i^-\tilde{\Delta}_i^+ r_{n,  i}.
\end{align}
Since all terms after the second term on the right-hand side are $O_p(1)$,  
\begin{align}
\tilde{\theta}_i\tilde{\Delta}_i^-\tilde{\Delta}_i^+\sum_{j=1}^{p_k} h_{ij}(\tilde{\theta}_j-\hat{\theta}_j)=O_p(1)
\end{align}
holds. Dividing both sides by $n$ gives
\begin{align}
\tilde{\theta}_i\tilde{\Delta}_i^-\tilde{\Delta}_i^+\cdot \frac{1}{n}\sum_{j=1}^{p_k} h_{ij}(\tilde{\theta}_j-\hat{\theta}_j)=O_p(n^{-1}).
\end{align}
Here,  $(1/n)\sum_{j=1}^{p_k} h_{ij}(\tilde{\theta}_j-\hat{\theta}_j)$ has the same order as $\tilde{\theta}_i-\hat{\theta}_i$,  and moreover $\tilde{\theta}_i$,  $\tilde{\Delta}_i^-$,  $\tilde{\Delta}_i^+$,  and $\tilde{\theta}_i-\hat{\theta}_i$ have the same convergence rate. Therefore,  we obtain $\tilde{\theta}_i=O_p(n^{-1/4})$,  $\tilde{\theta}_i-\hat{\theta}_i=O_p(n^{-1/4})$,  $\tilde{\Delta}_i^-=O_p(n^{-1/4})$,  and $\tilde{\Delta}_i^+=O_p(n^{-1/4})$. 
\end{proof}
The following is the proof of Proposition~3.
\begin{proof}

We approximate the marginal likelihood of each model by Laplace approximation in a neighborhood of each posterior mode. First,  the marginal likelihood $m_t(y_n)$ of the true model $\bm{\delta}_t$ is given by
\begin{align}
m_t(y_n)
&=
\int
e^{L_n(\bm{\theta}_t, \phi_t)}
Q_t(\bm{\theta}_t)
(2\pi\tau\phi_t)^{-p_t/2}
\exp\!\left(
-\frac{1}{2\tau\phi_t}\bm{\theta}_t^\top\bm{\theta}_t
\right)
\pi(\phi_t)
\, d\bm{\theta}_t\, d\phi_t .\label{eq:marginal_likkelihood}
\end{align}
Here,  \(L_n(\bm{\theta}_k, \phi_k)\) denotes the log-likelihood under model \(\bm{\delta}_k\),  evaluated at \((\bm{\theta}_k, \phi_k)\) based on the sample \(\bm{y}\),  and \(Q_k(\bm{\theta}_k)\) denotes the non-local part of the slab density under model \(\bm{\delta}_k\). The parameter space \(\mathbb{R}^{p_k+1}\) is partitioned by the hyperplanes \(\theta_{k_j}=0\) \((j=1, \ldots, p_k)\) and \(\theta_{k_j}-\theta_{k_{j-1}}=0\) \((j=2, \ldots, p_k)\). Let \(D_k\) denote the number of resulting regions. Suppose that,  in each such region,  a posterior mode appears,  and denote these modes by \((\tilde{\bm{\theta}}_k^{(m)}, \tilde{\phi}_k^{(m)})\),  \(m=1, \ldots, D_k\). Among them,  let \((\tilde{\bm{\theta}}_k^{(1)}, \tilde{\phi}_k^{(1)})\) denote the posterior mode lying in the same region as the MLE \((\hat{\bm{\theta}}, \hat{\phi})\). Then,  the marginal likelihood under the true model in \eqref{eq:marginal_likkelihood} can be approximated by Laplace's method around each of these modes as follows.

Let $(\tilde{\bm{\theta}}_t^{(1)}, \tilde{\phi}_t^{(1)})$ denote the mode lying in the same neighborhood as the MLE. Hence, 
\begin{align}
m_t(y_n)
&\approx
\sum_{m=1}^{D_t}
e^{L_n(\tilde{\bm{\theta}}_t^{(m)}, \tilde{\phi}_t^{(m)})}
Q_t(\tilde{\bm{\theta}}_t^{(m)})
(2\pi\tau\tilde{\phi}_t^{(m)})^{-p_t/2}
\exp\!\left(
-\frac{1}{2\tau\tilde{\phi}_t^{(m)}}(\tilde{\bm{\theta}}_t^{(m)})^\top\tilde{\bm{\theta}}_t^{(m)}
\right)
\nonumber\\
&\qquad\qquad\qquad\times
\pi(\tilde{\phi}_t^{(m)})
\left|
H_t(\tilde{\bm{\theta}}_t^{(m)}, \tilde{\phi}_t^{(m)})
\right|^{-1/2}, 
\label{eq:BF_mt_laplace}
\end{align}
where  $H_t$ is the Hessian of the negative log-posterior.

Now,  for the mode $(\tilde{\bm{\theta}}_t^{(1)}, \tilde{\phi}_t^{(1)})$,  the Walker condition implies that
$n^{-1}H_t(\tilde{\bm{\theta}}_t^{(1)}, \tilde{\phi}_t^{(1)})\overset{p}{\to}J_t$
for some positive definite matrix $J_t$.
Therefore, 
$\left|H_t(\tilde{\bm{\theta}}_t^{(1)}, \tilde{\phi}_t^{(1)})\right|^{-1/2}
=
O_p(n^{-p_t/2})$.
Moreover, 
$Q_t(\tilde{\bm{\theta}}_t^{(1)})=O_p(1)$ holds.
Hence,  the Laplace approximation at the mode $(\tilde{\bm{\theta}}_t^{(1)}, \tilde{\phi}_t^{(1)})$ can be written as
\begin{align}
e^{L_n(\tilde{\bm{\theta}}_t^{(1)}, \tilde{\phi}_t^{(1)})}
\, O_p(n^{-p_t/2}).
\label{eq:BF_mt_main}
\end{align}

On the other hand,  for modes other than this dominant one,  at least one coefficient or one difference approaches zero,  so that polynomial decay arises. Furthermore,  by Condition A,  their log-likelihood values decay exponentially relative to that of the dominant mode. Therefore,  the contributions of the other modes are negligible compared with that of the dominant mode. Hence, 
\begin{align}
m_t(y_n)
=
e^{L_n(\tilde{\bm{\theta}}_t^{(1)}, \tilde{\phi}_t^{(1)})}
\, O_p(n^{-p_t/2}), 
\label{eq:BF_mt_final}
\end{align}
is obtained.

First,  consider the case $\bm{\delta}_t \subset \bm{\delta}_k$. In this case,  $\bm{\delta}_k$ is an overfitted model,  and its marginal likelihood $m_k(y_n)$ can also be evaluated by Laplace approximation in a neighborhood of each posterior mode. Focusing on the terms corresponding to the modes,  the Bayes factor can be written as
\begin{align}
\mathrm{BF}_{kt}
=
\frac{m_k(y_n)}{m_t(y_n)}, 
\end{align}
and can be evaluated as a ratio of the corresponding factors. Here,  for an overfitted model,  the log-likelihood difference remains of order $O_p(1)$,  and hence
\begin{align}
e^{L_n(\tilde{\bm{\theta}}_k^{(m)}, \tilde{\phi}_k^{(m)})-L_n(\tilde{\bm{\theta}}_t^{(1)}, \tilde{\phi}_t^{(1)})}=O_p(1)
\end{align}
holds.
Moreover,  the non-local term satisfies
\begin{align}
Q_k(\tilde{\bm{\theta}}_k^{(m)})= O_p(n^{-(p_k-p_t)}).
\end{align}
In addition,  by the bounded continuity of $\pi(\phi)$ and the regularity of the Hessian,  both the prior distribution for $\phi$ and the Hessian ratio are of order $O_p(1)$.
Therefore, 
\begin{align}
\mathrm{BF}_{kt}
&=
O_p(1)\cdot O_p\!\left(n^{-(p_k-p_t)}\right)\cdot
n^{-\frac{1}{2}(p_k-p_t)}\cdot O_p(1)
\nonumber\\
&=
O_p\!\left(n^{-\frac{3}{2}(p_k-p_t)}\right).
\end{align}

Finally,  consider the case $\bm{\delta}_t\not\subset \bm{\delta}_k$. By Condition A,  we have $\mathrm{KL}(\bm{\delta}_t, \bm{\delta}_k)>0$ in this case. Therefore,  for each posterior mode $(\tilde{\bm{\theta}}_k^{(m)}, \tilde{\phi}_k^{(m)})$, 
\begin{align}
L_n(\tilde{\bm{\theta}}_k^{(m)}, \tilde{\phi}_k^{(m)})
-
L_n(\tilde{\bm{\theta}}_t^{(1)}, \tilde{\phi}_t^{(1)})
=
-n\, \mathrm{KL}(\bm{\delta}_t, \bm{\delta}_k)+o_p(n)
\end{align}
holds. Hence,  there exists some $C>0$ such that
\begin{align}
\exp\!\left(
L_n(\tilde{\bm{\theta}}_k^{(m)}, \tilde{\phi}_k^{(m)})
-
L_n(\tilde{\bm{\theta}}_t^{(1)}, \tilde{\phi}_t^{(1)})
\right)
=
O_p(e^{-Cn}).
\end{align}
On the other hand,  the ratio of the non-local terms,  the prior ratio,  the Hessian ratio,  and the dimensional factor in the Laplace approximation are at most of polynomial order,  so that the exponential term is dominant. Therefore, 
\begin{align}
\mathrm{BF}_{kt}
=
O_p(e^{-Cn})
\qquad \text{for some } C>0
\end{align}
is obtained. This completes the proof.
\end{proof}

\subsection{Proof of Proposition~\ref{prop:sparse_component_mean_local}}

The proof of Proposition~\ref{prop:sparse_component_mean_local} is as follows. 

\begin{proof}
For simplicity,   we omit the model index throughout. 
Define
\begin{align}
\ell_n(\bm{\theta},  \phi)
:=
L_n(\bm{\theta},  \phi)
+
\log p(\bm{\theta}\mid \phi)
+
\log \pi(\phi),  
\label{eq:ell_n_final}
\end{align}
and let $g_n(\bm{\theta},  \phi):=-n^{-1}\ell_n(\bm{\theta},  \phi)$. 
Then
\begin{align}
E(\theta_i\mid \bm{y})
=
\frac{
\int\!\!\int \theta_i \exp\{-n g_n(\bm{\theta},  \phi)\}\,  d\bm{\theta}\,  d\phi
}{
\int\!\!\int \exp\{-n g_n(\bm{\theta},  \phi)\}\,  d\bm{\theta}\,  d\phi
}. 
\label{eq:posterior_mean_ratio_final}
\end{align}

Set $\bm{z}:=(z_1,  \ldots,  z_{p_k+1})^\top:=(\theta_1,  \ldots,  \theta_{p_k},  \phi)^\top$ and
$\tilde{\bm{z}}:=(\tilde{\theta}_1,  \ldots,  \tilde{\theta}_{p_k},  \tilde{\phi})^\top$. 
For $1\le r,  s,  t\le p_k+1$,   define
\begin{align}
a_{rs}
&:=
\frac{\partial^2 g_n(\bm{z})}{\partial z_r\,  \partial z_s}
\Bigg|_{\bm{z}=\tilde{\bm{z}}},   \quad
t_{rst}
:=
\frac{\partial^3 g_n(\bm{z})}{\partial z_r\,  \partial z_s\,  \partial z_t}
\Bigg|_{\bm{z}=\tilde{\bm{z}}}. 
\label{eq:derivative_notation_final}
\end{align}
Let $A:=(a_{rs})_{1\le r,  s\le p_k+1}$ be the Hessian matrix of $g_n$ at $\tilde{\bm{z}}$,   and write $A^{-1}=(b_{rs})_{1\le r,  s\le p_k+1}$. 

By the Laplace expansion for posterior means around the main mode,  
\begin{align}
E(\theta_i\mid \bm{y})
=
\tilde{\theta}_i
+
\frac{1}{n}
\sum_{j=1}^{p_k+1}
b_{ij}
\left(
-\frac12\sum_{r,  s=1}^{p_k+1} b_{rs}\,  t_{rsj}
\right)
+
O_p(n^{-2}). 
\label{eq:laplace_mean_final}
\end{align}
Thus it is enough to evaluate the third derivatives $t_{rsj}$. 

Write $\Delta_i:=\theta_i-\theta_{i-1}$ and $\Delta_{i+1}:=\theta_{i+1}-\theta_i$. 
In the $\theta_i$-direction,   the singular part of the fusion-pMOM log-prior is
\begin{align}
q_i(\bm{\theta},  \phi)
&:=
2\log|\theta_i|
+
2\,  1_{\{i\in\Lambda\}}\log|\theta_i-\theta_{i-1}|
+
2\,  1_{\{i+1\in\Lambda\}}\log|\theta_{i+1}-\theta_i|
-
\frac{\theta_i^2}{2\tau\phi}. 
\label{eq:qi_final}
\end{align}
Its third derivative is
\begin{align}
\frac{\partial^3 q_i}{\partial\theta_i^3}
&=
\frac{4}{\theta_i^3}
+
\frac{4}{(\theta_i-\theta_{i-1})^3}\,  1_{\{i\in\Lambda\}}
-
\frac{4}{(\theta_{i+1}-\theta_i)^3}\,  1_{\{i+1\in\Lambda\}}. 
\label{eq:third_derivative_final}
\end{align}

First,   suppose that $\theta_i^*\neq 0$. 
Then Lemma~\ref{lem:mode_rate_from_mle_expansion} yields $\tilde{\theta}_i-\hat{\theta}_i=O_p(n^{-\frac{1}{2}})$,   and hence $\tilde{\theta}_i=\theta_i^*+O_p(n^{-1})=O_p(1)$. 
Moreover,   if $i\in\Lambda$ then $\theta_i^*-\theta_{i-1}^*\neq 0$,   and if $i+1\in\Lambda$ then $\theta_{i+1}^*-\theta_i^*\neq 0$,   so $\tilde{\Delta}_i=O_p(1)$ and $\tilde{\Delta}_{i+1}=O_p(1)$. 
Therefore \eqref{eq:third_derivative_final} gives
\begin{align}
\frac{\partial^3 q_i}{\partial\theta_i^3}
\Bigg|_{(\bm{\theta},  \phi)=(\tilde{\bm{\theta}},  \tilde{\phi})}
=
O_p(1). 
\label{eq:q_third_fixed_final}
\end{align}
Hence the relevant third derivatives of $g_n$ are $O_p(n^{-1})$,   while all remaining third derivatives are also $O_p(n^{-1})$ by the Walker condition and local regularity. 
Moreover,   the normalized Hessian converges to a finite positive-definite matrix,   so $b_{rs}=O_p(1)$ for all $r,  s$. 
Thus \eqref{eq:laplace_mean_final} implies
\begin{align}
E(\theta_i\mid \bm{y})-\tilde{\theta}_i
=
O_p(n^{-2}). 
\label{eq:mean_minus_mode_fixed_final}
\end{align}
Combining this with $\tilde{\theta}_i=\theta_i^*+O_p(n^{-1})$,   we obtain
$E(\theta_i\mid \bm{y})=\theta_i^*+O_p(n^{-1})=\theta_i^*+O_p(n^{-1/2})$. 
This proves \eqref{eq:mean_rate_fixed_final}. 

Next,   suppose that $\theta_i^*=0$ and that both neighboring true differences are nonzero. 
Then Lemma~\ref{lem:mode_rate_from_mle_expansion} yields $\tilde{\theta}_i=O_p(n^{-1/2})$ and $\tilde{\theta}_i-\hat{\theta}_i=O_p(n^{-1/2})$. 
Moreover,   $\tilde{\Delta}_i=O_p(1)$ and $\tilde{\Delta}_{i+1}=O_p(1)$. 
Hence \eqref{eq:third_derivative_final} gives
\begin{align}
\frac{\partial^3 q_i}{\partial\theta_i^3}
\Bigg|_{(\bm{\theta},  \phi)=(\tilde{\bm{\theta}},  \tilde{\phi})}
=
O_p(n^{3/2}). 
\label{eq:q_third_zero_isolated_final}
\end{align}
Since $g_n=-n^{-1}\ell_n$,   the corresponding third derivatives of $g_n$ satisfy
$t_{rst}=O_p(n^{1/2})$ for the terms involving the $\theta_i$-direction,   while all remaining third derivatives are $O_p(n^{-1/2})$. 
Hence \eqref{eq:laplace_mean_final} implies
\begin{align}
E(\theta_i\mid \bm{y})-\tilde{\theta}_i
=
O_p(n^{-1/2}). 
\label{eq:mean_minus_mode_zero_isolated_final}
\end{align}
Combining this with $\tilde{\theta}_i=O_p(n^{-1/2})$,   we obtain
$E(\theta_i\mid \bm{y})=O_p(n^{-1/2})$. 
This proves \eqref{eq:mean_rate_zero_isolated_final}. 

Now suppose that $\theta_i^*=0$ and exactly one neighboring true difference is equal to zero. 
For definiteness,   assume that $\theta_i^*-\theta_{i-1}^*=0$ and $\theta_{i+1}^*-\theta_i^*\neq 0$. 
Then Lemma~\ref{lem:mode_rate_from_mle_expansion} yields
$\tilde{\theta}_i=O_p(n^{-1/3})$,  
$\tilde{\theta}_i-\hat{\theta}_i=O_p(n^{-1/3})$,  
and $\tilde{\Delta}_i=O_p(n^{-1/3})$,  
whereas $\tilde{\Delta}_{i+1}=O_p(1)$. 
Hence \eqref{eq:third_derivative_final} gives
\begin{align}
\frac{\partial^3 q_i}{\partial\theta_i^3}
\Bigg|_{(\bm{\theta},  \phi)=(\tilde{\bm{\theta}},  \tilde{\phi})}
=
O_p(n). 
\label{eq:q_third_zero_onefusion_final}
\end{align}
Therefore the relevant third derivatives of $g_n$ are $O_p(1)$,   and \eqref{eq:laplace_mean_final} gives
\begin{align}
E(\theta_i\mid \bm{y})-\tilde{\theta}_i
=
O_p(n^{-1}). 
\label{eq:mean_minus_mode_zero_onefusion_final}
\end{align}
Since $\tilde{\theta}_i=O_p(n^{-1/3})$,   it follows that
$E(\theta_i\mid \bm{y})=O_p(n^{-1/3})$. 
This proves \eqref{eq:mean_rate_zero_onefusion_final}. 
The case $\theta_i^*-\theta_{i-1}^*\neq 0$ and $\theta_{i+1}^*-\theta_i^*=0$ is entirely analogous. 

Finally,   suppose that $\theta_i^*=0$,   $\theta_i^*-\theta_{i-1}^*=0$,   and $\theta_{i+1}^*-\theta_i^*=0$. 
Then Lemma~\ref{lem:mode_rate_from_mle_expansion} yields
$\tilde{\theta}_i=O_p(n^{-1/4})$,  
$\tilde{\theta}_i-\hat{\theta}_i=O_p(n^{-1/4})$,  
$\tilde{\Delta}_i=O_p(n^{-1/4})$,  
and $\tilde{\Delta}_{i+1}=O_p(n^{-1/4})$. 
Hence \eqref{eq:third_derivative_final} yields
\begin{align}
\frac{\partial^3 q_i}{\partial\theta_i^3}
\Bigg|_{(\bm{\theta},  \phi)=(\tilde{\bm{\theta}},  \tilde{\phi})}
=
O_p(n^{3/4}). 
\label{eq:q_third_zero_twofusion_final}
\end{align}
Therefore the relevant third derivatives of $g_n$ are $O_p(n^{-1/4})$,   and \eqref{eq:laplace_mean_final} gives
\begin{align}
E(\theta_i\mid \bm{y})-\tilde{\theta}_i
=
O_p(n^{-5/4}). 
\label{eq:mean_minus_mode_zero_twofusion_final}
\end{align}
Since $\tilde{\theta}_i=O_p(n^{-1/4})$,   we conclude that
$E(\theta_i\mid \bm{y})=O_p(n^{-1/4})$. 
This proves \eqref{eq:mean_rate_zero_twofusion_final}. 

Thus \eqref{eq:mean_rate_fixed_final}--\eqref{eq:mean_rate_zero_twofusion_final} hold. 
\end{proof}

\subsection{Proof of Theorem~\ref{prop:expectation}}
\begin{proof}
Decompose the posterior mean as
\begin{align}
\E(\theta_i\mid \bm y_n)
&=
\E(\theta_i\mid \bm{\delta}_t,  \bm y_n)P(\bm{\delta}_t\mid \bm y_n)
\nonumber\\
&\quad+
\sum_{k:\,  \bm{\delta}_t\subset \bm{\delta}_k}
\E(\theta_i\mid \bm{\delta}_k,  \bm y_n)P(\bm{\delta}_k\mid \bm y_n)
+
\sum_{k:\,  \bm{\delta}_t\not\subset \bm{\delta}_k}
\E(\theta_i\mid \bm{\delta}_k,  \bm y_n)P(\bm{\delta}_k\mid \bm y_n). 
\label{eq:bma_decomp}
\end{align}

First,   under the true model $\bm{\delta}_t$,   Proposition~\ref{prop:sparse_component_mean_local} implies that,   when $\theta_i^*\neq 0$,  
\begin{align}
\E(\theta_i\mid \bm{\delta}_t,  \bm y_n)
=
\theta_i^*+O_p(n^{-1/2})
\label{eq:true_model_nonzero}
\end{align}
holds. 
On the other hand,   when $\theta_i^*=0$,   that coefficient is not included in the true model,   and therefore
\begin{align}
\E(\theta_i\mid \bm{\delta}_t,  \bm y_n)=0
\label{eq:true_model_zero}
\end{align}
holds. 

Next,   for an overfitted model $\bm{\delta}_k \supset \bm{\delta}_t$,   Proposition 3 implies that
\begin{align}
P(\bm{\delta}_k\mid \bm y_n)
=
O_p\!\left(
n^{-\frac32(p_k-p_t)}
\right)
\qquad (\bm{\delta}_t\subset \bm{\delta}_k)
\label{eq:overfit_prob}
\end{align}
holds. 
Moreover,   Proposition~\ref{prop:sparse_component_mean_local} yields,   according to the local configuration around $\theta_i^*=0$,  
\begin{align}
&\theta_i^*=0,  \
\theta_i^*-\theta_{i-1}^*\neq 0,  \
\theta_{i+1}^*-\theta_i^*\neq 0
\Longrightarrow
\E(\theta_i\mid \bm{\delta}_k,  \bm y_n)=O_p(n^{-1/2}),  
\label{eq:overfit_conditional_isolated}
\\
&\theta_i^*=0,  \
\text{exactly one of }\theta_i^*-\theta_{i-1}^*\text{ and } \theta_{i+1}^*-\theta_i^*\text{ is equal to zero}
\Longrightarrow
\E(\theta_i\mid \bm{\delta}_k,  \bm y_n)=O_p(n^{-1/3}),  
\label{eq:overfit_conditional_onefusion}
\\
&\theta_i^*=0,  \
\theta_i^*-\theta_{i-1}^*=0,  \
\theta_{i+1}^*-\theta_i^*=0
\Longrightarrow
\E(\theta_i\mid \bm{\delta}_k,  \bm y_n)=O_p(n^{-1/4})
\quad (\bm{\delta}_t\subset \bm{\delta}_k). 
\label{eq:overfit_conditional_twofusion}
\end{align}
Hence,   multiplying \eqref{eq:overfit_prob} by \eqref{eq:overfit_conditional_isolated},   \eqref{eq:overfit_conditional_onefusion},   and \eqref{eq:overfit_conditional_twofusion},   respectively,   the dominant case is again $p_k=p_t+1$,   and we obtain
\begin{align}
&\theta_i^*=0,  \
\theta_i^*-\theta_{i-1}^*\neq 0,  \
\theta_{i+1}^*-\theta_i^*\neq 0
\Longrightarrow
\sum_{k:\,  \bm{\delta}_t\subset \bm{\delta}_k}
\E(\theta_i\mid \bm{\delta}_k,  \bm y_n)P(\bm{\delta}_k\mid \bm y_n)
=
O_p(n^{-2}),  
\label{eq:overfit_sum_isolated}
\\
&\theta_i^*=0,  \
\text{exactly one of }\theta_i^*-\theta_{i-1}^*\text{ and } \theta_{i+1}^*-\theta_i^*\text{ is equal to zero} \nonumber \\
& \hspace{50mm} \Longrightarrow
\sum_{k:\,  \bm{\delta}_t\subset \bm{\delta}_k}
\E(\theta_i\mid \bm{\delta}_k,  \bm y_n)P(\bm{\delta}_k\mid \bm y_n)
=
O_p(n^{-11/6}),  
\label{eq:overfit_sum_onefusion}
\\
&\theta_i^*=0,  \
\theta_i^*-\theta_{i-1}^*=0,  \
\theta_{i+1}^*-\theta_i^*=0
\Longrightarrow
\sum_{k:\,  \bm{\delta}_t\subset \bm{\delta}_k}
\E(\theta_i\mid \bm{\delta}_k,  \bm y_n)P(\bm{\delta}_k\mid \bm y_n)
=
O_p(n^{-7/4}). 
\label{eq:overfit_sum_twofusion}
\end{align}

On the other hand,   for an underfitted model $\bm{\delta}_k \not \supset  \bm{\delta}_t$,   Proposition 5 implies that
\begin{align}
P(\bm{\delta}_k\mid \bm y_n)=O_p(e^{-Cn})
\qquad (\bm{\delta}_t\not\subset \bm{\delta}_k)
\label{eq:underfit_prob}
\end{align}
for some $C>0$,   and since the conditional posterior mean is at most of polynomial order,  
\begin{align}
\sum_{k:\,  \bm{\delta}_t\not\subset \bm{\delta}_k}
\E(\theta_i\mid \bm{\delta}_k,  \bm y_n)P(\bm{\delta}_k\mid \bm y_n)
=
O_p(e^{-Cn})
\label{eq:underfit_sum}
\end{align}
follows. 

Substituting these into \eqref{eq:bma_decomp},   when $\theta_i^*\neq 0$,   the contribution from the true model in \eqref{eq:true_model_nonzero} is the leading term,   and the other two terms are smaller.  Hence,  
\begin{align}
\E(\theta_i\mid \bm y_n)
=
\theta_i^*+O_p(n^{-1/2})
\end{align}
is obtained. 

When $\theta_i^*=0$ and both neighboring true differences are nonzero,   the contribution from the true model is zero by \eqref{eq:true_model_zero},   \eqref{eq:overfit_sum_isolated} is the leading term,   and \eqref{eq:underfit_sum} is negligible.  Therefore,  
\begin{align}
\E(\theta_i\mid \bm y_n)
=
O_p(n^{-2}). 
\end{align}

When $\theta_i^*=0$ and exactly one neighboring true difference is equal to zero,   the contribution from the true model is again zero,   \eqref{eq:overfit_sum_onefusion} is the leading term,   and \eqref{eq:underfit_sum} is negligible.  Therefore,  
\begin{align}
\E(\theta_i\mid \bm y_n)
=
O_p(n^{-11/6}). 
\end{align}

Finally,   when $\theta_i^*=0$ and both neighboring true differences are equal to zero,   the contribution from the true model is zero,   \eqref{eq:overfit_sum_twofusion} is the leading term,   and \eqref{eq:underfit_sum} is negligible.  Therefore,  
\begin{align}
\E(\theta_i\mid \bm y_n)
=
O_p(n^{-7/4}). 
\end{align}
\end{proof}

\subsection{Proof of Theorem~\ref{thm:posterior_mean_difference_fixed_dim}}

\begin{proof}
Decompose the posterior mean into three parts:
\begin{align}
E(\Delta_j\mid \bm{y})
&=
E(\Delta_j\mid \bm{\delta}_t,  \bm{y})P(\bm{\delta}_t\mid \bm{y})
+
\sum_{k:\,  \bm{\delta}_t\subset\bm{\delta}_k}
E(\Delta_j\mid \bm{\delta}_k,  \bm{y})P(\bm{\delta}_k\mid \bm{y})
\notag\\
&\qquad
+
\sum_{k:\,  \bm{\delta}_t\not\subset\bm{\delta}_k}
E(\Delta_j\mid \bm{\delta}_k,  \bm{y})P(\bm{\delta}_k\mid \bm{y}). 
\label{eq:three_way_decomp_diff}
\end{align}

Because $p$ is fixed,   the number of models with $p_k-p_t=m$ is finite for each fixed $m$. 
Hence,   by Proposition~\ref{thm:BF_convergence},  
\begin{align}
\sum_{k:\,  \bm{\delta}_t\subset\bm{\delta}_k,  \ p_k-p_t=1}
P(\bm{\delta}_k\mid \bm{y})
&=
O_p(n^{-3/2}),  
\label{eq:one_extra_total_prob_diff}
\\
\sum_{k:\,  \bm{\delta}_t\subset\bm{\delta}_k,  \ p_k-p_t\ge2}
P(\bm{\delta}_k\mid \bm{y})
&=
O_p(n^{-3}),  
\label{eq:two_or_more_extra_total_prob_diff}
\\
\sum_{k:\,  \bm{\delta}_t\not\subset\bm{\delta}_k}
P(\bm{\delta}_k\mid \bm{y})
&=
O_p(e^{-Cn})
\label{eq:underfit_total_prob_diff}
\end{align}
for some $C>0$. 
Therefore,  
\begin{align}
P(\bm{\delta}_t\mid \bm{y})
=
1-O_p(n^{-3/2}). 
\label{eq:true_model_post_prob_diff}
\end{align}

We now consider four cases. 

If $\Delta_j^*\neq0$, 
\begin{align}
E(\Delta_j\mid \bm{\delta}_t,  \bm{y})
=
\Delta_j^*+O_p(n^{-1/2}). 
\label{eq:true_model_diff_fixed_use}
\end{align}
Hence \eqref{eq:true_model_post_prob_diff} gives
\begin{align}
E(\Delta_j\mid \bm{\delta}_t,  \bm{y})P(\bm{\delta}_t\mid \bm{y})
&=
\bigl(\Delta_j^*+O_p(n^{-1/2})\bigr)
\bigl(1- O_p(n^{-3/2})\bigr)
\notag\\
&=
\Delta_j^*+O_p(n^{-1/2}). 
\label{eq:true_term_diff_fixed}
\end{align}
For the overfitted models,   $E(\Delta_j\mid \bm{\delta}_k,  \bm{y})=O_p(1)$ by stochastic boundedness,   so \eqref{eq:one_extra_total_prob_diff} and \eqref{eq:two_or_more_extra_total_prob_diff} yield
\begin{align}
\sum_{k:\,  \bm{\delta}_t\subset\bm{\delta}_k}
E(\Delta_j\mid \bm{\delta}_k,  \bm{y})P(\bm{\delta}_k\mid \bm{y})
&=
O_p(1)\cdot  O_p(n^{-3/2})
+
O_p(1)\cdot O_p(n^{-3})
\notag\\
&=
O_p(n^{-3/2}). 
\label{eq:overfit_term_diff_fixed}
\end{align}
For the underfitted models,   stochastic boundedness and \eqref{eq:underfit_total_prob_diff} imply
\begin{align}
\sum_{k:\,  \bm{\delta}_t\not\subset\bm{\delta}_k}
E(\Delta_j\mid \bm{\delta}_k,  \bm{y})P(\bm{\delta}_k\mid \bm{y})
=
O_p(e^{-Cn}). 
\label{eq:underfit_term_diff_fixed}
\end{align}
Combining \eqref{eq:true_term_diff_fixed},   \eqref{eq:overfit_term_diff_fixed},   and \eqref{eq:underfit_term_diff_fixed},   we obtain
\begin{align}
E(\Delta_j\mid \bm{y})
=
\Delta_j^*+O_p(n^{-1/2}),  
\end{align}
which proves \eqref{eq:thm_diff_fixed_nonzero}. 

Next,   suppose that $\Delta_j^*=0$,   $\theta_j^*-\theta_{j-1}^*\neq 0$,   and $\theta_{j+1}^*-\theta_j^*\neq 0$. 
Then,   under the true model $\bm{\delta}_t$,  we obatin
\begin{align}
E(\Delta_j\mid \bm{\delta}_t,  \bm{y})=0. 
\label{eq:true_model_diff_zero_exact_isolated}
\end{align}
For the overfitted models with $p_k-p_t=1$,   the assumed conditional mean rate gives
$E(\Delta_j\mid \bm{\delta}_k,  \bm{y})=O_p(n^{-1/2})$. 
Hence \eqref{eq:one_extra_total_prob_diff} implies
\begin{align}
\sum_{k:\,  \bm{\delta}_t\subset\bm{\delta}_k,  \ p_k-p_t=1}
E(\Delta_j\mid \bm{\delta}_k,  \bm{y})P(\bm{\delta}_k\mid \bm{y})
&=
O_p(n^{-1/2})\cdot O_p(n^{-3/2})
\notag\\
&=
O_p(n^{-2}). 
\label{eq:one_extra_zero_diff_isolated}
\end{align}
For the overfitted models with $p_k-p_t\ge2$,   stochastic boundedness and \eqref{eq:two_or_more_extra_total_prob_diff} give
\begin{align}
\sum_{k:\,  \bm{\delta}_t\subset\bm{\delta}_k,  \ p_k-p_t\ge2}
E(\Delta_j\mid \bm{\delta}_k,  \bm{y})P(\bm{\delta}_k\mid \bm{y})
=
O_p(n^{-3}). 
\label{eq:multi_extra_zero_diff_isolated}
\end{align}
For the underfitted models,  we obtain
\begin{align}
\sum_{k:\,  \bm{\delta}_t\not\subset\bm{\delta}_k}
E(\Delta_j\mid \bm{\delta}_k,  \bm{y})P(\bm{\delta}_k\mid \bm{y})
=
O_p(e^{-Cn}). 
\label{eq:underfit_zero_diff_isolated}
\end{align}
Substituting \eqref{eq:true_model_diff_zero_exact_isolated},   \eqref{eq:one_extra_zero_diff_isolated},   \eqref{eq:multi_extra_zero_diff_isolated},   and \eqref{eq:underfit_zero_diff_isolated} into \eqref{eq:three_way_decomp_diff},   we obtain
\begin{align}
E(\Delta_j\mid \bm{y})
=
O_p(n^{-2}),  
\end{align}
which proves \eqref{eq:thm_diff_zero_isolated}. 

Next,   suppose that $\Delta_j^*=0$ and exactly one of $\theta_j^*-\theta_{j-1}^*$ and $\theta_{j+1}^*-\theta_j^*$ is equal to zero. 
Then,   under the true model $\bm{\delta}_t$,  we obtain
\begin{align}
E(\Delta_j\mid \bm{\delta}_t,  \bm{y})=0. 
\label{eq:true_model_diff_zero_exact_onefusion}
\end{align}
For the overfitted models with $p_k-p_t=1$,   the assumed conditional mean rate gives
$E(\Delta_j\mid \bm{\delta}_k,  \bm{y})=O_p(n^{-1/3})$. 
Hence \eqref{eq:one_extra_total_prob_diff} implies
\begin{align}
\sum_{k:\,  \bm{\delta}_t\subset\bm{\delta}_k,  \ p_k-p_t=1}
E(\Delta_j\mid \bm{\delta}_k,  \bm{y})P(\bm{\delta}_k\mid \bm{y})
&=
O_p(n^{-1/3})\cdot O_p(n^{-3/2})
\notag\\
&=
O_p(n^{-11/6}). 
\label{eq:one_extra_zero_diff_onefusion}
\end{align}
For the overfitted models with $p_k-p_t\ge2$,   stochastic boundedness and \eqref{eq:two_or_more_extra_total_prob_diff} give
\begin{align}
\sum_{k:\,  \bm{\delta}_t\subset\bm{\delta}_k,  \ p_k-p_t\ge2}
E(\Delta_j\mid \bm{\delta}_k,  \bm{y})P(\bm{\delta}_k\mid \bm{y})
=
O_p(n^{-3}). 
\label{eq:multi_extra_zero_diff_onefusion}
\end{align}
For the underfitted models,  we obtain
\begin{align}
\sum_{k:\,  \bm{\delta}_t\not\subset\bm{\delta}_k}
E(\Delta_j\mid \bm{\delta}_k,  \bm{y})P(\bm{\delta}_k\mid \bm{y})
=
O_p(e^{-Cn}). 
\label{eq:underfit_zero_diff_onefusion}
\end{align}
Substituting \eqref{eq:true_model_diff_zero_exact_onefusion},   \eqref{eq:one_extra_zero_diff_onefusion},   \eqref{eq:multi_extra_zero_diff_onefusion},   and \eqref{eq:underfit_zero_diff_onefusion} into \eqref{eq:three_way_decomp_diff},   we obtain
\begin{align}
E(\Delta_j\mid \bm{y})
=
 O_p(n^{-11/6}),  
\end{align}
which proves \eqref{eq:thm_diff_zero_onefusion}. 

Finally,   suppose that $\Delta_j^*=0$,   $\theta_j^*-\theta_{j-1}^*=0$,   and $\theta_{j+1}^*-\theta_j^*=0$. 
Then,   under the true model $\bm{\delta}_t$,  we obtain
\begin{align}
E(\Delta_j\mid \bm{\delta}_t,  \bm{y})=0. 
\label{eq:true_model_diff_zero_exact_twofusion}
\end{align}
For the overfitted models with $p_k-p_t=1$,   the assumed conditional mean rate gives
$E(\Delta_j\mid \bm{\delta}_k,  \bm{y})=O_p(n^{-1/4})$. 
Hence \eqref{eq:one_extra_total_prob_diff} implies
\begin{align}
\sum_{k:\,  \bm{\delta}_t\subset\bm{\delta}_k,  \ p_k-p_t=1}
E(\Delta_j\mid \bm{\delta}_k,  \bm{y})P(\bm{\delta}_k\mid \bm{y})
&=
O_p(n^{-1/4})\cdot O_p(n^{-3/2})
\notag\\
&=
O_p(n^{-7/4}). 
\label{eq:one_extra_zero_diff_twofusion}
\end{align}
For the overfitted models with $p_k-p_t\ge2$,   stochastic boundedness and \eqref{eq:two_or_more_extra_total_prob_diff} give
\begin{align}
\sum_{k:\,  \bm{\delta}_t\subset\bm{\delta}_k,  \ p_k-p_t\ge2}
E(\Delta_j\mid \bm{\delta}_k,  \bm{y})P(\bm{\delta}_k\mid \bm{y})
=
O_p(n^{-3}). 
\label{eq:multi_extra_zero_diff_twofusion}
\end{align}
For the underfitted models,  we obtain
\begin{align}
\sum_{k:\,  \bm{\delta}_t\not\subset\bm{\delta}_k}
E(\Delta_j\mid \bm{\delta}_k,  \bm{y})P(\bm{\delta}_k\mid \bm{y})
=
O_p(e^{-Cn}). 
\label{eq:underfit_zero_diff_twofusion}
\end{align}
Substituting \eqref{eq:true_model_diff_zero_exact_twofusion},   \eqref{eq:one_extra_zero_diff_twofusion},   \eqref{eq:multi_extra_zero_diff_twofusion},   and \eqref{eq:underfit_zero_diff_twofusion} into \eqref{eq:three_way_decomp_diff},   we obtain
\begin{align}
E(\Delta_j\mid \bm{y})
=
O_p(n^{-7/4}),  
\end{align}
which proves \eqref{eq:thm_diff_zero_twofusion}.

Thus \eqref{eq:thm_diff_fixed_nonzero}--\eqref{eq:thm_diff_zero_twofusion} hold. 
\end{proof}

\section{Derivation of the Marginal Likelihood and Full Conditional Distributions}
\label{sec:full_conditional}

\subsection{\texorpdfstring{Derivation of the marginal likelihood $m_{\bm{\delta}}(\bm{y})$}{Derivation of the marginal likelihood mdelta(y)}}

It can be derived by the following algebraic calculation.
Fix the model $\bm{\delta}$,  and define
\begin{align*}
Q_{\bm{\delta}}(\bm{\theta}_{\bm{\delta}})
&:=
\prod_{j=1}^{p_{\bm{\delta}}}\theta_{\bm{\delta}_j}^2
\prod_{j\in\Lambda_{\bm{\delta}}}(\theta_{\bm{\delta}_j}-\theta_{\bm{\delta}_{j-1}})^2, 
&
A_{\bm{\delta}}
&:=
X_{\bm{\delta}}^{\top}X_{\bm{\delta}}+I,  \\
\tilde{\bm{\beta}}_{\bm{\delta}}
&:=
A_{\bm{\delta}}^{-1}X_{\bm{\delta}}^{\top}\bm{y}, 
&
R_{\bm{\delta}}
&:=
\bm{y}^{\top}(I-X_{\bm{\delta}}A_{\bm{\delta}}^{-1}X_{\bm{\delta}}^{\top})\bm{y},  \\
\nu_{\bm{\delta}}
&:=
n+2p_{\bm{\delta}}+2|\Lambda_{\bm{\delta}}|+2\alpha, 
&
s_{\bm{\delta}}^{2}
&:=
\frac{R_{\bm{\delta}}+2\psi}{\nu_{\bm{\delta}}}.
\end{align*}
Then,  the marginal likelihood is written as
\begin{align}
m(\bm{y}\mid\bm{\delta})
&=
\int\int
(2\pi\sigma^2)^{-n/2}
\exp\left(
-\frac{1}{2\sigma^2}
\|\bm{y}-X_{\bm{\delta}}\bm{\theta}_{\bm{\delta}}\|^2
\right)
C_{\bm{\delta}}
Q_{\bm{\delta}}(\bm{\theta}_{\bm{\delta}})
(2\pi)^{-p_{\bm{\delta}}/2}
(\sigma^2)^{-\left(\frac{3}{2}p_{\bm{\delta}}+|\Lambda_{\bm{\delta}}|\right)}
\notag\\
&\qquad\qquad\qquad\times
\exp\left(
-\frac{1}{2\sigma^2}
\bm{\theta}_{\bm{\delta}}^{\top}\bm{\theta}_{\bm{\delta}}
\right)
\frac{\psi^{\alpha}}{\Gamma(\alpha)}
(\sigma^2)^{-(\alpha+1)}
\exp\left(
-\frac{\psi}{\sigma^2}
\right)
\, d\bm{\theta}_{\bm{\delta}}\, d\sigma^2 \\
&=
C_{\bm{\delta}}(2\pi)^{-(n+p_{\bm{\delta}})/2}\frac{\psi^{\alpha}}{\Gamma(\alpha)}
\int
Q_{\bm{\delta}}(\bm{\theta}_{\bm{\delta}})
\int
(\sigma^2)^{-\left(\alpha+1+\frac{n+3p_{\bm{\delta}}}{2}+|\Lambda_{\bm{\delta}}|\right)}
\notag\\
&\qquad\qquad\qquad\times
\exp\left(
-\frac{
2\psi+\|\bm{y}-X_{\bm{\delta}}\bm{\theta}_{\bm{\delta}}\|^2+\bm{\theta}_{\bm{\delta}}^{\top}\bm{\theta}_{\bm{\delta}}
}{2\sigma^2}
\right)
\, d\sigma^2\, d\bm{\theta}_{\bm{\delta}} \\
&=
C_{\bm{\delta}}(2\pi)^{-(n+p_{\bm{\delta}})/2}\frac{\psi^{\alpha}}{\Gamma(\alpha)}
\Gamma\left(\frac{\nu_{\bm{\delta}}+p_{\bm{\delta}}}{2}\right)\notag\\
&\qquad\qquad\qquad\times
\int
Q_{\bm{\delta}}(\bm{\theta}_{\bm{\delta}})
\left(
\frac{
2\psi+\|\bm{y}-X_{\bm{\delta}}\bm{\theta}_{\bm{\delta}}\|^2+\bm{\theta}_{\bm{\delta}}^{\top}\bm{\theta}_{\bm{\delta}}
}{2}
\right)^{-\frac{\nu_{\bm{\delta}}+p_{\bm{\delta}}}{2}}
\, d\bm{\theta}_{\bm{\delta}} \\
&=
C_{\bm{\delta}}(2\pi)^{-(n+p_{\bm{\delta}})/2}\frac{\psi^{\alpha}}{\Gamma(\alpha)}
\Gamma\left(\frac{\nu_{\bm{\delta}}+p_{\bm{\delta}}}{2}\right)
2^{\frac{\nu_{\bm{\delta}}+p_{\bm{\delta}}}{2}}
(\nu_{\bm{\delta}}s_{\bm{\delta}}^{2})^{-\frac{\nu_{\bm{\delta}}+p_{\bm{\delta}}}{2}}\notag\\
&\qquad\qquad\qquad\times
\int
Q_{\bm{\delta}}(\bm{\theta}_{\bm{\delta}})
\left(
1+
\frac{
(\bm{\theta}_{\bm{\delta}}-\tilde{\bm{\beta}}_{\bm{\delta}})^{\top}
A_{\bm{\delta}}
(\bm{\theta}_{\bm{\delta}}-\tilde{\bm{\beta}}_{\bm{\delta}})
}{\nu_{\bm{\delta}}s_{\bm{\delta}}^{2}}
\right)^{-\frac{\nu_{\bm{\delta}}+p_{\bm{\delta}}}{2}}
\, d\bm{\theta}_{\bm{\delta}} \\
&=
C_{\bm{\delta}}(2\pi)^{-(n+p_{\bm{\delta}})/2}\frac{\psi^{\alpha}}{\Gamma(\alpha)}
2^{\frac{\nu_{\bm{\delta}}+p_{\bm{\delta}}}{2}}
\pi^{p_{\bm{\delta}}/2}
|A_{\bm{\delta}}|^{-1/2}
(\nu_{\bm{\delta}}s_{\bm{\delta}}^{2})^{-\nu_{\bm{\delta}}/2}
\Gamma\left(\frac{\nu_{\bm{\delta}}}{2}\right)
E_{t}\left[Q_{\bm{\delta}}(\bm{\theta}_{\bm{\delta}})\right], 
\end{align}
where $E_t$ denotes the expectation with respect to the $p_{\bm{\delta}}$-dimensional $t$ distribution with mean $\tilde{\bm{\beta}}_{\bm{\delta}}$,  scale matrix $s_{\bm{\delta}}^{2}A_{\bm{\delta}}^{-1}$,  and degrees of freedom $\nu_{\bm{\delta}}$.

\subsection{\texorpdfstring{Full conditional distribution of $\delta_j$}{Full conditional distribution of delta}}

The full conditional distribution of $\delta_j$ is given by
\begin{align}
p(\delta_j=x\mid \bm{y}, \bm{\delta}_{-j}, \sigma^2, \{\bm{\omega}_\ell\}, \{\kappa_{\ell, -1}\})
&\propto
p(\bm{y}\mid \bm{\delta}^{(j\leftarrow x)})
p(\bm{\delta}^{(j\leftarrow x)}\mid \{\bm{\omega}_\ell\}, \{\kappa_{\ell, -1}\}), 
\qquad x\in\{-1, 0, 1\}.
\end{align}
Hence, 
\begin{align}
\delta_j\mid \bm{y}, \bm{\delta}_{-j}, \sigma^2, \{\bm{\omega}_\ell\}, \{\kappa_{\ell, -1}\}
\sim
\mathrm{Categorical}(h_{j, -1}, h_{j, 0}, h_{j, 1}), 
\end{align}

\subsection{\texorpdfstring{Full conditional distribution of $\bm{\theta}_{\bm{\delta}}$}{Full conditional distribution of theta}}

The full conditional distribution of $\bm{\theta}_{\bm{\delta}}$ is given by
\begin{align}
p(\bm{\theta}_{\bm{\delta}}\mid \bm{y}, \sigma^2, \bm{\delta})
&\propto
p(\bm{y}\mid \bm{\theta}_{\bm{\delta}}, \sigma^2, \bm{\delta})
p(\bm{\theta}_{\bm{\delta}}\mid \sigma^2, \bm{\delta}) \\
&\propto
Q_{\bm{\delta}}(\bm{\theta}_{\bm{\delta}})
\exp\left(
-\frac{1}{2\sigma^2}
\left(
\|\bm{y}-X_{\bm{\delta}}\bm{\theta}_{\bm{\delta}}\|^2
+
\bm{\theta}_{\bm{\delta}}^{\top}\bm{\theta}_{\bm{\delta}}
\right)
\right) \\
&\propto
Q_{\bm{\delta}}(\bm{\theta}_{\bm{\delta}})
\exp\left(
-\frac{1}{2\sigma^2}
(\bm{\theta}_{\bm{\delta}}-\tilde{\bm{\beta}}_{\bm{\delta}})^{\top}
A_{\bm{\delta}}
(\bm{\theta}_{\bm{\delta}}-\tilde{\bm{\beta}}_{\bm{\delta}})
\right).
\end{align}
Hence, 
\begin{align}
\bm{\theta}_{\bm{\delta}}\mid \bm{y}, \sigma^2, \bm{\delta}
\sim
\mathrm{fusion\text{-}pMOM}
\left(
\tilde{\bm{\beta}}_{\bm{\delta}}, 
\sigma^2A_{\bm{\delta}}^{-1}
\right).
\end{align}

\subsection{\texorpdfstring{Full conditional distribution of $\sigma^2$}{Full conditional distribution of sigma}}

The full conditional distribution of $\sigma^2$ is derived as follows:
\begin{align}
p(\sigma^2\mid \bm{y}, \bm{\theta}_{\bm{\delta}}, \bm{\delta})
&\propto
p(\bm{y}\mid \bm{\theta}_{\bm{\delta}}, \sigma^2, \bm{\delta})
p(\bm{\theta}_{\bm{\delta}}\mid \sigma^2, \bm{\delta})
p(\sigma^2) \\
&\propto
(\sigma^2)^{-n/2}
(\sigma^2)^{-\left(\frac{3}{2}p_{\bm{\delta}}+|\Lambda_{\bm{\delta}}|\right)}
(\sigma^2)^{-(\alpha+1)}
\exp\left(
-\frac{
\|\bm{y}-X_{\bm{\delta}}\bm{\theta}_{\bm{\delta}}\|^2
+
\bm{\theta}_{\bm{\delta}}^{\top}\bm{\theta}_{\bm{\delta}}
+
2\psi
}{2\sigma^2}
\right) \\
&\propto
(\sigma^2)^{-\left(
\alpha+1+\frac{n+3p_{\bm{\delta}}}{2}+|\Lambda_{\bm{\delta}}|
\right)}
\exp\left(
-\frac{
\|\bm{y}-X_{\bm{\delta}}\bm{\theta}_{\bm{\delta}}\|^2
+
\bm{\theta}_{\bm{\delta}}^{\top}\bm{\theta}_{\bm{\delta}}
+
2\psi
}{2\sigma^2}
\right).
\end{align}
Hence, 
\begin{align}
\sigma^2\mid \bm{y}, \bm{\theta}_{\bm{\delta}}, \bm{\delta}
\sim
IG\left(
\alpha+\frac{n+3p_{\bm{\delta}}}{2}+|\Lambda_{\bm{\delta}}|, 
\, 
\psi+\frac{
\|\bm{y}-X_{\bm{\delta}}\bm{\theta}_{\bm{\delta}}\|^2
+
\bm{\theta}_{\bm{\delta}}^{\top}\bm{\theta}_{\bm{\delta}}
}{2}
\right).
\end{align}
\subsection{\texorpdfstring{Full conditional distribution of $\bm{\omega}_j$}{Full conditional distribution of omegaj}}

For each $j=2, \ldots, p$,  let
\[
\bm{\omega}_j
:=
(\omega_{j, -1}, \omega_{j, 0}, \omega_{j, 1})^\top, 
\qquad
\bm{\omega}_j \sim \mathrm{Dirichlet}(Aa_{j, -1}, Aa_{j, 0}, Aa_{j, 1}).
\]
The transition probability of $\delta_j$ is governed by $\bm{\omega}_j$ when $\delta_{j-1}\in\{-1, 1\}$,  whereas it is governed by $\kappa_{j, -1}$ and $\kappa_{j, 0}$ when $\delta_{j-1}=0$. Therefore,  the full conditional distribution of $\bm{\omega}_j$ is derived by considering the following two cases.

If $\delta_{j-1}\in\{-1, 1\}$,  then
\begin{align}
p(\bm{\omega}_j\mid \bm{\delta})
&\propto
p(\delta_j\mid \delta_{j-1}, \bm{\omega}_j)\, p(\bm{\omega}_j) \\
&\propto
\omega_{j, -1}^{\mathbf{1}_{\{\delta_j=-1\}}}
\omega_{j, 0}^{\mathbf{1}_{\{\delta_j=0\}}}
\omega_{j, 1}^{\mathbf{1}_{\{\delta_j=1\}}}
\omega_{j, -1}^{Aa_{j, -1}-1}
\omega_{j, 0}^{Aa_{j, 0}-1}
\omega_{j, 1}^{Aa_{j, 1}-1} \\
&\propto
\omega_{j, -1}^{Aa_{j, -1}+\mathbf{1}_{\{\delta_j=-1\}}-1}
\omega_{j, 0}^{Aa_{j, 0}+\mathbf{1}_{\{\delta_j=0\}}-1}
\omega_{j, 1}^{Aa_{j, 1}+\mathbf{1}_{\{\delta_j=1\}}-1}.
\end{align}
Hence, 
\begin{align}
\bm{\omega}_j\mid \bm{\delta}
\sim
\mathrm{Dirichlet}
\left(
Aa_{j, -1}+\textbf{1}_{\{\delta_j=-1\}}, 
Aa_{j, 0}+\textbf{1}_{\{\delta_j=0\}}, 
Aa_{j, 1}+\textbf{1}_{\{\delta_j=1\}}
\right).
\end{align}

If $\delta_{j-1}=0$,  then the transition probability of $\delta_j$ is determined by $(\kappa_{j, -1}, \kappa_{j, 0})$ and does not involve $\bm{\omega}_j$. Therefore,  the full conditional distribution of $\bm{\omega}_j$ coincides with its prior distribution:
\begin{align}
\bm{\omega}_j\mid \bm{\delta}
\sim
\mathrm{Dirichlet}(Aa_{j, -1}, Aa_{j, 0}, Aa_{j, 1}).
\end{align}

\subsection{\texorpdfstring{Full conditional distribution of $\bm{\kappa}_{j, -1}$}{Full conditional distribution of kappaj, -1}}
If $\delta_{j-1}=0$,  then
\begin{align}
p(\kappa_{j, -1}\mid \bm{\delta})
&\propto
p(\delta_j\mid \delta_{j-1}, \kappa_{j, -1})\, p(\kappa_{j, -1}) \\
&\propto
\kappa_{j, -1}^{\mathbf{1}_{\{\delta_j=-1\}}}
(1-\kappa_{j, -1})^{\mathbf{1}_{\{\delta_j=0\}}}
\kappa_{j, -1}^{Bc_{j, 0}-1}
(1-\kappa_{j, -1})^{Bd_{j, 0}-1} \\
&\propto
\kappa_{j, -1}^{Bc_{j, 0}+\mathbf{1}_{\{\delta_j=-1\}}-1}
(1-\kappa_{j, -1})^{Bd_{j, 0}+\mathbf{1}_{\{\delta_j=0\}}-1}.
\end{align}
Therefore, 
\begin{align}
\kappa_{j, -1}\mid \bm{\delta}
\sim
\mathrm{Beta}
\left(
Bc_{j, 0}+\mathbf{1}_{\{\delta_j=-1\}}, 
Bd_{j, 0}+\mathbf{1}_{\{\delta_j=0\}}
\right).
\end{align}

If $\delta_{j-1}\in\{-1, 1\}$,  then the transition probability of $\delta_j$ does not involve $\kappa_{j, -1}$. Therefore,  the full conditional distribution of $\kappa_{j, -1}$ coincides with its prior distribution:
\begin{align}
\kappa_{j, -1}\mid \bm{\delta}
\sim
\mathrm{Beta}(Bc_{j, 0}, Bd_{j, 0}).
\end{align}

\end{document}